\newif\ifarxiv
\pgfplotsset{plot coordinates/math parser=false}
\newtheorem{mydef}{Definition}
\newtheorem{theorem}{Theorem}
\DeclareMathOperator*{\argmin}{arg\,min}
\newcommand{\wi}{\ensuremath{w_{i}^{*}}}
\newcommand{\trans}[2]{\ensuremath{p_{#1#2}}}
\newcommand{\transM}{\ensuremath{P}}
\newcommand{\myState}{\ensuremath{\Omega}}
\newcommand{\chainsetarg}[1]{\ensuremath{\mathcal{C}(#1)}}
\newcommand{\chainsett}{\ensuremath{\chainsetarg{t}}}
\newcommand{\xsetarg}[1]{\ensuremath{\mathcal{T}(#1)}}
\newcommand{\xsett}{\ensuremath{\xsetarg{t}}}
\newcommand{\zall}{\ensuremath{\mathbf{Z}(t)}}
\newcommand{\myrew}{\ensuremath{\rho}}
\newcommand{\Mstar}{\ensuremath{M^{*}}}
\newcommand{\rhosub}[1]{\ensuremath{\myrew_{#1}}}
\newcommand{\rhoj}{\ensuremath{\rhosub{j}}}
\newcommand{\rhok}{\ensuremath{\rhosub{k}}}
\newcommand{\rhoqi}{\ensuremath{\rhosub{Q_{i}}}}
\newcommand{\rhoqj}{\ensuremath{\rhosub{Q_{j}}}}
\newcommand{\rhoqk}{\ensuremath{\rhosub{Q_{k}}}}
\newcommand{\rhoc}{\ensuremath{\rhosub{E}}}
\newcommand{\rhoqstar}{\ensuremath{\rhosub{\Qstar}}}
\newcommand{\Qstar}{\ensuremath{Q^{*}}}
\newcommand{\rhocarg}[2]{\ensuremath{\rhoc(#1, #2)}}
\newcommand{\orhoe}{\ensuremath{\bar{\myrew}_{E}}}
\newcommand{\rholcarg}[2]{\ensuremath{\orhoe(#1, #2)}}
\newcommand{\orhoj}{\ensuremath{\bar{\myrew}_{j}}}
\newcommand{\orhok}{\ensuremath{\bar{\myrew}_{k}}}
\newcommand{\orhoqi}{\ensuremath{\bar{\myrew}_{Q_{i}}}}
\newcommand{\orhoqj}{\ensuremath{\bar{\myrew}_{Q_{j}}}}
\newcommand{\orhoqk}{\ensuremath{\bar{\myrew}_{Q_{k}}}}
\newcommand{\orhoqstar}{\ensuremath{\bar{\myrew}_{\Qstar}}}
\newcommand{\myRew}{\ensuremath{R}}
\newcommand{\oRhojl}{\ensuremath{\bar{\myRew}_{u, l}}}
\newcommand{\oRhoel}{\ensuremath{\bar{\myRew}_{E, l}}}
\newlength\figureheight 
\newlength\figurewidth 
\newtheorem{myLemma}{Lemma}
\newtheorem{fact}{Fact}
\newcommand{\Q}[1]{\ensuremath{Q_{\{#1\}}}}
\newcommand{\tQ}[1]{\ensuremath{\tilde{Q}_{\{#1\}}}}
\newcommand{\tQo}[1]{\ensuremath{\tilde{Q}_{#1}}}
\newcommand{\Qp}[3]{\ensuremath{Q_{#1}^{+}(#2,#3)}}
\newcommand{\myd}[2]{\ensuremath{\delta_{#1}^{#2}}}
\newcommand{\bT}[1]{\ensuremath{\bar{T}_{#1}}}
\newcommand{\bTs}[1]{\ensuremath{\bar{T}_{#1}^{*}}}
\newcommand{\dvec}{\ensuremath{(d_1, d_2, d_3)}}
\newcommand{\markovcor}{4}
\newcommand{\markovthm}{4}
\newcommand{\lemOPSatisfiesCond}{11}
\newcommand{\lemCondSatisfiesOP}{12}
\newcommand{\lemOPunique}{13}
\newcounter{cnt}
\newcounter{mymagicrownumbers}
\newcommand\myrownumber{\stepcounter{mymagicrownumbers}\arabic{mymagicrownumbers}.}
\newcommand{\Prob}{\ensuremath{\operatorname{Pr}}}
\begin{document}
%
% paper title
% can use linebreaks \\ within to get better formatting as desired
%\title{Lossy Broadcasting to Diverse Users: Rateless Codes and Hybrid Approaches}
%\title{Successive Segmentation-based Coding for Broadcasting over Erasure Channels}
\title{Three-terminal Erasure Source-Broadcast with Feedback}
%
% author names and IEEE memberships
% note positions of commas and nonbreaking spaces ( ~ ) LaTeX will not break
% a structure at a ~ so this keeps an author's name from being broken across
% two lines.
% use \thanks{} to gain access to the first footnote area
% a separate \thanks must be used for each paragraph as LaTeX2e's \thanks
% was not built to handle multiple paragraphs
%

%\author{Michael~Shell,~\IEEEmembership{Member,~IEEE,}
%        John~Doe,~\IEEEmembership{Fellow,~OSA,}
%        and~Jane~Doe,~\IEEEmembership{Life~Fellow,~IEEE}% <-this % stops a space

%\thanks{J. Doe and J. Doe are with Anonymous University.}% <-this % stops a space
%\thanks{TCOM version based on Michael Shell's bare{\textunderscore}jrnl.tex version 1.3.}}
%%%
\author{\IEEEauthorblockN{Louis Tan, Kaveh Mahdaviani and Ashish Khisti~\IEEEmembership{Member,~IEEE}}
%\thanks{L.~Tan and A.~Khisti are with the Dept. of Electrical and Computer Engineering, University of Toronto, Toronto, ON, Canada. Y.~Li is with the Dept.\ of Electrical Engineering, UCLA, Box 951594, Los Angeles, CA 90095. E.~Soljanin is with Bell Labs, Alcatel-Lucent, Murray Hill, NJ 07974, USA. Part of this work was presented at the 2013 Information Theory Workshop in Seville, Spain~\cite{TLKS_ITW13}, and at the 2014 International Symposium on Information Theory in Honolulu, Hawaii~\cite{LTKS_ISIT14}.}%
\thanks{L.~Tan, K.~Mahdaviani and A.~Khisti are with the Dept.\ of Electrical and Computer Engineering, University of Toronto, Toronto, ON, Canada (e-mail: louis.tan@mail.utoronto.ca, mahdaviani@cs.toronto.edu, akhisti@ece.utoronto.ca). 
%Y.~Li is with Akamai Technologies, Inc., 3355 Scott Boulevard, Santa Clara, CA 95054, USA. 
%E.~Soljanin is with the Dept.\ of Electrical and Computer Engineering, Rutgers University, Piscataway, NJ 08854, USA. 
%Part of this work was presented at the 2013 Information Theory Workshop in Seville, Spain~\cite{TLKS_ITW13}, and at the 2014 International Symposium on Information Theory in Honolulu, Hawaii~\cite{LTKS_ISIT14}.
Part of this work was presented at the 2015 International Symposium on Information Theory in Hong Kong~\cite{TMKS_ISIT15}.}%
}%
\maketitle

\begin{abstract}
%\boldmath
\iffalse

We study a successive segmentation-based coding scheme for broadcasting a binary source over a multi-receiver erasure broadcast channel. Each receiver has a certain demand on the fraction of source symbols to be reconstructed, and its channel is a memoryless erasure channel.  We study the minimum achievable latency at the source required to simultaneously meet all the receiver constraints. We consider a class of schemes that partition the source sequence into multiple segments and apply a systematic erasure code to each segment. We formulate the optimal choice of segment sizes and code-rates in this class of schemes as a linear programming problem and provide an explicit solution. We further show that the optimal solution can be interpreted as a successive segmentation scheme, which naturally adjusts when users are added or deleted from the system. 

Given the solution for the segment sizes, we then consider possible transmission orderings for \emph{individual} user decoding delay considerations.  We provide closed-form expressions for each individual user's excess latency when parity checks are successively transmitted in both increasing and decreasing order of the segment's coded rate.  Finally, we adapt the segmentation-based coding scheme for transmission across the product of two reversely degraded erasure broadcast channels and numerically show that significant gains are achievable compared to baseline separation-based coding schemes.
\fi
%

We study the effects of introducing a feedback channel in the erasure source-broadcast problem for the case of three receivers.  In our problem formulation, we wish to transmit a binary source to three users over the erasure broadcast channel when causal feedback is universally available to the transmitter and all three receivers.  Each receiver requires a certain fraction of the source sequence, and we are interested in the minimum latency, or transmission time, required to serve them all.  

%For broadcasting to three users, 
We propose a queue-based hybrid digital-analog coding scheme that achieves optimal performance for the duration of analog transmissions.  
We characterize the number of analog transmissions that can be sent as the solution to a linear program and furthermore give sufficient conditions for which optimal performance can be achieved by all users.  
%We show that the number of analog transmissions that can be sent is characterized by the solution of a linear program, and furthermore give sufficient conditions for which all users can be optimal.  
In some cases, we find that users can be point-to-point optimal regardless of their distortion constraints.  

When the analog transmissions are  insufficient in meeting user demands, we propose two subsequent coding schemes.  The first uses a queue preprocessing strategy for channel coding.  The second method is a novel \emph{chaining algorithm}, which involves the transmitter targeting point-to-point optimal performance for two users as if they were the only users in the network. Meanwhile, the third user simultaneously builds ``chains'' of symbols that he may at times be able to decode based on the channel conditions, and the subsequent reception of additional symbols.  %In the analysis of the chaining algorithm, we also solve the more general problem of finding the expected accumulated reward before absorption of a Markov rewards process with impulse rewards and absorbing states.

\end{abstract}
% IEEEtran.cls defaults to using nonbold math in the Abstract.
% This preserves the distinction between vectors and scalars. However,
% if the journal you are submitting to favors bold math in the abstract,
% then you can use LaTeX's standard command \boldmath at the very start
% of the abstract to achieve this. Many IEEE journals frown on math
% in the abstract anyway.

%\vspace{-2em}

% Note that keywords are not normally used for peerreview papers.
%\begin{IEEEkeywords}
%Rateless Codes, Packet Erasure Channels, Linear Programming, Unequal Error Protection, Multiple Description Coding, eMBMS
%\end{IEEEkeywords}
\begin{IEEEkeywords}
%Application-Layer Error Correction Coding, Broadcast Channels,  Joint Source-Channel Coding, Linear Programming, Multimedia broadcast/multicast services (MBMS), Rateless Codes, Unequal Error Protection.
\end{IEEEkeywords}

% For peer review papers, you can put extra information on the cover
% page as needed:
% \ifCLASSOPTIONpeerreview
% \begin{center} \bfseries EDICS Category: 3-BBND \end{center}
% \fi
%
% For peerreview papers, this IEEEtran command inserts a page break and
% creates the second title. It will be ignored for other modes.
\IEEEpeerreviewmaketitle

\section{Introduction\label{sec:intro_three_users}}

In this paper, we study the erasure source-broadcast problem with feedback for the case of three users in what is the sequel to~\cite{TMKS_TIT20}, which studied the analogous problem for the case of \emph{two} users.  In~\cite{TMKS_TIT20}, it was shown that the source-channel separation bound can always be achieved when a binary source is to be sent over the erasure broadcast channel with feedback to two users with individual erasure distortion constraints.  This is true whether a feedback channel is available for both receivers or when a feedback channel is available to only the stronger receiver.  The coding scheme relied heavily on the transmission of \emph{instantly-decodable, distortion-innovative} symbols~\cite{TMKS_TIT20,SorourValaee15,SorourValaee10}.  For the case of two receivers, each with a feedback channel, we found that transmitting such symbols was always possible.  

In this work, we again utilize uncoded transmissions that are instantly-decodable, and distortion-innovative.  The zero latency in decoding uncoded packets has benefits in areas in which packets are instantly useful at their destination such as applications in video streaming and disseminating commands to sensors and robots~\cite{SorourValaee15,SorourValaee10}.
However, for the case of three users, we find that the availability of sending such uncoded transmissions is not always possible.  Instead, it is random and based on the channel noise.  When the opportunity to send such packets is no longer possible, we design two alternative coding schemes.  The first is based on channel coding, while the second is a novel chaining algorithm that performs optimally for a wide range of channel conditions we have simulated.
%Consider the case for two users.  

Typically, in order to find opportunities to send instantly-decodable, distortion-innovative symbols, we set up queues that track which symbols are required by which group of users and subsequently find opportunities for network coding based on these queues.  In our analysis of our coding schemes, we propose two new techniques for analyzing queue-based opportunistic network coding algorithms. The first is in deriving a linear program to solve for the number of instantly-decodable, distortion-innovative symbols that can be sent. The second technique is in using a Markov rewards process with impulse rewards and absorbing states to analyze queue-based algorithms.

Previously, there have been two primary techniques in deriving a rate region for queue-based coding schemes, which we have found to be insufficient for our purposes.  The first technique was derived from a channel coding problem involving erasure channels with feedback and memory~\cite{HeindlmaierBidokhti16}.  In~\cite{HeindlmaierBidokhti16}, the authors use a queue stability criterion to set up a system of inequalities involving the achievable rates and flow variables that represent the number of packets that are transferred between queues.  When this technique is applied to the chaining algorithm of Section~\ref{subsubsec:chaining_algorithm} however, we find that our system of inequalities is over-specified.  That is, the large number of queues leaves us with more equations than unknown variables, and we are unable to solve for a rate region.  

The technique of using a Markov rewards process to analyze the rate region of a queue-based algorithm was also used in~\cite{GeorgiadisTassiulas_Netcod09}.  In their work, the transmitter and receivers were modelled with a Markov rewards process.  The authors solved for the \emph{steady-state} distribution and were able to derive a rate region by considering the number of rewards accumulated per timeslot and thus the number of timeslots it would take to send a group of packets.  In our analysis in Section~\ref{subsec:chaining_algorithm_analysis} however, we find that our transmitter and receiver must be modelled with an \emph{absorbing} Markov rewards process.  Thus, it is the \emph{transient} behaviour that is important and we cannot use a steady-state analysis.

In Section~\ref{subsec:chaining_algorithm_analysis}, we modify the analysis for that of an \emph{absorbing} Markov rewards process.  Specifically, we solve for the expected accumulated reward before absorption of a Markov rewards process with impulse rewards and absorbing states. The analysis does not involve the optimization of any parameters.  Rather, we use the analysis of the absorbing Markov rewards process to state sufficient conditions for optimality.  We illustrate the operational significance of this analysis in Section~\ref{sec:operational_meaning} where we show how Theorem~\ref{thm:chaining_sufficient} can be used to delineate regions of optimality.  Finally, in Section~\ref{subsec:instantly_decodable}, we also show how to formulate a linear program to derive a rate region by solving for the number of instantly-decodable, distortion innovative transmissions that can be sent.

\section{{System Model}}
%\subsection{{}System Model}
%\section{Problem Formulation}
\label{sec:system_model_three_users}

%We first define the problem in Section~\ref{subsec:problem_definition}.  In Section~\ref{subsec:metrics}, we then discuss metrics that are of interest in the evaluation of broadcast codes and show how a solution to the problem in Section~\ref{subsec:problem_definition} can be used to address these metrics.

%\input{2_problem_formulation/problem_definition}
%\input{2_problem_formulation/metrics}

%%%\begin{figure*}
%%\begin{figure}
%%	\centering
%%%	\input{3/out/fig/schematic}
%%%	\includegraphics[scale=0.6]{3/out/fig/schematica}
%%%	\includegraphics[scale=0.75]{out/fig/schematic}
%%%	\includegraphics[scale=0.8]{2_problem_formulation/fig/schematicE}
%%	\includegraphics{2_problem_formulation/fig/schematicE}
%%%	\includegraphics[width=0.6\textwidth]{outer_bound.png}
%%	\caption{Broadcasting an equiprobable binary source over an erasure broadcast channel.}
%%	\label{fig:schematic}
%%%\end{figure*}
%%\end{figure}

%We begin by presenting our joint source-channel coding formulation which is illustrated in Fig.~\ref{fig:schematic}.

%\begin{figure*}
%%%%%\begin{figure}
%%%%%	\centering
%%%%%%	\input{fig/schematicE}
%%%%%%	\includegraphics[scale=0.95]{fig/schematic}
%%%%%	\includegraphics[scale=0.75]{fig/schematic}
%%%%%%	\includegraphics[scale=0.75]{SysModel/fig/schematic_arrow}
%%%%%	\caption{Broadcasting an equiprobable binary source over an erasure broadcast channel.}
%%%%%	\label{fig:schematic}
%%%%%%\end{figure*}
%%%%%\end{figure}

%The problem is illustrated in Fig.~\ref{fig:schematic}.  
We study the problem of communicating a binary memoryless source $\{S(t)\}_{t=1,2, \ldots}$ to three users over an erasure broadcast channel with causal feedback that is universally available to both the transmitter and all receivers.  %Let $\mathcal{S}^{N}$ be the set of all $N$-vectors with components in $\mathcal{S}$, and denote $(S(1), S(2), \ldots , S(N))$ as $S^{N}$.  For convenience, we will also denote the set $\{1,2,\dots,N\}$ as $[N]$.
%
%\begin{figure}
%	\centering
%%	\input{3/fig/schematic}
%	\includegraphics[scale=0.8]{fig/schematicE}
%%%	\includegraphics[width=0.6\textwidth]{outer_bound.png}
%	\caption{Broadcasting an equiprobable binary source over an erasure broadcast channel with individual bandwidth mismatches.}
%	\label{fig:schematicE}
%\end{figure}
%
The source produces equiprobable symbols in $\mathcal{S}=\{0,1\}$ and is communicated by an encoding function that produces the channel input sequence $X^{W} = (X(1),  \dots , X(W))$, where $X(t)$ denotes the $t^{\mathrm{th}}$ channel input taken from the alphabet $\mathcal{X} = \{0, 1\}$.  We assume that 
$X(t)$ is a function of the source as well as the channel outputs of all users prior to time $t$.  
%The encoding function produces the channel input sequence $X^{W} = (X(1),  \dots , X(W))$, where $X(t)$ denotes the $t^{\mathrm{th}}$ channel input taken from the alphabet $\mathcal{X} = \{0, 1\}$.  %Note here that our notation defines $X^{n}$ as the first $n$ channel symbols sent. %which is also taken from the alphabet $\mathcal{S}$.  %

% Original
%Let $Y_{i}(t)$ be the channel output observed by user $i$ on the $t^{\mathrm{th}}$ channel use for $i \in [n]$ and $t \in [W]$.  Our channel model is an erasure channel.  In particular, let $\epsilon_i$ denote the erasure rate of the channel corresponding to user~$i$.  Without loss of generality, we will assume that $0 < \epsilon_{1} < \epsilon_{2} < \ldots < \epsilon_{n} < 1$.  Then $Y_{i}(t)$ exactly reproduces the channel input $X(t)$ with probability $(1 - \epsilon_i)$ and otherwise indicates an erasure event, which happens with probability $\epsilon_i$.  We let $Y_{i}(t)$ take on values in the alphabet $\mathcal{Y} = \{0, 1, \star\}$ so that an erasure event is represented by `$\star$,' the erasure symbol.  We associate user~$i$ with the state sequence $(N_i(t))_{t \in [W]}$, which is defined such that $N_{i}(t) = 1$ if $Y_i(t)$ was erased and $N_i(t) = 0$ otherwise.  The channel we consider is memoryless in the sense that $N_{i}(t)$ is drawn i.i.d.\ from a $\operatorname{Bern}(1 - \epsilon_{i})$ distribution.

Let $Y_{i}(t)$ be the channel output observed by user $i$ on the $t^{\mathrm{th}}$ channel use for $i \in \{1, 2, 3\}$. %$\mathcal{U}= \{1, 2, 3\}$ and $t \in [W] = \{1, \ldots, W\}$.  %Our channel model is a binary erasure broadcast channel. % as shown in Fig.~\ref{fig:schematic}.  
%In particular, let $\epsilon_i$ denote the erasure rate of the channel corresponding to user~$i$, where we assume that $0 < \epsilon_{1} < \epsilon_{2} < \ldots < \epsilon_{n} < 1$.  This is without loss of generality since we can address {{}all} users that experience identical erasure rates by serving the {{}one} with the most stringent distortion requirement.  
%In particular, if $\epsilon_i$ denotes the erasure rate of the channel corresponding to user~$i$, our model specifies that $Y_{i}(t)$ exactly reproduces the channel input $X(t)$ with probability $(1 - \epsilon_i)$ and otherwise indicates an erasure event, which happens with probability $\epsilon_i$.  
We let $Y_{i}(t)$ take on values in the alphabet $\mathcal{Y} = \{0, 1, \star\}$ so that an erasure event is represented by `$\star$'.   
For $W \in \mathbb{N}$,  let $[W]$ denote the set $\{1, 2, \ldots, W\}$. We associate user~$i$ with the state sequence $\{Z_i(t)\}_{t \in [W]}$, which represents the noise on user~$i$'s channel, where $Z_i(t) \in \mathcal{Z} \triangleq \{0,1\}$,  and  $Y_i(t)$ will be erased if $Z_{i}(t) = 1$ and $Y_{i}(t) = X(t)$ if $Z_i(t) = 0$.  The channel we consider is memoryless in the sense that $Z_{i}(t)$ is drawn i.i.d.\ from a $\operatorname{Bern}(\epsilon_{i})$ distribution, where
$\epsilon_i$ denotes the erasure rate of the channel corresponding to user~$i$.

The problem we consider involves causal feedback  that is universally available.  That is, at time $T$, we assume that $\{Z_1(t), Z_2(t), Z_3(t)\}_{t=1, 2, \ldots, T-1}$ is available to the transmitter and all receivers. 
After $W$ channel uses, user $i$ utilizes the feedback and his own channel output to reconstruct the source as a length-$N$ sequence, denoted as $\hat{S}_{i}^{N}$.  We will be interested in a fractional recovery requirement so that each symbol in $\hat{S}_{i}^{N}$ either faithfully recovers the corresponding symbol in $S^{N}$, or otherwise a failure is indicated with an erasure symbol, i.e., we do not allow for any bit flips.

More precisely, we choose the reconstruction alphabet $\mathcal{\hat{S}}$ to be an augmented version of the source alphabet so that $\mathcal{\hat{S}} = \{0, 1, \star\}$, where the additional `$\star$' symbol indicates an erasure symbol.  Let $\mathcal{D} = [0,1]$ and $d_i \in \mathcal{D}$ be the distortion user $i$ requires.  We then express the constraint that an achievable code ensures that each user $i \in \{1, 2, 3\}$ achieves a fractional recovery of $1 - d_{i}$ with the following definition.

\begin{mydef}
\label{def:code_two_users_feedback}
	An $(N, W, d_{1}, d_{2}, d_{3})$ code for source $S$ on the erasure broadcast channel with \emph{universal} feedback consists of	
	\begin{enumerate}
		\item a sequence of encoding functions $f_{t, N} : \mathcal{S}^{N} \times  \prod_{j = 1}^{3} \mathcal{Z}^{t-1} \to \mathcal{X}$ for $t \in [W]$, such that $X(t) = f_{t, N}(S^{N}, Z_1^{t -1}, Z_2^{t -1}, Z_3^{t -1})$, and
		
		\item three decoding functions $g_{i,N} : \mathcal{Y}^{W} \times \mathcal{Z}^{3W} \to \mathcal{\hat{S}}^{N}$ such that for $i \in \{1, 2, 3\}$, $\hat{S}_{i}^{N} = g_{i,N}(Y_{i}^{W}, Z_1^{W}, Z_2^{W}, Z_3^{W})$, and
		\begin{enumerate}
			\item $\hat{S}_{i}^{N}$ is such that for $t \in [N]$, if $\hat{S}_{i}(t) \neq S(t)$, then $\hat{S}_{i}(t) = \star$,
%			\item $\mathbb{E}   \left\vert{D_{i}(\hat{S}_{i}^{N})}\right\vert \leq d_{i}$
			\item $\mathbb{E}   \left\vert{\{t \in [N] \mid \hat{S}_{i}(t) = \star\}}\right\vert \leq N d_{i}$.
		\end{enumerate}
		 	
%	$\mathbb{E}d(S^{k}, g_{i}( X^{n} \cdot N_{i}^{n})) \leq D_{i}$ holds for $i \in \{1,2\}$		
	\end{enumerate}
	
%	where $\mathbb{E}(\cdot)$ is the expectation operation and $\vert A \vert$ denotes the cardinality of set $A$.
\end{mydef}

We again mention that in our problem formulation, we assume that all receivers have causal knowledge of $(Z_{1}^{t-1}, Z_{2}^{t-1}, Z_{3}^{t-1})$ at time $t$.  That is, each receiver has causal knowledge of which packets were received at each destination.  This can be made possible, for example, through the control plane of a network.

We define the {\it latency} that a given code requires before all users can recover their desired fraction of the source as follows.
\begin{mydef}
\label{def:latency_two_users}
	The latency, $w$, of an~$(N, W, d_{1}, d_{2}, d_{3})$ code is the number of channel uses per source symbol that the code requires to meet all distortion demands, i.e., $w = W/N$.
\end{mydef}
Our goal is to characterize the achievable latencies under a prescribed distortion vector,
as per the following definition.
\begin{mydef}
\label{def:achievable_general_feedback}
	Latency $w$ is said to be $(d_{1}, d_{2}, d_{3})$-achievable over the erasure broadcast channel if for every $\delta > 0$, there exists for sufficiently large $N$, an $(N, wN, \hat{d}_{1}, \hat{d}_{2}, \hat{d}_{3})$ code such that for all $i \in \{1, 2, 3\}$, $d_{i}+\delta \geq \hat{d}_{i}$.
	
%	\begin{equation}
%		D_{i}+\delta \geq d_{i},  \quad i \in \{1, 2\}.
%	\end{equation}	
%	Alternatively, we may occasionally say that the tuple $(b, D_{1}, D_{2})$ is achievable if the latency $b$ is $(D_{1}, D_{2})$-achievable.
	
%The achievable latency region is the set of all achievable latencies under the prescribed distortion vector.

\end{mydef}

\subsection{Organization of Paper}

The remainder of this paper is organized as follows.  In Section~\ref{sec:three_users}, we present an achievable coding scheme for the problem we have just defined.  In Section~\ref{sec:analysis}, we analyze the coding scheme and give sufficient conditions for optimality.  %Some of the analysis will depend on the analysis of Markov chains that we go over in Section~\ref{sec:markov}, where we derive the expected accumulated reward for a discrete-time Markov impulse reward model with absorbing states.  
Finally, we illustrate the performance of the coding schemes via simulations in Section~\ref{sec:simulations}. % and conclude the paper in Section~\ref{sec:conclusion}.

\section{Coding for Three Users}
\label{sec:three_users}

%We next propose a coding scheme for the case of three users.  
Our proposed coding scheme consists of two distinct parts and is described within Sections~\ref{subsec:instantly_decodable} and \ref{subsec:non_instant_coding} respectively. The first part involves only transmissions that are instantly decodable and distortion-innovative for all users, while the second part is executed if we are no longer able to send additional packets of this sort.  In~\cite{TMKS_TIT20}, the benefits of transmitting instantly decodable, distortion-innovative packets was explored.  In summary, a channel symbol is said to be instantly-decodable, distortion-innovative if any successfully received channel symbol can be immediately used to reconstruct a single source symbol that was hitherto unknown~\cite{TMKS_TIT20}.  In addition to minimizing decoding delays, incorporating such symbols is advantageous since if each user $i$ receives only such symbols for $wN$ transmissions, then they will all simultaneously be able to achieve their optimal distortion $D^{*}(\epsilon_i, w) \triangleq 1 - w(1 - \epsilon_i)$, which is obtained from the source-channel separation theorem.  This is because precisely $wN(1 - \epsilon_i)$ channel symbols are received after $wN$ transmissions have been made, and so if each transmission involved an instantly-decodable, distortion-innovative symbol, then $D^{*}(\epsilon_i, w)$ can trivially be achieved.

When we are no longer able to continue sending instantly-decodable, distortion-innovative symbols however, we describe two alternative approaches for the second part of our coding sheme.  In Section~\ref{subsec:channel_coding}, we propose a channel coding scheme that optimizes over the common and private messages we can channel-code to the group of users in order for their distortion constraints to be met.  Alternatively, in Section~\ref{subsubsec:chaining_algorithm}, we describe a \emph{chaining algorithm} that can be used instead whereby the transmitter targets point-to-point optimal performance for two users as if they were the only users in the network. Meanwhile, the third user simultaneously builds ``chains'' of symbols that he may at times be able to decode based on the channel conditions, and the subsequent reception of additional symbols.  

%the encoder serves only two users in the network, however, the third user can potentially decode some of the messages.

\subsection{Instantly Decodable, Distortion-Innovative Transmissions}
\label{subsec:instantly_decodable}

%The first part of the coding scheme is shown in Algorithm~\ref{alg:three_users} of Appendix~\ref{app:pseudocode}. 
The first part of the code involves
%There is again 
an initial phase that transmits each source symbol until at least one user receives it.  No further processing of source symbol $S(t)$ is done if it was received by all users.  Let $\mathcal{E}(T) \subset \mathcal{U} \triangleq \{1, 2, 3\}$ represent the set of users whose channel output was an erasure at time $T$.  We will at times drop the time index $T$ when it is obvious from the context.  Then after the transmission at time $T$, we place $S(t)$ in queue $Q_{\mathcal{E}(T)}$ so that, in general, $Q_{U}$ maintains a record of which symbols were not received by all users $i \in U$.

The algorithm contains a subroutine that acts as an \emph{event handler}.  In the event that any user~$i$'s distortion constraint is met during or after the uncoded transmissions, the algorithm's control flow is passed to this subroutine.  Within this subroutine, we simply use the two-user algorithm outlined in~\cite{TMKS_TIT20} to serve the remaining users.  In~\cite{TMKS_TIT20}, it was shown that for the two-receiver version of the problem we consider, the optimal latency can be achieved regardless of either user's distortion constraint.  The coding scheme maintained queues of symbols $Q_{\{1\}}$, $Q_{\{2\}}$, and $Q_{\{1, 2\}}$ where the symbols in $Q_U$ are instantly-decodable, distortion-innovative for all users $i \in U$.  To incorporate~\cite{TMKS_TIT20} into our current coding scheme in the event that user~$i$ has met their distortion constraint, we first discard $Q_i$ since users $j$ and $k$ have already received all symbols from this queue for $j, k \in \mathcal{U} \setminus\{i\}, j\neq k$.  We then merge queues $Q_j$ and $Q_{\{i, j\}}$ and queues $Q_k$ and $Q_{\{i, k\}}$, since user~$i$ is no longer relevant, and symbols in $Q_{\{i, j\}}$ can be regarded as symbols that only user~$j$ needs at this point.  Finally, since $Q_{\{j,k\}}$ contains source symbols that neither user has received so far, we can  treat them as source symbols that have yet to be sent uncoded in the algorithm of~\cite{TMKS_TIT20}.

If there are no users with distortion constraints met however, the algorithm continues by recognizing network coding opportunities and sending linear combinations in a manner that is similar to that in~\cite{TMKS_TIT20}.  First, consider $Q_1$ and $Q_{\{2, 3\}}$.  If they are both non-empty, let $q_1 \in Q_1$ and $q_{2,3} \in \Q{2,3}$.  Notice that if we transmit $q_1 \oplus q_{2,3}$, since user~1 has access to $q_{2,3}$, and user~2 and user~3 have access to $q_1$, every user is able to receive an instantly decodable, distortion-innovative symbol.  A symbol is removed from a queue if any user for whom the queue is intended receives the linear combination.  However, if we have a type of situation where only user~2 receives the linear combination but not user~3, then $q_{2, 3}$ is transferred from \Q{2,3} to $Q_3$, since only user~3 is in need of this symbol now.  We continue this procedure with queues $Q_2$ and \Q{1,3} and queues $Q_3$ and \Q{1,2} to the extent that it is possible, i.e., for as long as the relevant queues required remain non-empty. 
Upon the emptying of these queues, we determine if we can send linear combinations of the form $q_1 \oplus q_2 \oplus q_3$,  which are similarly instantly decodable and distortion-innovative.  If this is possible, we again continue to do so until no longer possible.

We mention that in general, for $n$ users, we can similarly maintain queues that manage which symbols are erased by which users.  In this case, we can send an instantly decodable, distortion-innovative symbol if there exists non-empty queues $Q_{\Gamma_1}, Q_{\Gamma_2}, \ldots, Q_{\Gamma_m}$ such that $\cup_{i=1}^{m}\Gamma_i = [n]$ and $\sum_{i=1}^{m}|\Gamma_i| = n$, where $[n]$ denotes the set $\{1, 2, \ldots, n\}$.  These two conditions ensure that the index~$j$ for user~$j$ appears in exactly one $\Gamma_l, l \in [m]$.  Thus, user~$j$ is in possession of all symbols in these queues except for $Q_{\Gamma_l}$.
We then send the linear combination $\sum_{i = 1}^{m} q_{\Gamma_i}$, where $q_{\Gamma_i} \in Q_{\Gamma_i}$. We note however, that this requires an amount of queues that is exponential in the number of users.

\subsection{Non-Instantly-Decodable, Distortion-Innovative Coding}
\label{subsec:non_instant_coding}

When we have exhausted the queues that allow us to transmit instantly decodable, distortion-innovative symbols, we are left with two possibilities for the types of queues remaining.  By assumption, we have that for all $i \in \mathcal{U}$, either $Q_i$ or $Q_{\mathcal{U} \setminus \{i\}}$ is empty, and there exists an $l \in \mathcal{U}$ such that $Q_l$ is empty.  Thus, after the stopping condition of the algorithm in Section~\ref{subsec:instantly_decodable} has been reached, we are either left with queues \Q{1,2}, \Q{1,3} and \Q{2,3}, or queues $Q_i$, \Q{i,j} and \Q{i,k} (we will often use the indices $i, j$ and $k$ to refer to unique elements in $\mathcal{U}$).   We assume that the latter is the case and propose two methods to address this.  Analogous methods (omitted for brevity) can be used for the former case.

The first method involves preprocessing the remaining queues before using a channel coding scheme to satisfy the users' remaining demands.    The second method is a ``chaining algorithm,'' which involves the transmitter using the algorithm in~\cite{TMKS_TIT20} to send instantly-decodable, distortion-innovative transmissions to two users as if they were the only users in the network.  Meanwhile, the third user simultaneously builds ``chains'' of symbols that he may at times be able to decode based on the channel conditions, or with the subsequent reception of additional symbols.  We begin by describing the queue preprocessing method for channel coding.

\subsubsection{Queue Preprocessing for Channel Coding}
\label{subsec:channel_coding}

The channel coding scheme we employ is based on~\cite{GGT}, which outlines a method of \emph{losslessly} communicating a set of messages to $n$ users when feedback is available at the transmitter.  As input, it takes a set of queues $\{Q_U \mid U \subseteq \mathcal{U}\}$ where $U$ is a subset of users each of whom must losslessly reconstruct \emph{all} symbols in $Q_U$.

We preprocess queues $Q_i$, \Q{i, j} and \Q{i,k} to determine queues \tQo{i}, \tQo{j}, \tQo{k}, \tQ{i,j}, \tQ{i,k} that will be passed as input to the channel coding algorithm.  Let $\myd{R}{T}$ denote the number of symbols taken from queue $Q_R$ and placed in \tQo{T}.  Thus, we have, for example, that $| \tQo{i}| = \myd{i}{i} + \myd{i,j}{i} + \myd{i,k}{i}$.  Given that user $i$ has received $r_i$ symbols so far, we determine all $\delta$'s by solving the linear program in~\eqref{eq:delta_min}.
\begin{figure}
\begin{equation}
\label{eq:delta_min}
%\begingroup
%\allowdisplaybreaks
\begin{aligned}
%	& \underset{\pmb{\delta}}{\text{min}}
%	&& \frac{\myd{i}{i} +  \myd{i,j}{i} + \myd{i,k}{i}}{1 - \epsilon_i} + \frac{\myd{i,k}{k}}{1 - \epsilon_{k}} +  \frac{\myd{i,j}{j}}{1 - \epsilon_{j}} \\
%	&&& \qquad + \frac{\myd{i,k}{i,k}}{1 - \epsilon_{\max(i, k)}} + \frac{\myd{i,j}{i,j}}{1 - \epsilon_{\max(i, j)}} 
	& \underset{\pmb{\delta}}{\text{min}}
	&& \frac{\myd{i}{i} +  \myd{i,j}{i} + \myd{i,k}{i}}{1 - \epsilon_i} + \frac{\myd{i,k}{k}}{1 - \epsilon_{k}} +  \frac{\myd{i,j}{j}}{1 - \epsilon_{j}} \\
	&&& \qquad + \frac{\myd{i,k}{i,k}}{1 - \max(\epsilon_i, \epsilon_k)} + \frac{\myd{i,j}{i,j}}{1 - \max(\epsilon_i, \epsilon_j)}
\end{aligned}
%\endgroup
\end{equation}
\[
%\begingroup
%\allowdisplaybreaks
\begin{aligned}
	& \text{subject to}
%	& & 0 \leq \kappa \leq 1/2 \\
%	& & 1 - H(D_{1}) \leq \theta \leq 1
	&& \pmb{\delta} \succeq 0 \\
	&&&  \myd{i}{i}  \leq | Q_i |, \\
	&&&  \myd{i,k}{i} +  \myd{i,k}{k} + \myd{i,k}{i,k} \leq | \Q{i, k} |, \\
	&&&  \myd{i,j}{i} +  \myd{i,j}{j} + \myd{i,j}{i,j} \leq | \Q{i, j} |, \\
	&&& \myd{i}{i} + \myd{i,j}{i} + \myd{i,k}{i} + \myd{i, j}{i, j} + \myd{i,k}{i,k} \geq 1 - d_i - r_i, \\
	&&&  \myd{i,j}{j} + \myd{i,j}{i,j}  \geq 1 - d_j - r_j, \\
	&&&  \myd{i,k}{k} + \myd{i,k}{i,k}  \geq 1 - d_k - r_k,
%	&&&\hspace*{2.2cm}\textnormal{for }i \in \{1,2\}, \\% \label{eq: constraint_n} \\
%	&&& \hat{a} \left( (1 - \epsilon_{i}) \sum_{j = 0}^{i - 1} x_{j} + \sum_{j = i}^{n} x_{j} \right) \geq 1 - d_{i} \\
%	&&& \qquad \textnormal{for } i = 1, 2, \ldots, n
\end{aligned}
%\endgroup
\]
\end{figure}
%
%where each term in the objective function represents the minimum time required to process its corresponding queue.
Here, each term in the objective function represents the minimum latency required to process its corresponding queue, i.e., the first term represents the latency required to process $Q_i$, the second is for $Q_k$, etc.  Furthermore, we have that user $i$ is able to decode  \myd{R}{T} symbols if $i \in T$, and so given that this user has already received~$r_i$ symbols, the distortion constraints in~\eqref{eq:delta_min} follow.  

After solving~\eqref{eq:delta_min}, we run the channel coding algorithm on the newly determined queues, after which, we have that the total transmission time is the sum of the time required for this algorithm as well as the one from Section~\ref{subsec:instantly_decodable}.  Finally, we mention that a generalization of this section's approach for $n$ users is an ongoing work.

\subsubsection{A Chaining Algorithm}
\label{subsubsec:chaining_algorithm}

In this section, we propose an alternative to using queue preprocessing for channel coding as described in the previous section.  Recall that we invoke this algorithm when we have exhausted all opportunities to send instantly-decodable, distortion-innovative symbols which happens when we are left with queues $\Q{1,2}, \Q{1,3}$ and $\Q{2,3}$, or queues $Q_i$, \Q{i,j} and \Q{i,k}.  We again assume that the latter is the case.

We first consider if we were to ignore user~$i$ and simply use the coding scheme of~\cite{TMKS_TIT20} to target point-to-point optimal performance for the remaining users.  Specifically, since queues $\Q{i,j}$ and $\Q{i,k}$ remain, we can send linear combinations of the form $q_{i,j} \oplus q_{i, k}$, where $q_{i, j} \in \Q{i, j}$ and $q_{i, k} \in \Q{i, k}$.  Recall that user~$j$ has received all symbols in $\Q{i,k}$, and user~$k$ has similarly received all symbols in $\Q{i,j}$.  In each transmission, each user can therefore subtract off one of the symbols from the linear combination so that only one symbol, the symbol of their interest, is remaining.  The transmitter can use the feedback of both users to replace $q_{i, j}$ or $q_{i, k}$ whenever it is received by user~$j$ or $k$ respectively, thus making this strategy simultaneously optimal for both user~$j$ and user~$k$ as mentioned in~\cite{TMKS_TIT20}.

Now, consider what user~$i$ can do in the midst of being ignored.  Specifically, consider if the following transmissions and erasure patterns were to occur as in Table~\ref{tab:hypothetical_chaining}.  In the first row of the table, we assume that at time $t = 1$, the linear combination $q_{i, j} \oplus q_{i, k}$ is sent.  We also assume that at this time, user~$i$ and user~$j$ received this transmission while user~$k$ did not, which is indicated from the fact that the channel noises take on values $(Z_{i}(t), Z_{j}(t), Z_{k}(t)) = (0, 0, 1)$ in the first row (see Section~\ref{sec:system_model_three_users}).  Since user~$j$ received the transmission, we replace $q_{i, j}$ with $\hat{q}_{i, j} \in \Q{i, j}$ and send the new linear combination $\hat{q}_{i, j} \oplus q_{i, k}$ at time $t = 2$.  At this time, only user~$i$, the user being ignored, receives the transmission.  Thus, since we are ignoring user~$i$, there is no need to replace any of the symbols, and instead we should retransmit the linear combination at time $t = 3$.  Notice however, if instead of retransmitting $\hat{q}_{i, j} \oplus q_{i, k}$ at time $t = 3$, we instead send $2\hat{q}_{i, j} \oplus q_{i, k}$, which is a \emph{different} linear combination of the same two symbols sent at $t = 2$.  This comes at no cost to user~$j$ nor user~$k$, since they would be able to decode regardless of which linear combination is sent. Let us further assume that at time $t = 3$, we again have that only user~$i$ has received the transmission.  Then since user~$i$ has received all transmissions so far, he will have received three independent equations in three unknown variables.  Thus he is able to decode all symbols sent so far despite the fact that the transmitter is targeting point-to-point optimal performance for the \emph{other} two users.  

\begin{table}
	\begin{center}
		\begin{tabular}{l c c c c}
			$t$              & $X(t)$ & $Z_{i}(t)$& $Z_{j}(t)$ & $Z_{k}(t))$  \\
			\hline
			1 & $q_{i, j} \oplus q_{i, k}$ & 0 & 0 & 1 \\
			2 & $\hat{q}_{i, j} \oplus q_{i, k}$ & 0 & 1 & 1 \\
			3 & $2\hat{q}_{i, j} \oplus q_{i, k}$ & 0 & 1 & 1 \\
		\end{tabular}
	\end{center}
	\caption{A hypothetical sequence of transmissions and channel noises when targeting point-to-point optimal performance for user~$j$ and user~$k$.}	
	\label{tab:hypothetical_chaining}
\end{table}

%Each transmission sent is a linear combination of two symbols, which may or may not be.  

%We see that in the chaining algorithm, each transmission sent is a linear combination of two symbols 
We see that the chaining algorithm ensures that each linear combination received by user~$i$ has at least one symbol in common with at least one other linear combination that was previously received.  For example, in Table~\ref{tab:hypothetical_chaining}, the first two transmissions have $q_{i, k}$ in common and the last two transmissions have two symbols in common.  By maintaining this property, we ensure that at any time, user~$i$ has to decode no more than one symbol in order to be able to back-substitute and solve for the remaining symbols received so far.  

We visualize this with the analogy of a chain of links.  Each linear combination received is a link in a chain, and links are coupled by the fact that they have at least one symbol in common.  If one source symbol in the chain is revealed to user~$i$, it can be used to solve for all symbols in one of the links.  In turn, that link has a symbol in common with a second link, and so all symbols in the second link can also be solved for.  This process continues until the entire chain of symbols can be decoded.  

%received by user~$i$ is interpreted to be a single link in a chain of links.  

%We see from the previous example that user~$i$ was able to decode by records ``chains'' of symbols sent.  

%Each transmission received by user~$i$ forms a link in a chain of symbols.  Each transmission received by user~$i$ is a linear combination of two symbols, and each linear combination has at least one symbol in common with at least one other transmission received.  For example, in Table~\ref{tab:hypothetical_chaining}, there are three links or transmissions received.  The first two transmissions have the symbol $q_{i, k}$ in common as do the last two transmissions.  

We model the chaining algorithm as a Markov rewards process with absorbing states.  The process starts in an initial state, after which, transitions between states occur at every timeslot.  The determination of which transition was actually taken depends only on the channel noise.  The process runs until one of two absorbing states is reached, after which, we are either able to decode an entire chain of symbols, or we have finished creating a chain of symbols that can be decoded with the reception of one more linear combination of symbols.  In the example of Table~\ref{tab:hypothetical_chaining}, we would reach the absorbing state in which we are not able to decode the chain of symbols if at $t = 3$, we instead had that only users~$j$ and~$k$ received $X(3)$.  In this case, we would have to replace both symbols being sent at $t = 4$ in order to be point-to-point optimal for users~$j$ and~$k$.  Thus, user~$i$ would not have enough equations to decode, and we would instead reset the Markov process such that user~$i$ would start building another chain when new symbols are sent.

\begin{table}
	\begin{center}
		\begin{tabular}{l | l | p{4.5cm} c }
			$State$              & $X(t)$ -- Transmission at time $t$ & Precondition at $t$  \\
			\hline
			1 & $q_{i, j} \oplus q_{i, k}$ (Initial state) & $|\chainsetarg{t-1} \cap \xsett| = 0$\\
			&& State~1 occupied at time $t-1$\\ 
			2 & $\hat{q}_{i, j} \oplus q_{i, k}$ & $|\chainsetarg{t-1} \cap \xsett| = 1$ \\ 
			3 & $q_{i, j} \oplus \hat{q}_{i, k}$ & $|\chainsetarg{t-1} \cap \xsett| = 1$ \\ 
			4 & $2q_{i, j} \oplus q_{i, k} $  & $|\chainsetarg{t-1} \cap \xsett| = 2$ \\ 	
			5 & None (Non-decoding absorbing state)  & $|\chainsetarg{t-1} \cap \xsett| = 0$\\ 
			6 & None (Decoding absorbing state)  & State~4 occupied at time $t-1$\\ 
		\end{tabular}
	\end{center}
	\caption{A description of states in the Markov rewards process.  Assume that $q_{i, j} \oplus q_{i, k}$ was sent at time $t-1$.  If the state indicated in the first column is entered at time $t$, then the transmission indicated by the second column is sent at time $t$.  We also show the precondition necessary to enter the state in the third column.}	
	\label{tab:table_of_states}
\end{table}

In addition to the two absorbing states, the Markov rewards process involves four additional \emph{transient} states.  We enumerate and describe all states in Table~\ref{tab:table_of_states}.  In this table, we assume that $q_{i, j} \oplus q_{i, k}$ was sent at time $t-1$.  Then if state $l \in \Omega \triangleq \{1, 2, \ldots, 6\}$ is entered at time $t$, we show $l$ in the first column and $X(t)$, the transmission at time $t$, in the second column.  We also show a necessary precondition for entering state $l$ at time $t$ in the third column.

Before we explain the preconditions, let us first define some notation.  Let $\chainsett$ represent the set of symbols appearing in the linear combinations received by user~$i$ after having listened to all transmissions from the beginning of the chaining algorithm up to and including time $t$.  In the example of Table~\ref{tab:hypothetical_chaining}, after time $t = 1, 2, 3$, we have $\chainsetarg{1} = \{q_{i, j}, q_{i, k}\}$, $\chainsetarg{2} = \{q_{i, j}, q_{i, k}, \hat{q}_{i, j}\}$ and $\chainsetarg{3} = \{q_{i, j}, q_{i, k}, \hat{q}_{i, j}\}$.  Let $\xsett$ represent the set of symbols appearing in the linear combination in $X(t)$.  In the example of Table~\ref{tab:hypothetical_chaining}, we have that $\xsetarg{1} = \{q_{i, j}, q_{i, k}\}$, $\xsetarg{2} = \{\hat{q}_{i, j}, q_{i, k}\}$, and $\xsetarg{3} = \{\hat{q}_{i, j}, q_{i, k}\}$. 
With this notation, we now describe the preconditions of Table~\ref{tab:table_of_states}.

%We now formalize the intuition gained with the previous example.   Let $\chainsett$ represent the set of symbols appearing in the linear combinations received by user~$i$ having listened to all transmissions from the beginning of the chaining algorithm up until time $t$.  For example, in the example of Table~\ref{tab:hypothetical_chaining}, after time $t = 1, 2, 3$, we have $\chainsetarg{1} = \{q_{i, j}, q_{i, k}\}$, $\chainsetarg{2} = \{q_{i, j}, q_{i, k}, \hat{q}_{i, j}\}$ and $\chainsetarg{3} = \{q_{i, j}, q_{i, k}, \hat{q}_{i, j}\}$ respectively.  We 

The first state is the initial state that sends the initial linear combination of symbols.  After the first outbound transition from the initial state, the initial state is not returned to at any time during the evolution of the Markov reward process unless an absorbing state is reached and we restart the chain-building process.  Since user~$i$ has not yet received any equations and we begin by transmitting a linear combination of two symbols, we have the precondition that $\chainsetarg{t-1} = \varnothing$, $\xsett = \{q_{i, j}, q_{i, k}\}$ and $|\chainsetarg{t-1} \cap \xsett| = 0$.

The second state is entered only if the symbol destined for user~$j$ needs to be replaced in the next linear combination to be sent and the chain-building process has begun.  In order to ensure that we can continue building the chain of symbols, we require that at least one symbol being sent at time $t$ appears in a linear combination already received by user~$i$, i.e.,  $|\chainsetarg{t-1} \cap \xsett| = 1$.  Similarly the third state is entered only if the symbol destined for user~$k$ needs to be replaced in the next linear combination to be sent and the chain-building process has already started.  The precondition for this state is analogous to its counterpart in State~2.  

The fourth state is entered if \emph{only} user~$i$ received the previous transmission, in which case we need to send a new linear combination of the \emph{same} two symbols in the previous timeslot.  In this case, we see that since the transmission is just a different linear combination of the previous two symbols sent, $|\chainsetarg{t-1} \cap \xsett| = 2$.

Finally, the two remaining states are the absorbing states.  State~5 is the absorbing state we occupy if the constraint of being point-to-point optimal for users~$j$ and~$k$ prevent user~$i$ from continuing the chain-building process.  That is, we are not able transmit the next linear combination such that one of the symbols appearing in the combination also appears in a previously received equation, i.e., $|\chainsetarg{t-1} \cap \xsett| = 0$.  Lastly, State~6 is the absorbing state we occupy if we reach the point when we are able to decode all symbols in the chain.  It is reached if the transmission sent in State~4 was received by user~$i$.

Having described the purpose of each state, we now describe the transitions between states in detail beginning with State~1.  Let $\zall = (Z_{i}(t), Z_{j}(t), Z_{k}(t))$.  For each state, we use a table to describe the outbound transitions based on values of $\zall$.  In addition, we show the reward accumulated for each transition.  For example, in the column with the heading $\rhosub{j}$ in Table~\ref{tab:state1_transitions_all}, we show the reward accumulated by user~$j$ at each possible outbound transition from State~1 based on the channel noise.  We note that although $\rhosub{j}$ depends on both $\zall$ and the inbound and outbound states, for notational convenience, we omit explicitly stating this dependence.  The reward can be accumulated by a user, or a queue.  The notation for rewards is summarized in Table~\ref{tab:reward_variables}.  In our description of each state, we also define the queue $\Qstar$, which is the queue containing prioritized symbols that only user~$i$ requires.  These symbols are prioritized because the reception of this symbol can decode an entire chain of symbols.  

%In the following descriptions, we assume that $q_{i, j} \oplus q_{i, k}$ was sent at time $t - 1$.
In the following descriptions, the reader can confirm that the chaining algorithm is point-to-point optimal for users~$j$ and~$k$.  Each time a symbol is received by one of the users, it is replaced by another instantly-decodable, distortion-innovative symbol.

%In doing so, we will make reference to the reward variables in Table~\ref{tab:reward_variables}.  

%We remark that although the reward accumulated will depend on $\zall$ and which states are being transitioned from/to, for convenience we 

\begin{table}
	\begin{center}
		\begin{tabular}{l |  p{8cm}}
			Variable              & Description  \\
			\hline
%			$\rhosub{u}(l, m)$ & Reward accumulated by user~$u$ after transition from state $l$ to $m$ for $u \in \{j , k\}$.  This reward represents the number of symbols that were decoded by user~$u$  \\
%			$\rhosub{Q}(l, m)$ &Reward accumulated in queue $Q$ after transition from state $l$ to $m$.  This reward represents the number of symbols placed in queue $Q$ after the transition \\ 
%			$\rhocarg{l}{m}$ &Reward accumulated by user~$i$ after transition from state $l$ to $m$.  This reward represents the number of equations received by user~$i$ \\ 
			$\rhosub{u}(l, m)$ & Reward representing the number of symbols that can be decoded by user~$u$ after a transition from state $l$ to $m$ for $u \in \{j , k\}$ \\
			$\rhosub{Q}(l, m)$ &Reward representing the number of symbols placed in queue $Q$ after a transition from state $l$ to $m$.  \\
			$\rhocarg{l}{m}$ &Reward representing the number of equations user~$i$ received after a transition from state $l$ to $m$. \\
		\end{tabular}
	\end{center}
	\caption{A legend for the reward variables.}	
	\label{tab:reward_variables}
\end{table}

%\newcounter{magicrownumbers}
%\newcommand\rownumber{\stepcounter{magicrownumbers}\arabic{magicrownumbers}}

\setcounter{mymagicrownumbers}{0} 

\begin{table}
	\begin{center}
%		\begin{tabular}{ | r}
%		\begin{tabular}{@{\makebox[3em][r]{\myrownumber\space}} | c l *{8}{c}}
		\begin{tabular}{c c l *{8}{c}}
			\multicolumn{10}{c}{{\bf State~1 Outgoing Transitions}} \\		
			& $\zall$              & $\Prob(\zall)$ & Next State & $\rhoj$ & $\rhok$ & $\rhoc$ & $\rhoqi$ & $\rhoqj$ & $\rhoqk$ & $\rhosub{\Qstar}$\\
			\hline
			\myrownumber & $(0, 0, 0)$ & $(1 -\epsilon_i)(1 -\epsilon_j)(1 - \epsilon_k)$ & 5 &1 & 1 & 1 & 0 & 0 & 0 & 1 \\
			\myrownumber & $(0, 0, 1)$ & $(1 -\epsilon_i)(1 -\epsilon_j)\epsilon_k$ & 2 &1 & 0 & 1 & 0 & 0 & 0 & 0 \\
			\myrownumber & $(0, 1, 0)$ & $(1 -\epsilon_i)\epsilon_j(1 - \epsilon_k)$ & 3 & 0 & 1 & 1 & 0 & 0 & 0 & 0 \\
			\myrownumber & $(0, 1, 1)$ & $(1 -\epsilon_i)\epsilon_j\epsilon_k$ & 4 & 0 & 0 & 1 & 0 & 0 & 0 & 0 \\
			\myrownumber & $(1, 0, 0)$ & $\epsilon_i(1 -\epsilon_j)(1 - \epsilon_k)$ & 5 & 1 & 1 & 0 & 2 & 0 & 0 & 0 \\
			\myrownumber & $(1, 0, 1)$ & $\epsilon_i(1 -\epsilon_j)\epsilon_k$ & 1 &1 & 0 & 0 & 1 & 0 & 0 & 0 \\											\myrownumber & $(1, 1, 0)$ & $\epsilon_i\epsilon_j(1 - \epsilon_k)$ & 1 &0 & 1 & 0 & 1 & 0 & 0 & 0 \\
			\myrownumber & $(1, 1, 1)$ & $\epsilon_i\epsilon_j\epsilon_k$ & 1 &0 & 0 & 0 & 0 & 0 & 0 & 0 \\
		\end{tabular}
	\end{center}
	\caption{A detailed table of all outgoing transitions from State~1 based on the channel noise realization.  The table includes the probability of each transition and the reward accumulated for each transition.}	
	\label{tab:state1_transitions_all}
\end{table}

\setcounter{mymagicrownumbers}{0} 
\begin{LaTeXdescription}
	\item[State~1] We enumerate all outgoing transitions from State~1, the initial state, based on all possible channel conditions in Table~\ref{tab:state1_transitions_all}.  
	\begin{enumerate}
		\item $\zall = (0, 0, 0)$.  In this case, all users have received the transmission.  We therefore see that $\rhoj = \rhok = 1$, since users~$j$ and $k$ can each decode a symbol.  Since this is the initial state however, user~$i$ has not received any previous equations, and therefore has only one equation in two unknown variables.  Since we must replace both symbols in the next transmission, we cannot continue building the current chain.  The next state is therefore State~5, the absorbing state in which we are not able to decode any symbols in the chain.  However, we do increase the number of equations received by user~$i$ by setting $\rhoc = 1$.  Furthermore, we arbitrarily place one of the symbols from the linear combination into $\Qstar$, the queue containing symbols that can decode an entire chain of symbols.  We set $\rhoqstar = 1$ to reflect this.  
		\item $\zall = (0, 0, 1)$. In this case, only users $i$ and~$j$ have received the transmission.  We set $\rhoj =  \rhoc =1$ to indicate that user~$j$ can decode one symbol, and user~$i$ has received one more equation.  Since the symbol for user~$j$ will be replaced in the next transmission, and the chain-building process has just begun, we therefore have that the next state is State~2.  
		\item $\zall = (0, 1, 0)$. Similar to the previous case, however, we now have that only users $i$ and~$k$ have received the transmission.  We set $\rhok =  \rhoc =1$ to indicate that user~$k$ can decode one symbol, and user~$i$ has received one more equation.  Since the symbol for user~$k$ will be replaced in the next transmission, and the chain-building process has just begun, we therefore have that the next state is State~3.  
		\item $\zall = (0, 1, 1)$.  In this case, only user~$i$ has received the transmission so we set only $\rhoc =1$.  For the subsequent timeslot, we must send a different linear combination of the same two symbols previously sent and so the next state is State~4.
		\item $\zall = (1, 0, 0)$.  In this case, both users~$j$ and~$k$ can decode a symbol so we set $\rhoj = \rhok =1$.  If $q_{i, j} \oplus q_{i, k}$ was sent in the previous timeslot, we now have that both $q_{i, j}$ and $q_{i, k}$ are required by only user~$i$.  We therefore place both these symbols in $Q_{i}$, and we set $\rhoqi = 2$ to reflect this.  Since both $q_{i, j}$ and $q_{i, k}$ will not be sent in the next transmission, we  have that the next state is the non-decoding absorbing state, State~5.
		\item $\zall = (1, 0, 1)$.  In this case, only user~$j$ received the transmission so we set $\rhoj = 1$.  Notice now that $q_{i, j}$ is no longer needed by user~$j$.  Although we must replace this symbol in the next transmission, the chain-building process has not yet begun and user~$i$ has still not received any equations involving $q_{i, j}$.  We therefore place this symbol in $Q_{i}$, set $\rhoqi = 1$ to reflect this, and return to State~1 in the next timeslot.
		\item $\zall = (1, 1, 0)$	.  Similar to the previous case, only user~$k$ has received the transmission and so we set $\rhok = \rhoqi = 1$, and return to State~1 in the next timeslot.
		\item $\zall = (1, 1, 1)$	.  In this case, no users have received the transmission.  No rewards are assigned and we return to State~1 to retransmit the same linear combination.
	\end{enumerate}		
\end{LaTeXdescription}

%\newcounter{magicrownumbers}
%\newcommand\rownumber{\stepcounter{magicrownumbers}\arabic{magicrownumbers}}

\setcounter{mymagicrownumbers}{0} 

\begin{table}
	\begin{center}
%		\begin{tabular}{ | r}
%		\begin{tabular}{@{\makebox[3em][r]{\myrownumber\space}} | c l *{8}{c}}
		\begin{tabular}{c c l *{8}{c}}
			\multicolumn{10}{c}{{\bf State~2 Outgoing Transitions}} \\		
			& $\zall$              & $\Prob(\zall)$ & Next State & $\rhoj$ & $\rhok$ & $\rhoc$ & $\rhoqi$ & $\rhoqj$ & $\rhoqk$ & $\rhosub{\Qstar}$\\
			\hline
			\myrownumber & $(0, 0, 0)$ & $(1 -\epsilon_i)(1 -\epsilon_j)(1 - \epsilon_k)$ & 5 &1 & 1 & 1 & 0 & 0 & 0 & 1 \\
			\myrownumber & $(0, 0, 1)$ & $(1 -\epsilon_i)(1 -\epsilon_j)\epsilon_k$ & 2 &1 & 0 & 1 & 0 & 0 & 0 & 0 \\
			\myrownumber & $(0, 1, 0)$ & $(1 -\epsilon_i)\epsilon_j(1 - \epsilon_k)$ & 3 & 0 & 1 & 1 & 0 & 0 & 0 & 0 \\
			\myrownumber & $(0, 1, 1)$ & $(1 -\epsilon_i)\epsilon_j\epsilon_k$ & 4 & 0 & 0 & 1 & 0 & 0 & 0 & 0 \\
			\myrownumber & $(1, 0, 0)$ & $\epsilon_i(1 -\epsilon_j)(1 - \epsilon_k)$ & 5 & 1 & 1 & 0 & 1 & 0 & 0 & 1 \\
			\myrownumber & $(1, 0, 1)$ & $\epsilon_i(1 -\epsilon_j)\epsilon_k$ & 2 &1 & 0 & 0 & 1 & 0 & 0 & 0 \\											\myrownumber & $(1, 1, 0)$ & $\epsilon_i\epsilon_j(1 - \epsilon_k)$ & 5 &0 & 1 & 0 & 0 & 0 & 0 & 1 \\
			\myrownumber & $(1, 1, 1)$ & $\epsilon_i\epsilon_j\epsilon_k$ & 2 &0 & 0 & 0 & 0 & 0 & 0 & 0 \\
		\end{tabular}
	\end{center}
	\caption{A detailed table of all outgoing transitions from State~2 based on the channel noise realization.  The table includes the probability of each transition and the reward accumulated for each transition.}	
	\label{tab:state2_transitions_all}
\end{table}

\setcounter{mymagicrownumbers}{0} 

\begin{LaTeXdescription}
	\item [State~2]  We enumerate all outgoing transitions from State~2 based on all possible channel conditions in Table~\ref{tab:state2_transitions_all}.  We remind the reader that State~2 is characterized by user~$i$ having already started the chain-building process and the transmitter having just replaced the symbol intended for user~$j$ in the linear combination being sent.  Therefore, the symbol intended for user~$k$ is the common symbol between $\chainsetarg{t-1}$ and $\xsett$ in the precondition of Table~\ref{tab:table_of_states}.  
	We now go over the possible transitions based on whether this new linear combination was received by any of the receivers.
	\begin{enumerate}
		\item $\zall = (0, 0, 0)$.  This case is analogous to its counterpart in State~1.  The subsequent state is again State~5, the non-decoding absorbing state, since user~$i$ has has one less equation required to decode the number of unknown variables in the chain.  The transition to the subsequent state and the rewards allocated are identical for when $\zall = (0,0,0)$ in State~1.  
		\item $\zall = (0, 0, 1)$. This case is analogous to its counterpart in State~1.  
		\item $\zall = (0, 1, 0)$. This case is analogous to its counterpart in State~1.  
		\item $\zall = (0, 1, 1)$. This case is analogous to its counterpart in State~1.  
		\item $\zall = (1, 0, 0)$.  This case is similar to its counterpart in State~1 with one minor but important difference due to the fact that the chain-building process has already begun.  Recall that in this case, we have that both symbols in the previous linear combination sent will have to be replaced in the next timeslot.  However, instead of placing both these symbols in $Q_i$, we should place one of them in $\Qstar$, since it can be used to decode an entire chain of symbols.  We therefore set $\rhoqstar = \rhoqi = 1$.
		\item $\zall = (1, 0, 1)$.  This case is analogous to its counterpart in State~1, however since the chain-building process has begun, we transition to State~2 again in the next timeslot.
		\item $\zall = (1, 1, 0)$.  This case is in direct contrast from its counterpart in State~1.  Recall from the beginning of the description for State~2 that the symbol intended for user~$k$ is the symbol in common between $\chainsetarg{t-1}$ and $\xsett$.  Furthermore, given that $\zall = (1, 1, 0)$, user~$k$ has now received this symbol and so we must replace it in the next timeslot.  However, doing so would not allow us to continue building the current chain because there will be no symbols in common between  $\chainsetarg{t}$ and $\xsetarg{t+1}$.  Therefore, we transition to the non-decoding absorbing state in the next timeslot and  
%		Consider the two symbols involved in the linear combination just sent.  
%We 
place the symbol that was intended for user~$k$ in $\Qstar$.
%and the symbol intended for user~$j$ in $Q_i$ and 
%set $\rhoqi = \rhoqstar = 1$.
		\item $\zall = (1, 1, 1)$.  This case is analogous to its counterpart in State~1, except we transition back to State~2 in the next timeslot.
	\end{enumerate}
\end{LaTeXdescription}

\begin{LaTeXdescription}
	\item [State~3]  For brevity, we omit the details for State~3 and simply state that it is analogous to the discussion we have just given for State~2 with the roles of users~$j$ and~$k$ reversed.
\end{LaTeXdescription}
%\newcounter{magicrownumbers}
%\newcommand\rownumber{\stepcounter{magicrownumbers}\arabic{magicrownumbers}}

\setcounter{mymagicrownumbers}{0} 

\begin{table}
	\begin{center}
%		\begin{tabular}{ | r}
%		\begin{tabular}{@{\makebox[3em][r]{\myrownumber\space}} | c l *{8}{c}}
		\begin{tabular}{c c l *{8}{c}}
			\multicolumn{10}{c}{{\bf State~4 Outgoing Transitions}} \\		
			& $\zall$              & $\Prob(\zall)$ & Next State & $\rhoj$ & $\rhok$ & $\rhoc$ & $\rhoqi$ & $\rhoqj$ & $\rhoqk$ & $\rhosub{\Qstar}$\\
			\hline
			\myrownumber & $(0, 0, 0)$ & $(1 -\epsilon_i)(1 -\epsilon_j)(1 - \epsilon_k)$ & 6 &1 & 1 & 1 & 0 & 0 & 0 & 0 \\
			\myrownumber & $(0, 0, 1)$ & $(1 -\epsilon_i)(1 -\epsilon_j)\epsilon_k$ & 6 &1 & 0 & 1 & 0 & 0 & 1 & 0 \\
			\myrownumber & $(0, 1, 0)$ & $(1 -\epsilon_i)\epsilon_j(1 - \epsilon_k)$ & 6 & 0 & 1 & 1 & 0 & 1 & 0 & 0 \\
			\myrownumber & $(0, 1, 1)$ & $(1 -\epsilon_i)\epsilon_j\epsilon_k$ & 6 & 0 & 0 & 1 & 0 & 1 & 1 & 0 \\
			\myrownumber & $(1, 0, 0)$ & $\epsilon_i(1 -\epsilon_j)(1 - \epsilon_k)$ & 5 & 1 & 1 & 0 & 0 & 0 & 0 & 1 \\
			\myrownumber & $(1, 0, 1)$ & $\epsilon_i(1 -\epsilon_j)\epsilon_k$ & 2 &1 & 0 & 0 & 0 & 0 & 0 & 0 \\							\myrownumber & $(1, 1, 0)$ & $\epsilon_i\epsilon_j(1 - \epsilon_k)$ & 3 &0 & 1 & 0 & 0 & 0 & 0 & 0 \\
			\myrownumber & $(1, 1, 1)$ & $\epsilon_i\epsilon_j\epsilon_k$ & 4 &0 & 0 & 0 & 0 & 0 & 0 & 0 \\
		\end{tabular}
	\end{center}
	\caption{A detailed table of all outgoing transitions from State~4 based on the channel noise realization.  The table includes the probability of each transition and the reward accumulated for each transition.}	
	\label{tab:state4_transitions_all}
\end{table}

\setcounter{mymagicrownumbers}{0} 

\begin{LaTeXdescription}
	\item [State~4]  We enumerate all outgoing transitions from State~4 based on all possible channel conditions in Table~\ref{tab:state4_transitions_all}.  We remind the reader that State~4 is characterized by user~$i$ having just been the only receiver to receive the previous transmission.  Thus, the chain-building process has begun and in this timeslot, the transmitter will send a different linear combination of the previous two symbols sent.
	Finally, we note that this state has outbound transitions in which a source symbol may be placed in $Q_{j}$ or $Q_k$ \emph{after} time $t$.  In the event that one or both of these queues are nonempty, and depending on whether $Q_i$ is nonempty,  the transmitter may also have the option of temporarily suspending the chaining algorithm to send linear combinations of the form 
	\begin{subequations}	
	\label{eq:linear_comb_available}
		\begin{align}
			\label{eq:linear_comb_available_a}
			&q_{i} \oplus q_{j} \oplus q_{k},\\
			\label{eq:linear_comb_available_b}
			&q_j \oplus q_{i, k}, \\%\textrm{or }\\ 
%		\end{align}
%	or
%	\begin{align}
			\label{eq:linear_comb_available_c}
			&q_{k} \oplus q_{i, j}, %\hphantom{\oplus q}
		\end{align}

	\end{subequations}
	which are instantly-decodable, distortion-innovative transmissions for all users.
	We now go over the possible state transitions based on whether the linear combination sent at time $t$ was received by any of the receivers.
	\begin{enumerate}
		\item $\zall = (0, 0, 0)$.  In this case, user~$i$ has now received as many equations as there are unknown variables in the chain.  Thus the subsequent state is State~6, the decoding absorbing state, and there is no need to set $\rhoqstar = 1$, since all variables in the chain have been accounted for.  
		\item $\zall = (0, 0, 1)$. In this case, user~$i$ can again decode. Furthermore, we place the source symbol intended for user~$k$ into $Q_k$ and set $\rhoqk = 1$.
		\item $\zall = (0, 1, 0)$. Similar to the previous case, user~$i$ can also decode given this channel noise realization, and we place the source symbol intended for user~$j$ into $Q_j$ and set $\rhoqj = 1$.
		\item $\zall = (0, 1, 1)$. In this case, user~$i$ can again decode, and we place the source symbol intended for users~$j$ and~$k$ into $Q_j$ and~$Q_k$ respectively and set $\rhoqj = \rhoqk = 1$.
		\item $\zall = (1, 0, 0)$.  This case is similar to its counterpart in State~2, however, since both symbols appearing in $X(t)$ have been accounted for, i.e., both symbols appearing in $X(t)$ are in the set $\chainsetarg{t-1}$, there is no need to set $\rhoqi = 1$.
%		in State~4 we are sending a different linear combination of the \emph{same} two source symbols in the previous transmission, one of the symbols has already been accounted for and we do not need to place a symbol in $Q_i$.  Instead, we need to account for only one other symbol that is placed in $\Qstar$.
		\item $\zall = (1, 0, 1)$.  This case is analogous to its counterpart in State~2, however, since the symbol just received by user~$j$ has been accounted for, i.e., since it is one of the symbols in $\chainsetarg{t-1}$, there is no need to set $\rhoqi = 1$.
		\item $\zall = (1, 1, 0)$	.  This case is analogous to its counterpart in State~2, however, since the symbol just received by user~$k$ has been accounted for, i.e., since it is one of the symbols in $\chainsetarg{t-1}$, there is no need to set $\rhoqi = 1$.
		\item $\zall = (1, 1, 1)$	.  This case is analogous to its counterpart in State~1, except we transition back to State~4 in the next timeslot.
	\end{enumerate}
\end{LaTeXdescription}

Finally, we conclude our discussion of the chaining algorithm by mentioning that if at any time an absorbing state in the Markov rewards process is reached and the distortion constraints of users~$j$ and~$k$ have yet to be met, then we simply restart the Markov rewards process and begin building a new chain.  If at any time during this process, one of the users has their distortion constraint met, then that user can leave the network and we are left with two users who we again serve with the two-user algorithm of~\cite{TMKS_TIT20}.  If one of the remaining two users is user~$i$, the user building the chain, then we would prioritize the transmission of symbols in $\Qstar$, since the reception of one of these symbols would lead to the decoding of an entire chain of symbols. 

\section{Analysis}
\label{sec:analysis}

In this section, we analyze the coding scheme proposed in Section~\ref{subsec:instantly_decodable}, and the chaining algorithm of Section~\ref{subsubsec:chaining_algorithm}.
%In particular, we show that the solution of a linear programming problem characterizes the number of instantly decodable, distortion-innovative symbols that can be sent by Algorithm~\ref{alg:three_users}.  
For the analysis of the coding scheme in Section~\ref{subsec:instantly_decodable}, we show in Section~\ref{subsec:analysis_instantly_decodable} that the solution of a linear program characterizes the number of instantly decodable, distortion-innovative symbols that can be sent by the algorithm in Section~\ref{subsec:instantly_decodable}.
We furthermore give sufficient conditions for which all users can be point-to-point optimal.  In some cases, we find that users can be point-to-point optimal regardless of their distortion constraints.  Following this, in Section~\ref{subsec:chaining_algorithm_analysis} we analyze the chaining algorithm and give a sufficient condition for point-to-point optimal performance.

\subsection{Analysis of Coding Scheme of Section~\ref{subsec:instantly_decodable}}
\label{subsec:analysis_instantly_decodable}

Our analysis begins by first considering the systematic phase of our coding, which transmits the $N$ source symbols so that each symbol is recovered by at least one user.  Let $T_0$ be a random variable representing the number of transmissions required for this after being normalized by $N$.  In other words, $T_0$ will be said to be the \emph{latency} required.  Then it is clear that $T_0$ has expected value $\bar{T_0}$ given by
\begin{equation}
\label{eq:T0}
	\bar{T}_{0} = \frac{1}{1 - \epsilon_1\epsilon_2\epsilon_3}.
\end{equation}

During the systematic transmissions, we direct erased symbols to their appropriate queues as outlined in the algorithm of Section~\ref{subsec:instantly_decodable}.  Then, when the $NT_0$ transmissions are completed, we start sending linear combinations of the form $q_i \oplus q_{j,k}$, where $q_i \in Q_i, q_{j,k} \in \Q{j,k}$.
and $i,j,k\in \mathcal{U}$  are unique.

Let $T_i$ be a random variable representing the normalized number of transmissions we can send of the form $q_i \oplus q_{j,k}$.  We are able to do this so long as $Q_i$ and \Q{j,k} are non-empty.  Thus, we must bound the cardinality of these queues.  For \Q{j,k}, we have that a symbol is added during the systematic transmissions whenever user~$i$ receives a symbol that is erased at users~$j$ and~$k$.  Symbols are no longer added to this queue after the systematic transmissions, and so we have that $\Q{j,k}^{+}$, the  expected maximum normalized cardinality of \Q{j,k}, is given by
\begin{equation}
\label{eq:Qjk}
	\Q{j,k}^{+} = \bar{T}_0(1 - \epsilon_i)\epsilon_j\epsilon_k.
\end{equation}
Now, a symbol is removed from \Q{j,k} whenever user~$j$ or user~$k$ receives a linear combination involving one of its elements.  Thus,  \bT{i}, the expected value of $T_i$, is bounded as
\begin{equation}
\label{eq:Ti_Qjk}
	\bar{T}_i \leq \frac{\Q{j,k}^{+}}{1 - \epsilon_j\epsilon_k}.
\end{equation}

Inequality~\eqref{eq:Ti_Qjk} bounds \bT{i} relative to the size of \Q{j,k}.  We must also bound \bT{i} in terms of the cardinality of $Q_i$, since transmissions must stop if $Q_i$ is empty.  This bounding is not as straightforward, however, as symbols are added and removed from $Q_i$ as the algorithm progresses.  For example, $Q_1$ may be empty at a certain time, but later replenished when we send linear combinations of the form $q_2 \oplus q_{1,3}$.  In particular, when this linear combination is received by user~3 and not user~1, we add $q_{1,3}$ to $Q_1$ as user~1 is now the only user in need of it. If $N\bar{T}_2$ linear combinations of this form are sent, we can expect that $N\bar{T}_2\epsilon_1(1 - \epsilon_3)$ symbols are added to $Q_1$.  In general, after $N\bT{j}$ and $N\bT{k}$ linear combinations are sent from $Q_j$, \Q{i,k} and $Q_k$, \Q{i,j} respectively, we can expect that $N\bar{T}_j\epsilon_i(1 - \epsilon_k) + N\bar{T}_k\epsilon_i(1 - \epsilon_j)$ symbols are added to $Q_i$.  Given that $Q_i$ initially has $N\bar{T}_0\epsilon_i(1 - \epsilon_j)(1 - \epsilon_k)$ symbols after $N\bar{T}_0$ uncoded transmissions, we have then that the expected maximum normalized cardinality of $Q_i$ is given by \Qp{i}{\bT{j}}{\bT{k}}  where
\begin{equation}
\label{eq:Qi_plus}
	\Qp{i}{\bT{j}}{\bT{k}} = \bar{T}_0\epsilon_i(1 - \epsilon_j)(1 - \epsilon_k) + \bT{j}\epsilon_i(1 - \epsilon_k) + \bT{k}\epsilon_i(1 - \epsilon_j).
\end{equation}
Since a symbol is removed from $Q_i$ each time user~$i$ successfully receives a channel symbol, we can also write that
\begin{equation}
\label{eq:Ti_Qi}
	\bT{i} \leq \frac{\Qp{i}{\bT{j}}{\bT{k}} }{1 - \epsilon_i}.
\end{equation}

Notice that by the definition of our stopping condition, \bT{i} must actually meet \eqref{eq:Ti_Qjk} or \eqref{eq:Ti_Qi} with equality, since otherwise $Q_i$ and \Q{j,k} would still be non-empty, and we would not have reached the stopping condition.  So we in fact must have that for all $i \in \mathcal{U}$ with $j,k \in \mathcal{U}\setminus \{i\} \ s.t. \ j \neq k$,
\begin{equation}
\label{eq:Ti_min}
	\bT{i} = \min\left(\frac{\Qp{i}{\bT{j}}{\bT{k}} }{1 - \epsilon_i}, \frac{\Q{j,k}^{+}}{1 - \epsilon_j\epsilon_k} \right),
\end{equation}
where $\bT{i} > 0$.
%\color{red}
The following theorem proposes that the solution to \eqref{eq:Ti_min} is unique and can be characterized by solving the linear program 
%We show the existence of a solution to \eqref{eq:Ti_min} by considering the linear program

\begin{equation}
\label{eq:LP}
\begin{aligned}
	& \underset{\bT{1}, \bT{2}, \bT{3}}{\text{max}}
	&& \!\! \bT{1} + \bT{2} + \bT{3}\\
	& \text{subject to}
%\end{aligned}
%\end{equation}
%\[
%\begin{aligned}
	&& 	\!\! \bT{i} \leq \frac{\Qp{i}{\bT{j}}{\bT{k}} }{1 - \epsilon_i},\\
	&&& \!\!\bar{T}_i \leq \frac{\Q{j,k}^{+}}{1 - \epsilon_j\epsilon_k} \quad \forall i \in \mathcal{U}, j,k \in \mathcal{U}\setminus \{i\}, j \neq k.
\end{aligned}
%\]
\end{equation}
\begin{theorem}
\label{thm:LP_instantly_decodable}
	Let $\bT{i}$ be the expected value of the normalized number of analog transmissions that can be sent of the form $q_i \oplus q_{j,k}$, where $i \in \mathcal{U}, j,k \in \mathcal{U}\setminus\{i\}, j\neq k$.  Then $\bT{i}$ is uniquely given by the solution of~\eqref{eq:LP}.
\end{theorem}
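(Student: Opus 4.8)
The plan is to show that the fixed-point system~\eqref{eq:Ti_min} and the linear program~\eqref{eq:LP} have the same, unique solution. First note that~\eqref{eq:LP} is feasible (the all-zero vector satisfies every constraint, since all right-hand sides appearing in~\eqref{eq:Qi_plus} and~\eqref{eq:Qjk} are nonnegative) and bounded (the constraints $\bar{T}_i \le \Q{j,k}^{+}/(1-\epsilon_j\epsilon_k)$ cap each coordinate), so an optimum $(\bar T_1^{\star},\bar T_2^{\star},\bar T_3^{\star})$ exists. I would then argue that any such optimum automatically satisfies~\eqref{eq:Ti_min}: if for some $i$ both row-$i$ constraints were slack, we could raise $\bar T_i$ slightly without losing feasibility, because $\bar T_i$ enters the first-type row-$j$ and row-$k$ constraints only with the nonnegative coefficients $\epsilon_j(1-\epsilon_k)$ and $\epsilon_k(1-\epsilon_j)$ (which only relaxes them) and does not appear at all in the second-type ``hard cap'' constraints for $j$ and $k$; this strictly improves the objective, a contradiction. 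Hence at the optimum each $\bar T_i$ is pinned to one of its two bounds and, being $\le$ both by feasibility, equals their minimum, so the LP optimum solves~\eqref{eq:Ti_min}. Conversely, every solution of~\eqref{eq:Ti_min} is visibly LP-feasible. Thus, once we show~\eqref{eq:Ti_min} has a unique solution, it coincides with the (then necessarily unique) LP optimum and the theorem follows; the caveat $\bar T_i>0$ is handled by noting that for $\epsilon_i\in(0,1)$ the right-hand side of~\eqref{eq:Ti_min} is bounded below by $\min\!\big(\bar T_0\epsilon_i(1-\epsilon_j)(1-\epsilon_k)/(1-\epsilon_i),\,\Q{j,k}^{+}/(1-\epsilon_j\epsilon_k)\big)>0$, while degenerate erasure rates collapse the instance to a lower-dimensional one treated separately.

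The uniqueness of the solution of~\eqref{eq:Ti_min} is the heart of the matter. Let $F:\prod_i[0,U_i]\to\prod_i[0,U_i]$, with $U_i=\Q{j,k}^{+}/(1-\epsilon_j\epsilon_k)$, be the map whose $i$th component is the right-hand side of~\eqref{eq:Ti_min}. Since $\Qp{i}{\bar T_j}{\bar T_k}$ is nondecreasing in $\bar T_j$ and $\bar T_k$ and the minimum of nondecreasing functions is nondecreasing, $F$ is monotone, and it maps the box into itself because its $i$th coordinate is a minimum with $U_i$. By Knaster--Tarski the fixed-point set is a nonempty complete lattice; equivalently, iterating $F$ downward from $(U_1,U_2,U_3)$ converges to the greatest fixed point $T^{\mathrm{gr}}$ and iterating upward from $\mathbf{0}$ converges to the least. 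A sandwich argument identifies $T^{\mathrm{gr}}$ with the LP optimum: the LP optimum is a fixed point (as just shown), hence $\le T^{\mathrm{gr}}$ componentwise, while $T^{\mathrm{gr}}$ is LP-feasible, hence has objective value no larger than that of the LP optimum; together these force equal objective values and then, with the componentwise inequality, equal vectors.

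It remains to rule out a second fixed point, and this is where I expect the main obstacle to lie. Given two fixed points, applying $F$ to their componentwise meet and join produces comparable fixed points, so it suffices to exclude a pair $\check T\lneq\hat T$. Writing $\Delta=\hat T-\check T\ge 0$ and examining, for each $i$, which of the two bounds in~\eqref{eq:Ti_min} is active at $\hat T_i$ and at $\check T_i$ — using that $\hat T_i=U_i$ whenever the ``$Q_{j,k}$'' bound is active at $\hat T_i$, and that the ``$Q_i$'' branch has slopes read off from~\eqref{eq:Qi_plus} — one finds in every case $\Delta_i\le\frac{\epsilon_i(1-\epsilon_k)}{1-\epsilon_i}\Delta_j+\frac{\epsilon_i(1-\epsilon_j)}{1-\epsilon_i}\Delta_k$, i.e.\ $\Delta\le M\Delta$ for the nonnegative matrix $M$ of these coefficients. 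Closing the gap from $\Delta\le M\Delta$ to $\Delta=0$ is the crux: when the spectral radius $\rho(M)<1$ this is immediate from Perron--Frobenius, but $\rho(M)$ can exceed $1$ (for instance when the three erasure rates are equal and close to $1$). In that regime one must use the hard caps: a coordinate on the ``$Q_i$'' branch whose neighbouring coordinates are large would push $\Qp{i}{\bar T_j}{\bar T_k}/(1-\epsilon_i)$ strictly above $U_i$, a contradiction, so the relevant coordinates are forced onto the branch $\bar T_i=U_i$, on which the fixed point is pinned to $(U_1,U_2,U_3)$ or to the correspondingly reduced affine system. Carrying out this finite case analysis over the $2^{3}$ active-constraint patterns, checking within each that the resulting affine system admits at most one solution in $\prod_i[0,U_i]$, completes the proof.
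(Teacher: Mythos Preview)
Your first half is sound and, in one respect, cleaner than the paper: the slack-constraint argument for ``LP optimum $\Rightarrow$ fixed point of~\eqref{eq:Ti_min}'' is exactly Lemma~\ref{lem:LPToMin}, and your sandwich argument (any LP optimum is a fixed point, hence $\le T^{\mathrm{gr}}$; but $T^{\mathrm{gr}}$ is feasible, hence has objective $\le$ optimum; so they coincide) yields LP uniqueness more directly than the paper's Lemma~\ref{lem:unique}, which instead invokes Mangasarian's perturbation criterion together with a componentwise-dominance argument.

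The gap is in the uniqueness of the fixed points of~\eqref{eq:Ti_min}. You correctly derive $\Delta\le M\Delta$ for comparable fixed points and correctly flag that $\rho(M)$ can exceed~$1$, but the proposed fix---enumerate the $2^{3}$ active-constraint patterns and check each affine system has at most one solution in the box---does not close the argument: per-pattern uniqueness says nothing about two fixed points with \emph{different} patterns, which is precisely the situation that arises for comparable $\check T\lneq\hat T$ (the $Q_i$-branch set of $\check T$ strictly contains that of $\hat T$). What actually makes the argument go through is the observation you never quite state: if $S$ is the set of coordinates on the $Q_i$-branch at $\check T$, then restricting~\eqref{eq:linear_Qi_plus} to $S$ gives $(I-M_{S})\,\check T_{S}=k_{S}'\succ 0$ with $\check T_{S}\succ 0$, which by Plemmons' condition $K_{34}$ certifies $I-M_{S}$ as a nonsingular $M$-matrix, forcing $\rho(M_{S})<1$; since $\Delta$ vanishes off $S$ and satisfies $\Delta_{S}\le M_{S}\Delta_{S}$, this immediately yields $\Delta=0$. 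The paper uses the same $M$-matrix certification, but deploys it in the \emph{converse} direction (Lemma~\ref{lem:MinToLP}): it shows that every fixed point is LP-optimal by exhibiting nonnegative dual variables for the local perturbation program, and the inverse-positivity of the relevant block $A_{|\mathcal T|}$---guaranteed by the fixed point's own positivity via $K_{34}$---is what delivers that nonnegativity. Combined with the separate LP-uniqueness lemma, this gives fixed-point uniqueness without any pattern enumeration. Your route is repairable, but only by importing this same blockwise $M$-matrix step; the bare case split you propose is not enough.
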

\begin{proof}
We proceed toward this end by establishing several lemmas in the appendix in the accompanying supplemental material of this paper and which is also available in the extended version of this paper~\cite{TMK_feedback3_arxiv}.
%that collectively show that the only $(\bT{1}, \bT{2}, \bT{3})$ satisfying~\eqref{eq:Ti_min} is the unique optimal solution of~\eqref{eq:LP}.  
Lemma~\ifarxiv\ref{lem:LPToMin}\else\lemOPSatisfiesCond{}\fi~\cite{TMK_feedback3_arxiv} first establishes that any optimal solution to~\eqref{eq:LP} also satisfies~\eqref{eq:Ti_min}.  As $\bT{1} = \bT{2} = \bT{3} = 0$ is clearly a feasible solution of~\eqref{eq:LP}, we have that the feasible set of~\eqref{eq:LP} is non-empty.  Therefore, such a solution of~\eqref{eq:LP}, and by extension~\eqref{eq:Ti_min}, indeed exists.  
Conversely, Lemma~\ifarxiv\ref{lem:MinToLP}\else\lemCondSatisfiesOP{}\fi~\cite{TMK_feedback3_arxiv} establishes that any $(\bT{1}, \bT{2}, \bT{3})$ that satisfies~\eqref{eq:Ti_min} is also an optimal solution to~\eqref{eq:LP}.  
%Furthermore, we know that 
%Such a solution to~\eqref{eq:LP} exists since %the feasible set of~\eqref{eq:LP} 
%we use the fact that a solution to~\eqref{eq:LP} must exist since 
%
%We first notice that 
%the feasible set of this linear program is non-empty as $\bT{1} = \bT{2} = \bT{3} = 0$ is clearly feasible.  
%As $\bT{1} = \bT{2} = \bT{3} = 0$ is clearly a feasible solution of~\eqref{eq:LP}, we have that the feasible set of~\eqref{eq:LP} is non-empty.  Therefore, such a solution of~\eqref{eq:LP}, and by extension~\eqref{eq:Ti_min}, indeed exists.  
Finally, Lemma~\ifarxiv\ref{lem:unique}\else\lemOPunique{}\fi~\cite{TMK_feedback3_arxiv} then shows that the optimal solution for~\eqref{eq:LP} is unique.  We conclude that any $(\bT{1}, \bT{2}, \bT{3})$ satisfying~\eqref{eq:Ti_min} is itself unique and characterized by~\eqref{eq:LP}.
%Combining Lemmas~\ref{lem:LPToMin}, \ref{lem:MinToLP} and~\ref{lem:unique}, we therefore conclude that the solution to~\eqref{eq:LP} uniquely characterizes the number of 
%instantly-decodable, distortion-innovative transmissions that can be sent, which is suggested by~\eqref{eq:Ti_min}.
%Combining Lemmas~\ref{lem:LPToMin}, \ref{lem:MinToLP} and~\ref{lem:unique}, we therefore conclude that the only solution to~\eqref{eq:Ti_min} is the unique optimal solution of~\eqref{eq:LP}.  
\end{proof}
%\color{black}
Now, let $\bar{T}^{*} = (\bTs{1}, \bTs{2}, \bTs{3})$ be the optimal solution of~\eqref{eq:LP}, %for some scheduler, 
and let %let $t^{*}$ be given as
\begin{equation}
\label{eq:t_star}
	t^{*} = \bT{0} + \bTs{1} + \bTs{2} + \bTs{3},
\end{equation}  
where $\bar{T}_{0}$ is given by~\eqref{eq:T0}.
It is important to determine $t^{*}$ since it provides a lower bound for the number of instantly decodable, distortion-innovative transmissions possible.  From the discussion at the beginning of Section~\ref{sec:three_users}, we know that for any latency $w \leq t^{*}$, we can meet the point-to-point outer bound at $w$.  Say that user~$i$'s distortion, $d_i$, necessitates a minimum latency of $w_i(d_i)$ where

\begin{equation}
\label{eq:wioptimal}
	w_i(d_i) = \frac{1 - d_i}{1 - \epsilon_i}.
\end{equation}
Let
%
%\[
%w^{-}\dvec = \min_{i \in \mathcal{U}}  w_i (d_i)\quad
%w^{+}\dvec = \max_{i \in \mathcal{U}} w_i(d_i)
%\]
\begin{eqnarray}
\label{eq:w_minus}
%	w^{-} &=& \min_i  \frac{1 - d_i}{1 - \epsilon_i} \\
	w^{-}\dvec &=& \min_{i \in \mathcal{U}}  w_i (d_i)\\
\label{eq:w_plus}
	w^{+}\dvec &=& \max_{i \in \mathcal{U}} w_i(d_i).
%	w^{+} &=& \max_i \frac{1 - d_i}{1 - \epsilon_i}.
%		w^{-} &=& \min_i  \frac{(1 - d_i)}{(1 - \epsilon_i)}\\
%		w^{+} &=& \max_i \frac{(1 - d_i)}{(1 - \epsilon_i)}\\
\end{eqnarray}
It is clear then that $w^{+}$ is an outer bound for our problem, and if $w^{+} \leq t^{*}$, we are optimal for all users.  Notice however, that if $w^{-} \leq t^{*}$, we can also be optimal for all users.  This is because if $w^{-} \leq t^{*}$, one user can be fully satisfied at latency $w^{-}$, and so from thereon, we are left with only two users.  From~\cite{TMKS_TIT20} and the discussion in Section~\ref{subsec:instantly_decodable}, we know that we can remain optimal for the remaining two users, which leads us to the following theorem.

\begin{theorem}
\label{thm:w_minus}
	Given $\dvec \in \mathcal{D}^{3}$, let $t^{*}, w^{-}\dvec$ and $w^{+}\dvec$ be given by~\eqref{eq:t_star}, \eqref{eq:w_minus} and~\eqref{eq:w_plus} respectively.  Then if $w^{-}\dvec \leq t^{*}$, the latency $w^{+}\dvec$ is $(d_1, d_2, d_3)$-achievable.
%	\begin{eqnarray}
%		w^{-} &=& \min_i  \frac{1 - d_i}{1 - \epsilon_i} \\
%		w^{+} &=& \max_i \frac{1 - d_i}{1 - \epsilon_i}
%%		w^{-} &=& \min_i  \frac{(1 - d_i)}{(1 - \epsilon_i)}\\
%%		w^{+} &=& \max_i \frac{(1 - d_i)}{(1 - \epsilon_i)}\\
%	\end{eqnarray}
\end{theorem}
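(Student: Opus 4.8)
The plan is to run the instantly-decodable phase of Section~\ref{subsec:instantly_decodable} just long enough to retire the least demanding user, and then fall back on the two-user scheme of~\cite{TMKS_TIT20}. First I would relabel the users so that $w_1(d_1)=w^-\dvec \le w_2(d_2)\le w_3(d_3)=w^+\dvec$, which is without loss of generality. By the discussion opening Section~\ref{sec:three_users} together with Theorem~\ref{thm:LP_instantly_decodable}, for any $w\le t^*$ the scheme of Section~\ref{subsec:instantly_decodable} transmits only instantly-decodable, distortion-innovative symbols during its first $wN$ channel uses, so that afterwards user~$i$ has (in expectation) recovered $wN(1-\epsilon_i)$ source symbols, i.e.\ is at distortion $D^*(\epsilon_i,w)=1-w(1-\epsilon_i)$. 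Since $w^-\dvec\le t^*$ by hypothesis, I would run this phase for $w^-\dvec N$ channel uses. At that instant user~$1$ sits at distortion $D^*(\epsilon_1,w^-)=1-w_1(d_1)(1-\epsilon_1)=d_1$, so its constraint is met, while users~$2$ and~$3$ have each recovered $w^-\dvec N(1-\epsilon_i)$ source symbols.

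Next I would invoke the event handler of Section~\ref{subsec:instantly_decodable}: discard $Q_1$, merge $Q_2$ with $\Q{1,2}$ and $Q_3$ with $\Q{1,3}$, and regard the contents of $\Q{2,3}$ as source symbols not yet transmitted, leaving a two-user instance for users~$2$ and~$3$. By~\cite{TMKS_TIT20}, the two-user algorithm run from scratch for users~$2$ and~$3$ with targets $d_2,d_3$ is point-to-point optimal at every time and achieves latency $\max\big(w_2(d_2),w_3(d_3)\big)=w^+\dvec$; in particular, at time $w^-\dvec N$ it too has user~$i\in\{2,3\}$ holding $w^-\dvec N(1-\epsilon_i)$ recovered symbols. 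Thus the state of users~$2$ and~$3$ under our combined scheme at time $w^-\dvec N$ can be identified (after the merge above) with the state of that algorithm at time $w^-\dvec N$, so the combined scheme continues as that algorithm and terminates at time $w^+\dvec N$ with all three constraints met. Equivalently, at time $w^-\dvec N$ user~$i\in\{2,3\}$ still needs the fraction $\delta_i:=(1-d_i)-w^-\dvec(1-\epsilon_i)\ge 0$ (nonnegative because $w^-\dvec=w_1(d_1)\le w_i(d_i)$), and~\cite{TMKS_TIT20} clears this with additional latency $\max_{i\in\{2,3\}}\delta_i/(1-\epsilon_i)=\max_{i\in\{2,3\}}\big(w_i(d_i)-w^-\dvec\big)=w^+\dvec-w^-\dvec$, for a total of $w^+\dvec$. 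The passage from these expected latencies to $(d_1,d_2,d_3)$-achievability in the sense of Definition~\ref{def:achievable_general_feedback} is then routine: the phase lengths and the fractions recovered by each user concentrate about their expectations as $N\to\infty$, and the slack $\delta$ in the distortion absorbs the $o(N)$ discrepancies.

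The main obstacle is the hand-off step: one must be certain that the two-user scheme of~\cite{TMKS_TIT20} still attains the point-to-point optimum when initialized from the \emph{particular} merged queue configuration produced at time $w^-\dvec N$ --- notably from $\Q{2,3}$, whose symbols neither remaining user possesses and which must be re-injected as fresh source symbols --- rather than from the clean start analyzed in~\cite{TMKS_TIT20}. I would address this exactly as in the event-handler discussion of Section~\ref{subsec:instantly_decodable}: the merged configuration is itself a legitimate intermediate state of the two-user algorithm (equivalently, the three-user phase-1 scheme can be run so that its effect on users~$2$ and~$3$ and their queues matches the evolution of that algorithm), so the optimality guarantee of~\cite{TMKS_TIT20} applies verbatim from that point on. A secondary point to verify is that ``optimal latency'' in~\cite{TMKS_TIT20} holds at every intermediate time rather than only at termination, which is immediate because that scheme, like ours, sends only instantly-decodable, distortion-innovative symbols for two users.
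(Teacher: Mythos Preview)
Your proposal is correct and follows essentially the same approach as the paper: run the instantly-decodable phase of Section~\ref{subsec:instantly_decodable} until latency $w^{-}\dvec\le t^{*}$, at which point the least demanding user is satisfied, then hand off to the two-user scheme of~\cite{TMKS_TIT20} via the event handler to finish in total latency $w^{+}\dvec$. The paper presents this argument informally in the paragraph immediately preceding the theorem rather than as a separate proof; your write-up is more detailed (explicit residual distortions, the queue-merge hand-off, and the concentration step), but the underlying idea is identical.
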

%
%Theorem~\ref{thm:w_minus} states that having $w^{-} \leq t^{*}$ is sufficient for point-to-point optimality.  
If $w^{-} > t^{*}$,  it may still be possible to achieve optimality.  In particular, this may happen if after $Nt^{*}$ transmissions have completed, we are left with non-empty queues $Q_i, i\in \mathcal{U}$. %$Q_1$, $Q_2$ and $Q_3$.  
%For $i \in \mathcal{U}$, 
We can calculate the expected normalized cardinality of $Q_i$ after $t^{*}$ transmissions, denoted as $| Q_i(t^{*}) |$, by noting that in $NT_i^{*}$ transmissions, a symbol is removed from $Q_i$ with probability $(1 - \epsilon_i)$, and so
\begin{equation}
\label{eq:Q_t_star}
	| Q_i(t^{*}) | = \Qp{i}{\bT{j}^{*}}{\bT{k}^{*}} - \bar{T}_i^{*}(1 - \epsilon_i),
%	| Q_i(t^{*}) | = \frac{\Qp{i}{\bT{j}^{*}}{\bT{k}^{*}} - (t^{*} - \bar{T}_0)(1 - \epsilon_i)}{N},
\end{equation}
where \Qp{i}{\bT{i}}{\bT{j}} is given by~\eqref{eq:Qi_plus}.
If $| Q_i(t^{*}) |$ is non-zero for all $i \in \mathcal{U}$, we can continue sending linear combinations of the form $q_1 \oplus q_2 \oplus q_3$ until a user's distortion constraint is met or one of the queues $Q_i$ is exhausted.  If the latter were to happen, we have that user~$i$ has actually reconstructed every source symbol since all queues $Q_U, U \subset \mathcal{U} \ s.t.\ i \in U$ are empty.  Therefore, we are again left with the situation in~\cite{TMKS_TIT20} involving only two users. We conclude that we can send instantly decodable, distortion-innovative symbols until \emph{all} users achieve lossless reconstructions.

\begin{theorem}
\label{thm:all}
	Let $Q^{-} = \min_{i \in \mathcal{U}} |Q_i(t^{*})|$, where $|Q_i(t^{*})|$ is given by~\eqref{eq:Q_t_star}.  If $Q^{-} > 0$, then for \textbf{any} $\dvec \in \mathcal{D}^{3}$, the latency $w^{+}\dvec$ is $(d_1, d_2, d_3)$-achievable.
\end{theorem}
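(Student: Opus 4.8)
The plan is to show that, when $Q^{-}>0$, the coding scheme of Section~\ref{sec:three_users} can be run so as to transmit \emph{only} instantly-decodable, distortion-innovative symbols, from the very first channel use until \emph{every} one of the three users has a lossless reconstruction of the source. Given that, the $(d_1,d_2,d_3)$-achievability of $w^{+}\dvec$ for an arbitrary distortion vector follows from the source-channel separation argument given at the start of Section~\ref{sec:three_users}.

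First I would pin down the state of the queues after the $Nt^{*}$ transmissions making up the systematic phase and the $q_i\oplus q_{j,k}$ phase. By \eqref{eq:Q_t_star}, $|Q_i(t^{*})|>0$ forces $\bTs{i}(1-\epsilon_i)<\Qp{i}{\bTs{j}}{\bTs{k}}$, i.e.\ the bound \eqref{eq:Ti_Qi} is strict for every $i$; since, by Theorem~\ref{thm:LP_instantly_decodable}, the triple $\bTs{i}$ satisfies \eqref{eq:Ti_min}, the other bound \eqref{eq:Ti_Qjk} must then be tight for every $i$, which means $\Q{j,k}$ has been emptied for every $i$. Hence at time $t^{*}$ the only non-empty queues are $Q_1,Q_2,Q_3$, with expected normalized cardinalities $|Q_1(t^{*})|,|Q_2(t^{*})|,|Q_3(t^{*})|>0$; moreover $Q_{\mathcal{U}}$ is empty throughout the whole scheme, since the systematic phase never places a symbol into a queue indexed by all of $\mathcal{U}$.

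Next I would establish the main claim. In the third phase the transmitter sends $q_1\oplus q_2\oplus q_3$ with $q_m\in Q_m$; since $Q_m$ holds only symbols already received by the two users other than $m$, each user $i$ can cancel $q_j\oplus q_k$ and recover $q_i$, so this symbol is instantly-decodable and distortion-innovative for all three. During this phase a symbol leaves $Q_i$ exactly when user $i$ receives, and no symbol is ever added to any two- or three-user queue (the component belonging to a user who receives is thereafter needed by nobody), so $Q_1,Q_2,Q_3$ only shrink and $q_1\oplus q_2\oplus q_3$ stays available as long as all three remain non-empty. When some $Q_i$ first empties, user $i$ has received every source symbol --- the only queues that could still contain symbols user $i$ needs, namely $Q_i$, $\Q{i,j}$, $\Q{i,k}$, $Q_{\mathcal{U}}$, are all empty --- so user $i$'s reconstruction is lossless and that user leaves the network. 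Two users $j,k$ remain, holding only $Q_j$ and $Q_k$; sending $q_j\oplus q_k$ (instantly-decodable and distortion-innovative for both, by the same reasoning), and falling back to uncoded symbols once one of the two queues empties --- equivalently, invoking the two-user algorithm of~\cite{TMKS_TIT20} --- again uses only instantly-decodable, distortion-innovative symbols until both $j$ and $k$ are losslessly done. Thus the scheme never stalls and is instantly-decodable/distortion-innovative at every step.

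Finally I would conclude as in the opening discussion of Section~\ref{sec:three_users}: since every transmission is instantly-decodable and distortion-innovative, after $wN$ channel uses a user $i$ who has not yet finished has decoded one new source symbol for each of its (in expectation) $wN(1-\epsilon_i)$ successful receptions, so its expected distortion is $\max\{0,\,1-w(1-\epsilon_i)\}$, while a user who has already finished has distortion $0$. Taking $w=w^{+}\dvec$ and using $w^{+}\dvec\ge w_i(d_i)=(1-d_i)/(1-\epsilon_i)$ from \eqref{eq:w_plus} and \eqref{eq:wioptimal}, each user's expected distortion at latency $w^{+}\dvec$ is at most $\max\{0,d_i\}=d_i$. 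A routine concentration argument over the i.i.d.\ erasures then upgrades this to: for every $\delta>0$ and all large $N$ there is a code of latency $w^{+}\dvec$ and blocklength $N$ with $\hat{d}_i\le d_i+\delta$, i.e.\ $w^{+}\dvec$ is $(d_1,d_2,d_3)$-achievable. The one genuinely substantive point is the ``never stalls'' part of the main claim: it rests on the structural fact, forced by $Q^{-}>0$ together with \eqref{eq:Ti_min}, that $\Q{1,2}$, $\Q{1,3}$, $\Q{2,3}$ are all empty when the $q_1\oplus q_2\oplus q_3$ phase begins and stay empty during it, and on the clean reduction to the two-user problem of~\cite{TMKS_TIT20} once a single-user queue empties; the separation-bound counting and the concentration step are routine.
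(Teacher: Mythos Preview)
Your proposal is correct and follows essentially the same route as the paper, whose argument is the brief paragraph immediately preceding the theorem: once all three $Q_i$ are non-empty at $t^{*}$, send $q_1\oplus q_2\oplus q_3$ until some $Q_i$ empties, observe that this user then has a lossless reconstruction, and reduce to the two-user scheme of~\cite{TMKS_TIT20}. You supply two details the paper leaves implicit --- the deduction from $Q^{-}>0$ and~\eqref{eq:Ti_min} that all three pair queues $\Q{j,k}$ are already empty at $t^{*}$, and the concentration step turning the expected-value analysis into an achievability statement --- but the structure of the argument is identical.
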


\subsection{Analysis of Chaining Algorithm of Section~\ref{subsubsec:chaining_algorithm}}
\label{subsec:chaining_algorithm_analysis}

In this section, we give a sufficient condition for all users to simultaneously achieve point-to-point optimal performance when the chaining algorithm is invoked.  Our analysis is based on a restricted version of the chaining algorithm previously described.  Specifically, we assume that we do not opportunistically send linear combinations of the forms specified in~\Crefrange{eq:linear_comb_available_a}{eq:linear_comb_available_c} when the opportunities present themselves (see the description of State~4 in Section~\ref{subsubsec:chaining_algorithm}). Thus, we can argue a fortiori that the actual unrestricted chaining algorithm would achieve only better performance.  Before beginning the analysis, we mention that we will rely on many of the results on Markov rewards processes with impulse rewards and absorbing states derived in~\cite{TMK_markov}.  We will cite the specific theorems and corollaries we use from~\cite{TMK_markov} when applicable. %Section~\ref{sec:markov}.  We encourage the reader to become familiarized with this section before reading further.  

We begin by deriving the transition matrix for the Markov rewards process.  Consider Table~\ref{tab:state1_transitions_all}.  The table shows all outbound transitions given the channel noise realization, however to construct a transition matrix from this information, we must combine all outbound transitions to the same state.  For example, rows 1 and 5 both show transitions from State~1 to State~5, and so to get $\trans{1}{5}$, the total probability of transitioning from State~1 to~5, we must add the corresponding probabilities under the $\Prob(\zall)$ columns.  This is given by 

\begin{align}
	\trans{1}{5} &= \Prob(\zall = (0, 0, 0)) + \Prob(\zall = (1, 0, 0)) \\
	&= (1 -\epsilon_i)(1 -\epsilon_j)(1 - \epsilon_k) + \epsilon_i(1 -\epsilon_j)(1 - \epsilon_k) \\
	&= (1 -\epsilon_j)(1 - \epsilon_k).
\end{align}
We continue in this manner to find $\trans{i}{j}$ for all $i, j \in \myState \triangleq \{1, 2, \ldots, 6\}$ to populate the transition matrix $\transM$ where the $(i, j)$th entry of $\transM$ is given by $\trans{i}{j}$.

Our analysis also requires the derivation of several rewards matrices.  Consider deriving the rewards matrix $\orhoe$, whose $(i,j)$th element,  $\rholcarg{i}{j}$, gives the expected number of equations (rewards) received by user~$i$ for transitioning from State~$i$ to State~$j$, where $i, j \in \myState$.  Continuing with our previous example, say we would like to derive $\rholcarg{1}{5}$.  Of the two possible paths for an outbound transition from State~1 to~5, only one, when $\zall = (0, 0, 0)$, is associated with a reward in the column of $\rhoc$ in Table~\ref{tab:state1_transitions_all}.  We therefore calculate $\rholcarg{1}{5}$ by weighting the reward with the conditional probability of the transition resulting from the channel noise $\zall = (0, 0, 0)$ given that the transition from State~1 to~5 occurred.  Therefore,

\begin{align}
	\rholcarg{1}{5} &= \Prob(\zall = (0, 0, 0) | \textrm{transition from State~1 to~5}) \cdot 1 \\
	&= \frac{(1 -\epsilon_i)(1 -\epsilon_j)(1 - \epsilon_k)}{(1 -\epsilon_i)(1 -\epsilon_j)(1 - \epsilon_k) + \epsilon_i(1 -\epsilon_j)(1 - \epsilon_k)} \\
	&= (1 -\epsilon_i).
\end{align}
By performing this calculation for all $i, j \in \myState$, we are able to populate the entire rewards matrix $\orhoe$.  Similarly, we can calculate the $|\myState| \times |\myState|$ rewards matrices, $\orhoj$, $\orhok$, $\orhoqi$, $\orhoqj$, $\orhoqk$, and $\orhoqstar$ corresponding to each rewards column in Table~\ref{tab:state1_transitions_all}.

Given the transition matrix, for each reward matrix, we can use the results in \cite[Corollary~\markovcor{}]{TMK_markov} to calculate the expected accumulated reward before absorption \emph{each time} the Markov rewards process is reset and allowed to run until absorption.  To find the \emph{total} expected reward, we must first find a lower bound for $\Mstar$, the number of times the Markov process is \emph{reset} before user~$j$ or~$k$ has met their distortion constraint.  That is, $\Mstar$ is the number of times the Markov rewards process has reached an absorbing state after having been restarted in the initial state.  Therefore, $\Mstar$ also represents the number of chains built by user~$i$ before user~$j$ or~$k$ has satisfied their distortion constraint.

Say that for $r \in \{i, j, k\}$, user~$r$ requires $N(1 - \hat{d}_r)$ symbols at the beginning of the chaining algorithm, where $\hat{d}_r \in (0, \epsilon_r)$.  Furthermore, suppose that of the two users for whom we are targeting point-to-point optimal performance, user~$u$ is \emph{not} the bottleneck user, i.e., 

\begin{align}
\label{eq:u_bottleneck}
	u = \argmin_{r \in \{j, k\}} w_r(\hat{d}_r),
\end{align}
where $w_r(\hat{d}_r)$ is given by~\eqref{eq:wioptimal}.  For $l = 1, 2, \ldots $, let $\oRhojl$ be the expected accumulated reward for user~$u$ during the $l$th time the Markov rewards process has been reset.  Then we can define $\Mstar$ as

\begin{align}
	\Mstar = \min\left\{ m : \sum_{l = 1}^{m} \oRhojl \geq N(1 - \hat{d}_{u}) \right\},
\end{align}
where the $\oRhojl$ are i.i.d.\ and $\mathbb{E}\oRhojl$ can be calculated as in \cite[Corollary~\markovcor{}]{TMK_markov}.

We see that $\Mstar$ is a stopping rule~\cite{Gallager96}.  In order to calculate $\mathbb{E}\Mstar$, we could use the discrete version of the renewal equation~\cite[Chapter~2]{MitovOmey14}.  However, to find a lower bound for $\mathbb{E}\Mstar$, we may simply use Wald's equation\cite{Gallager96}.  Let $\sigma_m = \sum_{l = 1}^{m} \oRhojl$.  Then, by Wald's equation, 

\begin{align}
	\mathbb{E}\Mstar &= \frac{\mathbb{E}\sigma_{\Mstar}}{\mathbb{E}\oRhojl} \\
	\label{eq:EM_lb}
	&\geq \frac{N(1 - \hat{d}_u)}{\mathbb{E}\oRhojl},
\end{align}
where again, $\mathbb{E}\oRhojl$ is calculated as in \cite[Corollary~\markovcor{}]{TMK_markov}.  
Now, let 

\begin{align}
\label{eq:Mstar}
	M^{-} = \left\lfloor{\frac{N(1 - \hat{d}_u)}{\mathbb{E}\oRhojl}}\right\rfloor
\end{align}
be the result of applying the floor function to the right-hand-side of~\eqref{eq:EM_lb}.  We have that $M^{-}$ gives a lower bound for the expected number of times the Markov chain is reset.  If user~$i$, the user building the chains, can meet their distortion constraint within the $M^{-}$ times the Markov rewards process is being run, then all users will be point-to-point optimal.  This is because user~$i$ is able to decode all his required symbols despite the fact that we are targeting optimal performance for the other two users.

Let $\oRhoel$ be the expected number of symbols in the chain that can be decoded in the $l$th run of the Markov rewards process given that the decoding absorbing state, State~6, was reached after having started in the initial state, State~1.  
%From the discussion leading to the derivation of Corollary~\ref{cor:barRinfi}, 
We mention that we can easily calculate $\oRhoel$ as in~\cite[Theorem~\markovthm{}]{TMK_markov}.  %we can similarly argue that 
%we have that $\mathbb{E}\oRhoel$ can be derived by substituting the transition matrix and rewards matrix $\orhoe$ into Theorem~\ref{thm:hatR_inf} and simply taking the $(1, 6)$th element from the resultant matrix.  This is because the rewards matrix $\orhoe$ is used to calculate the expected number of symbols in the chain built by user~$i$, however, user $i$ is able to decode these symbols only if State~6, the decoding absorbing state, is reached.
%
%By Theorem~\ref{thm:barRinf_absorb} of Section~\ref{subsec:unscaled_rewards}, we can calculate $\mathbb{E}\oRhoel$ from the transition matrix, and the rewards matrix $\orhoe$.  
%
Now, let $\overline{\sigma}$ be the expected number of symbols that can be decoded in $M^{-}$ iterations of the Markov rewards process.  By the linearity of the expectation operator we have

\begin{align}
	\overline{\sigma} &= \sum_{l = 1}^{M^{-}} \mathbb{E}\oRhoel \\
	\label{eq:normalized_chain_received}
	&= M^{-} \times \mathbb{E}\oRhoel.
\end{align}
If the right-hand-side of~\eqref{eq:normalized_chain_received} is greater than $N(1 - \hat{d}_i)$, the fraction of symbols user~$i$ requires, then we are point-to-point optimal for all users.  Combining~\eqref{eq:normalized_chain_received} and~\eqref{eq:Mstar}, we see that this happens when 

\begin{align}
\label{eq:chain_optimality}
%	\frac{1}{N}\left\lfloor{\frac{N(1 - \hat{d}_u)}{\mathbb{E}\oRhojl}}\right\rfloor \mathbb{E}\oRhoel \geq 1 - \hat{d}_i.
	\left\lfloor{\frac{N(1 - \hat{d}_u)}{\mathbb{E}\oRhojl}}\right\rfloor \geq \frac{N(1 - \hat{d}_i)}{\mathbb{E}\oRhoel}.
\end{align}

\begin{theorem}
\label{thm:chaining_sufficient}
	Let $u$ be the user satisfying~\eqref{eq:u_bottleneck}.  Then $w^{+}\dvec$ is $\dvec$-achievable if~\eqref{eq:chain_optimality} is satisfied, where $\mathbb{E}\oRhojl$ is calculated from the transition matrix and rewards matrix $\overline{\rho}_{u}$ via \cite[Corollary~\markovcor{}]{TMK_markov}, and $\mathbb{E}\oRhoel$ is calculated from the transition matrix and rewards matrix $\orhoe$ via \cite[Theorem~\markovthm{}]{TMK_markov}.
\end{theorem}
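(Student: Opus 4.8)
The plan is to show that, whenever~\eqref{eq:chain_optimality} holds, the restricted chaining algorithm hands every user exactly as many source symbols as the point-to-point bound allows, so that the overall latency equals the outer bound $w^{+}\dvec$; achievability in the sense of Definition~\ref{def:achievable_general_feedback} then follows from a law-of-large-numbers argument as $N\to\infty$. Since the unrestricted chaining algorithm of Section~\ref{subsubsec:chaining_algorithm} only ever adds transmission opportunities — the combinations in~\Crefrange{eq:linear_comb_available_a}{eq:linear_comb_available_c} — on top of the restricted one, it suffices a fortiori to prove the statement for the restricted version.

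First I would record the easy half: the restricted chaining algorithm is point-to-point optimal for users $j$ and $k$ by construction. Inspecting the outgoing-transition tables (Tables~\ref{tab:state1_transitions_all}--\ref{tab:state4_transitions_all}), every time user $r\in\{j,k\}$ receives a transmission the symbol destined for $r$ is immediately replaced by a fresh instantly-decodable, distortion-innovative symbol, so $r$ acquires exactly one new source symbol in each of its non-erased slots. Hence running the chaining phase until user $u$'s residual demand $N(1-\hat{d}_u)$ is cleared (with $u$ the non-bottleneck of $\{j,k\}$, as in~\eqref{eq:u_bottleneck}) consumes an additional latency of exactly $w_u(\hat{d}_u)$, and no more than $\max(w_j(\hat{d}_j), w_k(\hat{d}_k))$ before both are served; added to the latency already spent in the earlier phases, during which $j$ and $k$ likewise make point-to-point optimal progress, this places each of $j,k$ on its optimal trajectory, finishing at $w_r(d_r)$.

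The work is in the other half: counting what user $i$ decodes over the same span. Model the algorithm as the Markov rewards process with absorbing states of Table~\ref{tab:table_of_states}, assembled into the transition matrix $\transM$ and the reward matrices $\overline{\rho}_{u}$ and $\orhoe$ exactly as in the worked examples preceding the statement. Each pass from State~1 to absorption is an i.i.d.\ replica, so $M^{*}=\min\{m:\sum_{l=1}^{m}\oRhojl\geq N(1-\hat{d}_u)\}$ is a stopping rule, and Wald's equation gives $\mathbb{E}M^{*}=\mathbb{E}\sigma_{M^{*}}/\mathbb{E}\oRhojl\geq N(1-\hat{d}_u)/\mathbb{E}\oRhojl$; hence $M^{-}$ of~\eqref{eq:Mstar} is a genuine lower bound on the expected number of chains built before $u$ is served, with $\mathbb{E}\oRhojl$ read off from $(\transM,\overline{\rho}_{u})$ via \cite[Corollary~\markovcor]{TMK_markov}. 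On a single pass, the expected number of chain symbols user $i$ ultimately decodes is $\mathbb{E}\oRhoel$, read off from $(\transM,\orhoe)$ via \cite[Theorem~\markovthm]{TMK_markov}; by linearity of expectation over $M^{-}$ passes the expected total is $\overline{\sigma}=M^{-}\,\mathbb{E}\oRhoel$, which is~\eqref{eq:normalized_chain_received}. Condition~\eqref{eq:chain_optimality} is exactly $\overline{\sigma}\geq N(1-\hat{d}_i)$, so user $i$ clears its residual demand within the number of chains built before $u$ finishes and is also on its optimal trajectory $w_i(d_i)$. With all three users point-to-point optimal, the overall latency is $\max_{r\in\mathcal{U}} w_r(d_r)=w^{+}\dvec$; finally, since $M^{*}$ and the per-pass rewards are sums of i.i.d.\ quantities, a concentration argument upgrades these expectation statements to the ``for every $\delta>0$, for all sufficiently large $N$'' form demanded by Definition~\ref{def:achievable_general_feedback}.

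I expect the main obstacle to be the bookkeeping around the random stopping time together with the passage from expectations to achievability. Wald's equation yields only one-sided control on $\mathbb{E}M^{*}$, owing to the unavoidable overshoot $\sigma_{M^{*}}-N(1-\hat{d}_u)\geq 0$; one must also check that $\mathbb{E}\oRhoel$ as computed by \cite[Theorem~\markovthm]{TMK_markov} correctly aggregates chains completed directly in the decoding state (State~6) and chains completed only later through the prioritized queue $\Qstar$, so that $\overline{\sigma}$ is not an overcount of user $i$'s reconstructions; and one must verify that the floor in $M^{-}$, the overshoot, and the tolerance $\delta$ do not conspire to erode the margin in~\eqref{eq:chain_optimality} as $N\to\infty$. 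Everything else is assembly of pieces already in hand: the point-to-point optimality of the two-user sub-scheme from~\cite{TMKS_TIT20}, and the absorbing Markov-rewards identities of~\cite{TMK_markov}.
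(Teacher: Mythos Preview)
Your proposal is correct and follows essentially the same route as the paper: the argument in Section~\ref{subsec:chaining_algorithm_analysis} leading up to the theorem \emph{is} the paper's proof, and you have reproduced its structure faithfully---restrict to the non-opportunistic chaining algorithm, observe point-to-point optimality for users $j$ and $k$ by construction, model user $i$'s decoding via the absorbing Markov rewards process, lower-bound $\mathbb{E}\Mstar$ by Wald's equation to obtain $M^{-}$, and use linearity to conclude that~\eqref{eq:chain_optimality} guarantees $\overline{\sigma}\geq N(1-\hat{d}_i)$. Your closing caveats about overshoot, the role of $\Qstar$ in accounting for chains absorbed in State~5, and the passage from expectations to the $(N,\delta)$ achievability statement are well-taken and in fact go slightly beyond what the paper makes explicit.
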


Finally, we again remark that the analysis we have just described is only a sufficient condition for a restricted version of the chaining algorithm in which we do not opportunistically send linear combinations of the form in~\Crefrange{eq:linear_comb_available_a}{eq:linear_comb_available_c}.  Therefore, we expect the unrestricted algorithm to perform better.

\subsection{Operational Significance of Theorem~\ref{thm:chaining_sufficient}}
\label{sec:operational_meaning}

Consider a hypothetical situation in which we have queues $\Q{1, 2}, \Q{1, 3}$ and $\Q{2, 3}$.   We consider quadratic distortions in which for $i \in \{2, 3\}$,  $d_i = \epsilon_i^2$, and we fix $\epsilon_1 = 0.1$, $\epsilon_3 = 0.6$ and vary $\epsilon_2 \in (0.2, 0.6)$.  

We illustrate the chaining algorithm when user~1 is the user who builds chains and we send linear combinations of the symbols in $\Q{1, 2}$ and $\Q{1, 3}$ as if users~2 and~3 were the only users in the network.  In this case, for $i \in \{2, 3\}$, user~$i$'s point-to-point optimal latency is given by $\wi = (1 - \epsilon_i^2)/(1 - \epsilon_i)$, and since $\wi$ is an increasing function of $\epsilon_i \in [0, 1)$, we have that between users~2 and~3, user~2 is \emph{not} the bottleneck user.  That is, in~\eqref{eq:chain_optimality}, user~2 takes the place of user~$u$, and since user~1 is building the chains, user~1 takes the place of user~$i$.

We rearrange~\eqref{eq:chain_optimality} of Theorem~\ref{thm:chaining_sufficient} to find a lower bound for $\hat{d}_i$, and since $\hat{d}_u = \epsilon_u^2$, we  plot this lower bound as a function of $\epsilon_u$ in Figure~\ref{fig:operational_meaning}.  From this figure, we can read which values of $\hat{d}_i$ would yield optimal performance for a given value of $\epsilon_u$ by considering all values of $\hat{d}_u$ above the lower bound.  Recall that user~$i$ is able to decode symbols in the chain only if there have been two consecutive transmissions for which user~$i$ is the only user to have received the transmission.  We see that the probability of this event increases as $\epsilon_u$ increases in Figure~\ref{fig:operational_meaning}.  Therefore, user~$i$ is able to achieve lower distortions as $\epsilon_u$ increases.  

Finally, we again mention that Theorem~\ref{thm:chaining_sufficient} merely gives a conservative \emph{sufficient} condition for optimality and ignores other network coding opportunities in its analysis.  Therefore, it is possible that point-to-point optimal performance can still be met if user~$i$ has a distortion constraint below the lower bound of Figure~\ref{fig:operational_meaning}.

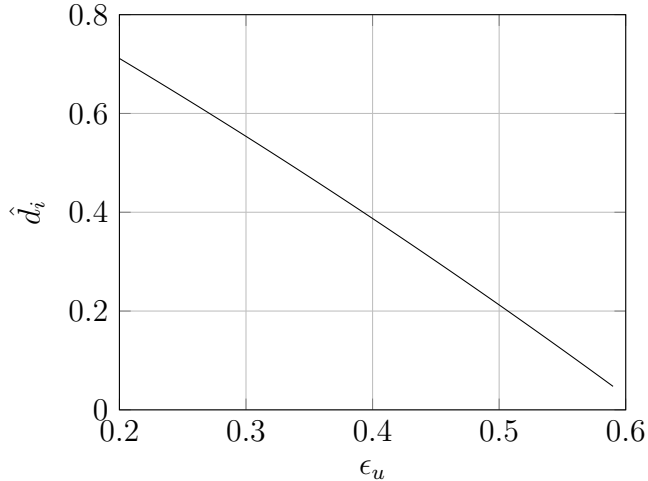
\begin{figure}
	\centering
	\setlength\figurewidth{2.65in} 
	\setlength\figureheight{2.07in} 
	% This file was created by matlab2tikz v0.2.1.
% Copyright (c) 2008--2012, Nico Schlömer <nico.schloemer@gmail.com>
% All rights reserved.
% 
% The latest updates can be retrieved from
%   http://www.mathworks.com/matlabcentral/fileexchange/22022-matlab2tikz
% where you can also make suggestions and rate matlab2tikz.
% 
% 
% 
\begin{tikzpicture}

\begin{axis}[%
view={0}{90},
width=\figurewidth,
height=\figureheight,
scale only axis,
xmin=0.2, xmax=0.6,
xlabel={$\epsilon_{u}$},
xmajorgrids,
ymin=0, ymax=0.8,
ylabel={$\hat{d}_i$},
ymajorgrids]
\addplot [
solid
]
coordinates{
 (0.2,0.711373104662686)(0.21,0.695995143392607)(0.22,0.680529923158764)(0.23,0.664977674184357)(0.24,0.649338602849747)(0.25,0.633612892531508)(0.26,0.617800704409526)(0.27,0.601902178243395)(0.28,0.585917433119295)(0.29,0.569846568168493)(0.3,0.55368966325855)(0.31,0.537446779658295)(0.32,0.521117960677552)(0.33,0.504703232282582)(0.34,0.488202603688164)(0.35,0.471616067927191)(0.36,0.454943602398622)(0.37,0.438185169394609)(0.38,0.421340716607562)(0.39,0.404410177617902)(0.4,0.387393472363215)(0.41,0.370290507589491)(0.42,0.353101177285094)(0.43,0.335825363098112)(0.44,0.318462934737666)(0.45,0.301013750359774)(0.46,0.283477656938324)(0.47,0.265854490621674)(0.48,0.248144077075411)(0.49,0.230346231811744)(0.5,0.212460760506007)(0.51,0.194487459300731)(0.52,0.176426115097705)(0.53,0.158276505838452)(0.54,0.140038400773524)(0.55,0.12171156072098)(0.56,0.103295738314447)(0.57,0.0847906782410863)(0.58,0.0661961174698284)(0.59,0.0475117854701842) 
};

\end{axis}
\end{tikzpicture}
	\caption{The lower bound from Theorem~\ref{thm:chaining_sufficient} that delineates a conservative boundary of distortion values for which the minmax optimal latency can be achieved.  }
	\label{fig:operational_meaning}
\end{figure}

\section{Simulations}
\label{sec:simulations}

We demonstrate the performance of the algorithms in Section~\ref{sec:three_users} with simulations.  We consider distortions for which $d_i = \epsilon_i^{2}$ for all $i \in \mathcal{U}$.  In Fig.~\ref{fig:t_star}, we choose a blocklength of $N=10^7$, fix $\epsilon_1 = 0.3$, $\epsilon_2 = 0.4$ and vary $\epsilon_3$ on the $x$-axis while plotting the total number of channel symbols sent per source symbol on the $y$-axis for different coding schemes.  
%instantly decodable, distortion-innovative transmissions sent on the $y$-axis.  
%
%\newlength\figureheight 
%\newlength\figurewidth 
\begin{figure}
	\centering
	\setlength\figurewidth{2.65in} 
\setlength\figureheight{2.07in} 
	% This file was created by matlab2tikz v0.2.1.
% Copyright (c) 2008--2012, Nico Schlömer <nico.schloemer@gmail.com>
% All rights reserved.
% 
% The latest updates can be retrieved from
%   http://www.mathworks.com/matlabcentral/fileexchange/22022-matlab2tikz
% where you can also make suggestions and rate matlab2tikz.
% 
% 
% 
\begin{tikzpicture}

\scriptsize

\begin{axis}[%
view={0}{90},
width=\figurewidth,
height=\figureheight,
scale only axis,
xmin=0.85, xmax=0.95,
xlabel={\scriptsize $\epsilon{}_\text{3}$},
xmajorgrids,
grid style={dotted, thick},
ymin=1.15, ymax=1.65,
ylabel={\scriptsize Channel Symbols Sent per Source Symbol},
ymajorgrids,
grid style={dotted, thick},
legend style={nodes=right}]
\addplot [
color=black,
solid
]
coordinates{
 (0.85,1.63951579495798)(0.86,1.61861464844961)(0.87,1.59617446412151)(0.88,1.57517375584416)(0.89,1.55431985096308)(0.9,1.53333057539682)(0.91,1.51154970934066)(0.92,1.49059085010352)(0.93,1.46928116062468)(0.94,1.44938128693009)(0.95,1.42717339035088) 
};

\addlegendentry{\scriptsize Segmentation-based};

\addplot [
color=black,
only marks,
mark=o,
mark options={solid}
]
coordinates{
 (0.85,1.2855624)(0.86,1.2751091)(0.87,1.2640831)(0.88,1.2538016)(0.89,1.2436843)(0.9,1.2336025)(0.91,1.2232663)(0.92,1.2133772)(0.93,1.2034194)(0.94,1.1941516)(0.95,1.1839635) 
};

\addlegendentry{\scriptsize Simulations};

\addplot [
color=black,
dashed
]
coordinates{
 (0.85,1.28528689843124)(0.86,1.2747660790011)(0.87,1.26433699565063)(0.88,1.25399790893119)(0.89,1.24374711701383)(0.9,1.23358295457952)(0.91,1.22350379167497)(0.92,1.21350803280185)(0.93,1.2035941147971)(0.94,1.19376051087361)(0.95,1.18400571842039) 
};

\addlegendentry{\scriptsize Solution to~\eqref{eq:LP}};

\end{axis}
\end{tikzpicture}
	\caption{The normalized number of channel symbols sent per source symbol.  We fix $\epsilon_1=0.3$, $\epsilon_2=0.4$ and vary $\epsilon_3$ on the x-axis.  We show the number of uncoded transmissions sent via~\eqref{eq:LP}, alongside what is obtained from simulations  for a blocklength of $N=10^7$.  Finally, we also plot the latency required to achieve the equivalent distortion values \emph{without} feedback based on a segmentation-based coding scheme.}
%	\caption{The total normalized number of instantly decodable, distortion-innovative packets that can be sent.}
	\label{fig:t_star}
\end{figure}
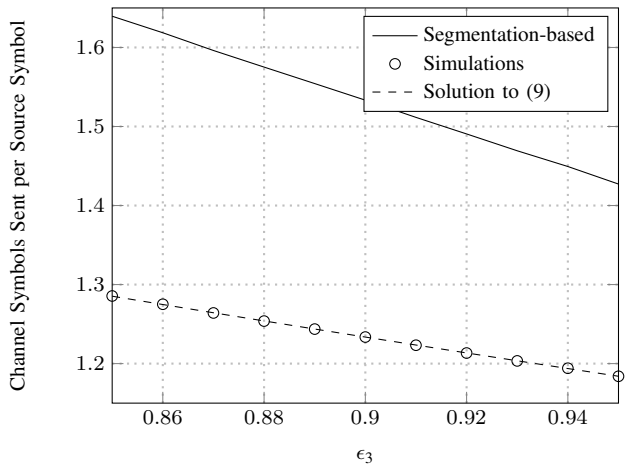
The first coding scheme plotted is based on simulations and plots the total number of instantly decodable, distortion-innovative transmissions sent.  Alongside this curve, we plot the number of possible innovative transmissions suggested by the solution of~\eqref{eq:LP}.  We observe a close resemblance in this plot and the empirical simulation curve. Finally, we know that although each user may not have their final distortion constraint met after $Nt^{*}$ transmissions, the provisional distortion they \emph{do} achieve after $Nt^{*}$ transmissions is optimal.  Say instead, that we were given these provisional distortion values from the onset and asked what latency would be required to achieve these distortions if feedback were not available.  The final plot shows this required latency for the segmentation-based coding scheme of~\cite{LTKS_ISIT14}, which does not incorporate feedback. The gap between these curves is indicative of the benefit that feedback provides.
%In order to achieve the same distortion values with We also plot the latency required by the segmentation-based coding scheme of~\cite{} that does not incorporate feedback, if it were used to achieve the optimal distortions at $t^{*}$.  The gap between these curves is indicative of the benefit that feedback allows.

In practice, the values of $\epsilon_i$ are much lower than what we have chosen.  The values for $\epsilon_3$, for example, were deliberately chosen to be high ($\epsilon_3 \geq 0.85$) as we have found that for values even as high as $\epsilon_3 = 0.8$, we achieve point-to-point optimality for all users (see Theorems~\ref{thm:w_minus} and~\ref{thm:all}).  If we increase $\epsilon_3$ even higher however, we observe a situation where many symbols destined to user~3 are erased, and so when the stopping condition of the algorithm in Section~\ref{subsec:instantly_decodable} is reached, we are left with queues $Q_3$, \Q{1,3} and \Q{2,3}.
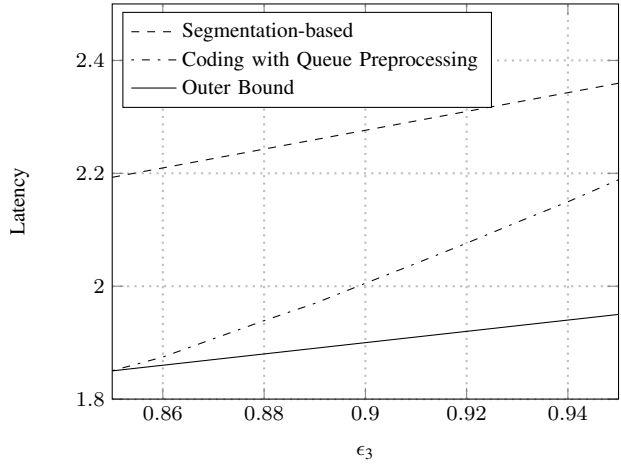
\begin{figure}
	\centering
	\setlength\figurewidth{2.65in} 
	\setlength\figureheight{2.07in} 
	% This file was created by matlab2tikz v0.2.1.
% Copyright (c) 2008--2012, Nico Schlömer <nico.schloemer@gmail.com>
% All rights reserved.
% 
% The latest updates can be retrieved from
%   http://www.mathworks.com/matlabcentral/fileexchange/22022-matlab2tikz
% where you can also make suggestions and rate matlab2tikz.
% 
% 
% 
\begin{tikzpicture}

\scriptsize

\begin{axis}[%
view={0}{90},
width=\figurewidth,
height=\figureheight,
scale only axis,
xmin=0.85, xmax=0.95,
xlabel={$\epsilon{}_\text{3}$},
xmajorgrids,
grid style={dotted, thick},
ymin=1.8, ymax=2.5,
ylabel={Latency},
ymajorgrids,
grid style={dotted, thick},
legend style={nodes=right, at={(0.02,0.98)}, anchor=north west}]
\addplot [
color=black,
dashed
]
coordinates{
 (0.85,2.19285714285714)(0.86,2.20952380952381)(0.87,2.22619047619048)(0.88,2.24285714285714)(0.89,2.25952380952381)(0.9,2.27619047619048)(0.91,2.29285714285714)(0.92,2.30952380952381)(0.93,2.32619047619048)(0.94,2.34285714285714)(0.95,2.35952380952381) 
};

\addlegendentry{\scriptsize Segmentation-based};

\addplot [
color=black,
dash pattern=on 1pt off 3pt on 3pt off 3pt
]
coordinates{
 (0.85,1.8503066)(0.86,1.874513)(0.87,1.9069384)(0.88,1.9389846)(0.89,1.9695207)(0.9,2.0051401)(0.91,2.0403425)(0.92,2.0762682)(0.93,2.1133212)(0.94,2.1495253)(0.95,2.1885717) 
};

\addlegendentry{\scriptsize Coding with Queue Preprocessing};

\addplot [
color=black,
solid
]
coordinates{
 (0.85,1.85)(0.86,1.86)(0.87,1.87)(0.88,1.88)(0.89,1.89)(0.9,1.9)(0.91,1.91)(0.92,1.92)(0.93,1.93)(0.94,1.94)(0.95,1.95) 
};

\addlegendentry{\scriptsize  Outer Bound};

\end{axis}
\end{tikzpicture}
	\caption{The latency when we are forced to invoke the algorithms of Section~\ref{subsec:non_instant_coding}.  We fix $d_i = \epsilon_{i}^{2}$, $\epsilon_1 = 0.3$, $\epsilon_2 = 0.4$ and vary $\epsilon_3$ on the x-axis.  }
	\label{fig:t_total}
\end{figure}
When this occurs, we have not yet satisfied all users, and so we resort to the coding schemes proposed in Section~\ref{subsec:non_instant_coding}.  Fig.~\ref{fig:t_total} shows a plot where each point required the invocation of the queue preprocessing scheme for channel coding in Section~\ref{subsec:channel_coding}.  It plots the \emph{overall} latency required to achieve distortions $d_i = \epsilon_i^{2}$ for the values of $\epsilon_i$ given earlier.  Alternatively, if the chaining algorithm of Section~\ref{subsubsec:chaining_algorithm} is used, we find that the latency coincides with the outer bound.  Again, the segmentation-based scheme is provided as a benchmark along with the outer bound $w^{+}\dvec$.

\ifarxiv
	%\chapter{Erasure Source-Broadcast with Feedback Appendix}
\ifarxiv
\appendices
\fi
% Analysis for three users
%\appendices
%\section*{Appendix}
\section{Proof of Supporting Lemmas of Theorem~\ref{thm:LP_instantly_decodable}}

%\color{red}

% --------------------------------------------------------
% Lemma: LP solves Min
% --------------------------------------------------------
\begin{myLemma}
\label{lem:LPToMin}
	Let $(\bT{1}, \bT{2}, \bT{3})$ be the optimal solution of the linear program in~\eqref{eq:LP}.  Then $(\bT{1}, \bT{2}, \bT{3})$ satisfies~\eqref{eq:Ti_min}.
\end{myLemma}
\begin{proof}
To show that the optimal solution of~\eqref{eq:LP} must satisfy \eqref{eq:Ti_min}, we note that in order to maximize the objective function in \eqref{eq:LP}, one of the inequality constraints for \bT{i} must be met with equality.  Otherwise, if \bT{i}, \bT{j}, \bT{k} is a purported optimal solution where \bT{i} does not meet an inequality constraint with equality, we can find some $\delta > 0$ such that $\bT{i} + \delta$, \bT{j}, \bT{k} is also feasible and gives a strictly larger objective function value.
\end{proof}

% --------------------------------------------------------
% Lemma: Min Solves LP
% --------------------------------------------------------

\begin{myLemma}
\label{lem:MinToLP}
	Let $(\bT{1}, \bT{2}, \bT{3})$ satisfy~\eqref{eq:Ti_min}, where $\bT{i} > 0$ for all $i \in \mathcal{U}$.  Then $(\bT{1}, \bT{2}, \bT{3})$ is an optimal solution of~\eqref{eq:LP}.
\end{myLemma}
\begin{proof}
	We show that $(\bT{1}, \bT{2}, \bT{3})$ is a local maximum of the linear program in~\eqref{eq:LP}.  Consider a neighbouring  element in the feasible region of~\eqref{eq:LP} of the form $(\bT{1} + \Delta_1, \bT{2} + \Delta_2, \bT{3} + \Delta_3)$.  We show that the objective function evaluated at this new element is not greater than the objective function evaluated at $(\bT{1}, \bT{2}, \bT{3})$.  We do this by showing that the linear program

% --------------------------------------------------------
% Equation: Formulate Optimization for Local Max
% --------------------------------------------------------
	
\begin{equation}
\label{eq:local}
\begin{aligned}
	& \underset{\Delta_{1}, \Delta_{2}, \Delta_{3}}{\text{max}}
	&& \!\! \Delta_{1} + \Delta_{2} + \Delta_{3}\\
	& \text{subject to}
%\end{aligned}
%\end{equation}
%\[
%\begin{aligned}
	&& 	\!\! \bT{i} + \Delta_i \leq \frac{\Qp{i}{\bT{j} + \Delta_j}{\bT{k} + \Delta_k} }{1 - \epsilon_i},\\
	&&& \!\!\bar{T}_i + \Delta_i \leq \frac{\Q{j,k}^{+}}{1 - \epsilon_j\epsilon_k} \quad \forall i \in \mathcal{U}, j,k \in \mathcal{U}\setminus \{i\} , j \neq k.
\end{aligned}
%\]
\end{equation}
has an optimal value no greater than zero.  

By assumption, $(\bT{1}, \bT{2}, \bT{3})$ satisfies~\eqref{eq:Ti_min}, and so for all $i \in \mathcal{U}$, $\bT{i}$ meets one of the inequality constraints in~\eqref{eq:LP} with equality. Let $\mathcal{T}(\bT{1}, \bT{2}, \bT{3})$ be the set of $i$ for which \bT{i} meets the first inequality constraint with equality, i.e., 
%\begin{equation}
%\begin{eqnarray}
% --------------------------------------------------------
% Equation: Define T Set
% --------------------------------------------------------
\begin{flalign}
\label{eq:Tset}
	&\mathcal{T}(\bT{1}, \bT{2}, \bT{3}) = \\ \nonumber
	& \left\{i \in \mathcal{U} \mid \bT{i} = \frac{\Qp{i}{\bT{j}}{\bT{k}} }{1 - \epsilon_i}, \bT{i} > 0, j, k \in \mathcal{U} \setminus \{i\}, j \neq k\right\}.
%\end{equation}
%\end{eqnarray}
\end{flalign}
When the context is clear, we will at times use $\mathcal{T}$ to refer to $\mathcal{T}(\bT{1}, \bT{2}, \bT{3})$.  Furthermore, for $i \in \mathcal{T}$, we will at times find it convenient to emphasize the linear dependence of \bT{i} on \bT{j}, and \bT{k} by writing (c.f.~\eqref{eq:Qi_plus})
\begin{equation}
\label{eq:linear_Qi_plus}
	\frac{Q_i^{+}(\bT{j}, \bT{k})}{1 - \epsilon_i} = k_i + a_{ik}\bT{j} + a_{ij} \bT{k},
\end{equation}
where 
\begin{equation}
\label{eq:a_ik}
%	a_{ik} = \epsilon_i(1 - \epsilon_k)/(1 - \epsilon_i) > 0.
	a_{ik} = \frac{\epsilon_i}{(1 - \epsilon_i)}(1 - \epsilon_k) > 0,
\end{equation}
\begin{equation}
\label{eq:k_i}
	k_i = \frac{\bT{0}\epsilon_i(1 - \epsilon_j)(1 - \epsilon_k)}{1 - \epsilon_i} > 0,
\end{equation}
and \bT{0} is given by~\eqref{eq:T0}.

Now, we have that if $i \notin \mathcal{T}$, then $\bT{i} = \Q{j,k}^{+}/(1 - \epsilon_j\epsilon_k)$.  From the second inequality of~\eqref{eq:local}, we have then that for these values of $i$, $\Delta_i \leq 0$.  Notice  however, that an optimal solution of~\eqref{eq:local} must have all of its components non-negative.  Otherwise, a larger-valued objective function could be obtained by setting any negative component to zero.  
We can therefore upper bound the optimal value for the linear program in~\eqref{eq:local} with the negated optimal value for the \emph{relaxed} linear program

%\begin{equation}
%\label{eq:delta_half}
%\begin{aligned}
%	& \underset{\Delta_{1}, \Delta_{2}, \Delta_{3}}{\text{max}}
%	&& \!\! \Delta_{1} + \Delta_{2} + \Delta_{3}\\
%	& \text{subject to}
%%\end{aligned}
%%\end{equation}
%%\[
%%\begin{aligned}
%	&& 	\!\! \bT{i} + \Delta_i \leq \frac{\Qp{i}{\bT{j} + \Delta_j}{\bT{k} + \Delta_k} }{1 - \epsilon_i}, \\
%	&&& \quad \forall i \in \mathcal{T}, j,k \in \mathcal{T} \setminus\{i\}, j\neq k\\
%	&&& \Delta_j = 0, \quad j \notin \mathcal{T}.
%\end{aligned}
%%\]
%\end{equation}
%
%We further simplify the constraints of~\eqref{eq:delta_half} by using~\eqref{eq:Tset} and explicitly evaluating \Qp{i}{\cdot}{\cdot} to arrive at the linear program
% --------------------------------------------------------
% Equation: Optimization for Deltas
% --------------------------------------------------------
\begin{equation}
\label{eq:delta}
\begin{aligned}
	& \underset{\Delta_{1}, \Delta_{2}, \Delta_{3}}{\text{min}}
%	& \underset{\Delta_{1}, \Delta_{2}, \Delta_{3}}{\text{max}}
	&& \!\! -(\Delta_{1} + \Delta_{2} + \Delta_{3})\\
%	&& \!\! \Delta_{1} + \Delta_{2} + \Delta_{3}\\
	& \text{subject to}
%\end{aligned}
%\end{equation}
%\[
%\begin{aligned}
	&& 	\!\! \Delta_i \leq a_{ik}\Delta_{j} + a_{ij}\Delta_{k} \\
	&&& \quad \forall i \in \mathcal{T}, j,k \in \mathcal{U} \setminus\{i\}, j\neq k\\
	&&& \Delta_j = 0, \quad \forall j \notin \mathcal{T}.
\end{aligned}
%\]
\end{equation}
where we have used~\eqref{eq:Tset} and \eqref{eq:a_ik} to simplify the constraints of~\eqref{eq:local}.

We now consider several cases.  Notice first that if $0 \leq |\mathcal{T}| \leq 1$, the constraints of~\eqref{eq:delta} clearly show that the optimal value of~\eqref{eq:delta} is zero. Alternatively, if $2 \leq |\mathcal{T}| \leq 3$, we consider the Lagrange dual function, $g(\lambda)$, given by~\cite[Section~5.2.1]{boyd2004convex}
\begin{equation}
g(\lambda) = 
	\begin{cases}
   		0 & \text{if } A^T \lambda = -c, \\
   		-\infty & \text{else,}
	\end{cases}
\end{equation}  
where we have assumed that the objective function and inequality constraints of~\eqref{eq:delta} have been written as $c^T\Delta$ and $A\Delta \leq 0$ respectively for some $\Delta$, $A$ and $c$ that will be subsequently defined depending on $|\mathcal{T}|$.  For any $\lambda \succeq 0$, $g(\lambda)$ gives a lower bound on the optimal value of~\eqref{eq:delta}.  If we can therefore find a $\lambda \succeq 0$ such that 
\begin{equation}
\label{eq:lambda}
A^T \lambda = -c, 
\end{equation}
we will have thus shown that the optimal value of~\eqref{eq:local} is no greater than zero, as was our original goal.

%\begin{enumerate}[\label=\bfseries $|\mathcal{T}| = 2$ \arabic*:]
%\begin{enumerate}[label=\bfseries Exercise \arabic*:]
\begin{enumerate}

	% --------------------------------------------------------
	% Case T = 2
	% --------------------------------------------------------
	\item Consider now, if $|\mathcal{T}| = 2$.  Let $i, j$  and $k$ be distinct elements in $\mathcal{U}$, where we assume without loss of generality that $k$ is the only element not in~$\mathcal{T}$.
	In this case, we have
%	\begin{eqnarray}
	\begin{equation}
		A =  A_2 \triangleq\begin{bmatrix}
			       	1 & -a_{ik}           \\
				-a_{jk} & 1  
			\end{bmatrix},
	\end{equation}
	\begin{equation}
	\label{eq:c2}
		c = c_2 \triangleq \begin{bmatrix}
				-1 \\
				-1 
			\end{bmatrix}, \qquad
		\Delta = \hat{\Delta} \triangleq \begin{bmatrix}
				\Delta_i \\
				\Delta_j 
			\end{bmatrix},
	\end{equation}
%	\end{eqnarray}
	where we remind the reader that we have assumed that the constraints in~\eqref{eq:delta} are written as $A\Delta \leq 0$.  
%	where we have assumed that $\Delta = [\Delta_i, \Delta_j]^T$, and the inequality constraints of~\eqref{eq:delta} have been written as $A\Delta \leq 0$.
	Since the off-diagonal entries of $A_2$ are negative by~\eqref{eq:a_ik}, we have that $A_2$ is a $Z$-matrix~\cite{P77}.  Furthermore, since $|\mathcal{T}| = 2$ by assumption, we further have that there exists an $x \triangleq [\bT{i}, \bT{j}] \succeq 0$ such that $A_2x = k \succ 0$, where $k = [k_i + a_{ij}\bT{k}, k_j + a_{ji}\bT{k}]$ and $\bT{k} = \Q{i, j}^{+}/(1 - \epsilon_i\epsilon_j)$, (see~\eqref{eq:Tset}, \eqref{eq:linear_Qi_plus} and \eqref{eq:k_i}).  Therefore, by Condition $K_{34}$ of~\cite{P77}, $A_2$ is also a non-singular $M$-matrix.  
%	By Condition $A_1$ of~\cite{P77}, we conclude that $A_{2}^{T}$ is also a non-singular $M$-matrix and therefore 
	By Condition~$F_{15}$ of~\cite{P77}, we thus infer that the inverse matrix $A_{2}^{-1}$ has all positive elements.  Finally, we conclude from~\eqref{eq:c2} that indeed $[\lambda_1, \lambda_2]^T = -(A_2^T)^{-1} c_2 \succeq 0$ since all elements involved in the matrix multiplication are positive.

%	We explicitly invert the system of equations for $\lambda$ in~\eqref{eq:lambda} to get that $[\lambda_1, \lambda_2]^T = 		-(A_2^T)^{-1} c_2 $ where
%	\begin{equation}
%%	\begin{eqnarray}
%	\label{eq:T_equals_two}
%		(A_2^{T})^{-1}
%%		\begin{bmatrix}
%%			\lambda_1 \\
%%			\lambda_2 
%%		\end{bmatrix} 
%%			&=& 
%%		-(A_2^T)^{-1} c \\
%%			&=&
%			=
%		\frac{1}{1 - a_{ik}a_{jk}}
%			\begin{bmatrix}
%			       	1 & a_{jk}           \\
%				a_{ik} & 1  
%			\end{bmatrix}.
%%			\begin{bmatrix}
%%				1 \\
%%				1
%%			\end{bmatrix} .
%	\end{equation}
%%	\end{eqnarray}
%	Lemma~\ref{lem:one_minus_a} shows that if $i$ and $j$ are in $\mathcal{T}$, the denominator in~\eqref{eq:T_equals_two} is positive.  We use this along with~\eqref{eq:a_ik} and~\eqref{eq:c2} to conclude that indeed $\lambda \succeq 0$.
	
	% --------------------------------------------------------
	% Case T = 3
	% --------------------------------------------------------
	\item If $|\mathcal{T}| = 3$, we have that
	\begin{equation}
	\label{eq:A3}
		A = A_3 \triangleq \begin{bmatrix}
			       	1 & -a_{ik} & -a_{ij}  \\
				-a_{jk} & 1 & -a_{ji}  \\
				-a_{kj} & -a_{ki} & 1 
			\end{bmatrix},
	\end{equation}
	\begin{equation}
	\label{eq:c3}
		c = c_3 \triangleq \begin{bmatrix}
				-1 \\
				-1 \\
				-1
			\end{bmatrix}, \qquad
		\Delta = \mathring{\Delta} \triangleq \begin{bmatrix}
				\Delta_i \\
				\Delta_j \\
				\Delta_k
			\end{bmatrix}.
	\end{equation}
%	Assuming that $\det(A_3) \neq 0$, we again invert the system of linear equations in~\eqref{eq:lambda} to get that
	
	Similar to the case in which $|\mathcal{T}| = 2$, we can again argue that $A_{3}$ is a non-singular $M$-matrix and so it is inverse-positive~\cite{P77}.  Consequently, we again have that indeed $[\lambda_1, \lambda_2, \lambda_3]^T = -(A_3^{T})^{-1} c_3 \succeq 0$.
	
%	In Lemma~\ref{lem:det}, we establish that if the system of linear equations in the definition of $\mathcal{T}$ are satisfied when $|\mathcal{T}| = 3$,  
%%	and the inequalities defined by $A_3\Delta \leq 0$ are also satisfied, 
%	then $\det(A_3) > 0$.  We can therefore again invert the system of linear equations in~\eqref{eq:lambda} to get that $[\lambda_1, \lambda_2, \lambda_3]^T = -(A_3^{T})^{-1} c_3 $ where
%%	Lemma~\ref{lem:det} establishes that a trivial determinant for $A_3$ contradicts the assumption that $|\mathcal{T}| = 3$, i.e., that we have found a $(\bT{1}, \bT{2}, \bT{3}) \succeq 0$ satisfying~\eqref{eq:linear_Qi_plus} for $i \in \mathcal{U}, j, k \in \mathcal{U} \setminus \{i\}, j\neq k$.  A trivial determinant can result in a system of equations having either no solution, or infinitely many solutions.  However, by showing that a trivial determinant results in an inconsistent system of linear equations, Lemma~\ref{lem:det} shows that only the former is possible.
%
%	\begin{flalign}
%	\label{eq:A3_inverse}
%	& (A_3^{T})^{-1} = %\\ \nonumber
%	 \frac{1}{\det(A_3)}
%		\begin{bmatrix}
%		       	1 - a_{ji}a_{ki} & a_{jk} + a_{ji}a_{kj} & a_{kj} + a_{jk}a_{ki}  \\
%			a_{ik}+ a_{ij}a_{ki} & 1 - a_{ij}a_{kj} & a_{ki} + a_{ik}a_{kj}  \\
%			a_{ij} + a_{ik}a_{ji} & a_{ji}+a_{ij}a_{jk} & 1  - a_{ik}a_{jk}
%		\end{bmatrix}.
%	\end{flalign}
%	By Lemmas~\ref{lem:one_minus_a} and \ref{lem:det} and Equations~\eqref{eq:a_ik} and \eqref{eq:c3}, we again conclude that $\lambda \succeq 0$.
\end{enumerate}

%\begin{equation}
%\label{eq:matrix}
%\begin{aligned}
%	& \underset{\Delta}{\text{max}}
%	&& \!\! c^T \Delta\\
%	& \text{subject to}
%%\end{aligned}
%%\end{equation}
%%\[
%%\begin{aligned}
%	&& 	\!\! A \Delta \leq 0 \\
%\end{aligned}
%%\]
%\end{equation}

\end{proof}

% --------------------------------------------------------
% Lemma: Uniqueness
% --------------------------------------------------------

\begin{myLemma}
\label{lem:unique}
	Let $(\bT{1}, \bT{2}, \bT{3})$ be the optimal solution of the linear program in~\eqref{eq:LP}.  Then $(\bT{1}, \bT{2}, \bT{3})$  is unique.
\end{myLemma}

\begin{proof}
%Therefore, we conclude that the solution to \eqref{eq:LP} is a solution to \eqref{eq:Ti_min}.  
We show that the solution to~\eqref{eq:LP} is unique via Theorem~1 of~\cite{OLM}.  
Intuitively, its reasoning is that to maximize the objective function of a linear program, we follow the objective function's gradient vector until we reach the boundary of the feasible region.  If this stopping occurs at a face of the region rather than a single point, we have a non-unique solution.  In this case, following the gradient vector after it has undergone a small perturbation in its direction will lead us to a different solution on the face of the boundary region.  On the other hand, if any small perturbation in the gradient vector's direction leads us to the same boundary point as the unperturbed gradient vector, we know we have a unique solution.  This is stated in the following fact.

\begin{fact}[{\hspace{1sp}\cite[Theorem 1]{OLM}}]
%\begin{fact}\hspace{1sp}\cite[Theorem~1]{OLM}
\label{fact:unique}
	A solution $\bar{x}$ to the linear programming problem
	\begin{equation}
	\label{eq:LP_fact}
\begin{aligned}
	& \underset{x}{\text{max}}
	&& c^{T}x\\
	& \text{subject to}
	&& 	Ax \leq b\\
\end{aligned}
\end{equation}
is unique iff for all $q$, there exists a $\delta >0$ s.t.\ $\bar{x}$ is still a solution when the objective function is replaced by $(c + \delta q)^{T}x$.
\end{fact}

For our purposes, we have that $c = [1, 1, 1]^{T}$ in Fact~\ref{fact:unique}. Furthermore, for any given $q$, we choose $\delta$ large enough so that all coefficients of the newly-perturbed objective function remain positive.  

Let $\bar{T}^{*} = (\bT{1}^{*}, \bT{2}^{*}, \bT{3}^{*})$ be the optimal solution of~\eqref{eq:LP}, and let $\bar{T} = (\bT{1}, \bT{2}, \bT{3})$ be another element in the feasible region of~\eqref{eq:LP}.  We will show that $\bar{T}^{*}$ remains the optimal solution of~\eqref{eq:LP} when the objective function is perturbed in a way that maintains the positivity of its coefficients by showing that  
\begin{equation}
\label{eq:unique_positive}
	\bT{i}^{*} - \bT{i} \geq 0, \qquad \textrm{for all } i \in \mathcal{U}.
\end{equation}
Assume, by way of contradiction, that~\eqref{eq:unique_positive} does not hold, i.e., there exists a non-empty set $\mathcal{V} \subseteq \mathcal{U}$ such that for all $j \in \mathcal{V}$, $\bT{j}^{*} - \bT{j} < 0$.  We will show that the existence of such a set violates the optimality assumption of $\bar{T}^{*}$.  In particular, we construct a feasible solution of~\eqref{eq:LP} with a strictly larger value for the objective function than when it is evaluated at $\bar{T}^{*}$.  This new solution retains all $\bT{i}^{*}$ for $i \in \mathcal{U} \setminus \mathcal{V}$ and replaces all $\bT{j}^{*}$ with $\bT{j}$ for all $j \in \mathcal{V}$.  

For all $i \in \mathcal{U} \setminus \mathcal{V}$, we now show the feasibility of $\bT{i}^{*}$ within the newly-constructed solution by writing
\begin{equation}
\label{eq:still_feasible1}
	\bT{i}^{*} \leq \frac{Q_{\{j, k\}}^{+}}{1 - \epsilon_j\epsilon_k},
\end{equation}
and 
%\newcounter{cnt}
\setcounter{cnt}{1}
\begin{eqnarray}
	\bT{i}^{*} 
		&\stackrel{(\alph{cnt})}{\leq}& k_i + a_{ik} \bT{j}^{*} + a_{ij}\bT{k}^{*} \\
		\addtocounter{cnt}{1}
		&=& k_i + \sum_{\substack{u \in \mathcal{U} \setminus \mathcal{V}, \\ v \in \mathcal{U} \setminus \{i, u\}}} a_{iv} \bT{u}^{*} + \sum_{\substack{v \in \mathcal{V} \\ u \in \mathcal{U} \setminus \{i, v\}}} a_{iu} \bT{v}^{*} \\
		\label{eq:still_feasible}
		&\stackrel{(\alph{cnt})}{<}& k_i + \sum_{\substack{u \in \mathcal{U} \setminus \mathcal{V}, \\ v \in \mathcal{U} \setminus \{i, u\}}}a_{iv} \bT{u}^{*} + \sum_{\substack{v \in \mathcal{V} \\ u \in \mathcal{U} \setminus \{i, v\}}} a_{iu} \bT{v},
\end{eqnarray}
where 
\begin{enumerate}[(a)]
	\item and \eqref{eq:still_feasible1} follow from~\eqref{eq:linear_Qi_plus} and the feasibility of $\bar{T}^{*}$
	\item follows from~\eqref{eq:a_ik} and the definition of $\mathcal{V}$.
\end{enumerate}

Hence, \eqref{eq:still_feasible} shows that $\bT{i}^{*}$ remains feasible within the newly-constructed solution.  Similarly, we can show that $\bT{j}$ remains feasible within the newly-constructed solution for all $j \in \mathcal{V}$.
%Fact~\ref{fact:unique} is true for our situation since for any $q$ that perturbs our objective function, we can find a sufficiently small $\delta > 0$ that keeps all coefficients of the objective function in \eqref{eq:LP} positive.  With positive coefficients for each $\bar{T}_i$ term, we can similarly argue that optimality is achieved when \eqref{eq:Ti_min} holds.

%\textcolor{red}{
%In summary, we have shown that the unique solution of~\eqref{eq:LP} is also a solution to~\eqref{eq:Ti_min}.  If we could conversely show that a solution to~\eqref{eq:Ti_min} is also a solution to~\eqref{eq:LP}, then we would have that there is a unique number of instantly decodable, distortion-innovative symbols that can be sent.  As of yet, however, we have not done so, although it is an ongoing effort and simulation results in Section~\ref{sec:simulations} strongly suggest this.  Consequently, \bT{i} in~\eqref{eq:Ti_min} may actually be a function of some scheduler that determines which linear combination to send at any given time based on the state of the queues.  Furthermore, each scheduler may give a different solution to~\eqref{eq:Ti_min}.
%}
%
%The proof is in preparation and will be available in the extended version of this paper.  Section~\ref{sec:simulations} however, will describe simulations that support the plausibility of this result.  

\end{proof}

\fi

\ifCLASSOPTIONcaptionsoff
  \newpage
\fi

% trigger a \newpage just before the given reference
% number - used to balance the columns on the last page
% adjust value as needed - may need to be readjusted if
% the document is modified later
%\IEEEtriggeratref{8}
% The "triggered" command can be changed if desired:
%\IEEEtriggercmd{\enlargethispage{-5in}}

% references section

% can use a bibliography generated by BibTeX as a .bbl file
% BibTeX documentation can be easily obtained at:
% http://www.ctan.org/tex-archive/biblio/bibtex/contrib/doc/
% The IEEEtran BibTeX style support page is at:
% http://www.michaelshell.org/tex/ieeetran/bibtex/
%\bibliographystyle{IEEEtranTCOM}
% argument is your BibTeX string definitions and bibliography database(s)
%\bibliography{IEEEabrv,../bib/paper}
%
% <OR> manually copy in the resultant .bbl file
% set second argument of \begin to the number of references
% (used to reserve space for the reference number labels box)
%
%\begin{thebibliography}{1}
%
%\bibitem{IEEEhowto:kopka}
%H.~Kopka and P.~W. Daly, \emph{A Guide to \LaTeX}, 3rd~ed.\hskip 1em plus
%  0.5em minus 0.4em\relax Harlow, England: Addison-Wesley, 1999.
%
%\end{thebibliography}

%\vspace{-2em}
\vspace{-1em}

\bibliographystyle{IEEEtran}
%\bibliographystyle{IEEEtranTCOM}
%\bibliographystyle{hieeetr}
%\bibliography{IEEEabrv,itw2013emina}
\bibliography{IEEEabrv,itw2013emina2}

\vfill

% Can be used to pull up biographies so that the bottom of the last one
% is flush with the other column.
%\enlargethispage{-5in}

% that's all folks
\end{document}

% --- supplement: feedback3-suppl/feedback3-suppl.tex ---

%
% paper title
% can use linebreaks \\ within to get better formatting as desired
%\title{Lossy Broadcasting to Diverse Users: Rateless Codes and Hybrid Approaches}
%\title{Successive Segmentation-based Coding for Broadcasting over Erasure Channels}
\title{Proof of Supporting Lemmas for Theorem~\ref{thm:LP_instantly_decodable} from \emph{Three-terminal Erasure Source-Broadcast with Feedback}}
%
% author names and IEEE memberships
% note positions of commas and nonbreaking spaces ( ~ ) LaTeX will not break
% a structure at a ~ so this keeps an author's name from being broken across
% two lines.
% use \thanks{} to gain access to the first footnote area
% a separate \thanks must be used for each paragraph as LaTeX2e's \thanks
% was not built to handle multiple paragraphs
%

%\author{Michael~Shell,~\IEEEmembership{Member,~IEEE,}
%        John~Doe,~\IEEEmembership{Fellow,~OSA,}
%        and~Jane~Doe,~\IEEEmembership{Life~Fellow,~IEEE}% <-this % stops a space

%\thanks{J. Doe and J. Doe are with Anonymous University.}% <-this % stops a space
%\thanks{TCOM version based on Michael Shell's bare{\textunderscore}jrnl.tex version 1.3.}}
%%%
\author{Louis Tan, Kaveh Mahdaviani and Ashish Khisti}
%\thanks{L.~Tan and A.~Khisti are with the Dept. of Electrical and Computer Engineering, University of Toronto, Toronto, ON, Canada. Y.~Li is with the Dept.\ of Electrical Engineering, UCLA, Box 951594, Los Angeles, CA 90095. E.~Soljanin is with Bell Labs, Alcatel-Lucent, Murray Hill, NJ 07974, USA. Part of this work was presented at the 2013 Information Theory Workshop in Seville, Spain~\cite{TLKS_ITW13}, and at the 2014 International Symposium on Information Theory in Honolulu, Hawaii~\cite{LTKS_ISIT14}.}%
%\thanks{L.~Tan, K.~Mahdaviani and A.~Khisti are with the Dept.\ of Electrical and Computer Engineering, University of Toronto, Toronto, ON, Canada (e-mail: louis.tan@mail.utoronto.ca, mahdaviani@cs.toronto.edu, akhisti@ece.utoronto.ca). 
%Y.~Li is with Akamai Technologies, Inc., 3355 Scott Boulevard, Santa Clara, CA 95054, USA. 
%E.~Soljanin is with the Dept.\ of Electrical and Computer Engineering, Rutgers University, Piscataway, NJ 08854, USA. 
%Part of this work was presented at the 2013 Information Theory Workshop in Seville, Spain~\cite{TLKS_ITW13}, and at the 2014 International Symposium on Information Theory in Honolulu, Hawaii~\cite{LTKS_ISIT14}.
%Part of this work was presented at the 2015 International Symposium on Information Theory in Hong Kong~\cite{TMKS_ISIT15}.}%
%}%
%%%\\
%%%\IEEEauthorblockA{\IEEEauthorrefmark{1}Dept.\ of Electrical Engineering, UCLA, Box 951594, Los Angeles, CA 90095, USA}%, liyao@ucla.edu}
%%%\IEEEauthorblockA{\IEEEauthorrefmark{2}Dept.\ of Electrical \& Computer Engineering, University of Toronto, Toronto, ON M5S 3G4 Canada} %\{ltan, akhisti\}@comm.utoronto.ca}
%%%\IEEEauthorblockA{\IEEEauthorrefmark{3}Bell Labs, Alcatel-Lucent, Murray Hill, NJ 07974, USA}%, \\emina@alcatel-lucent.com}
%%%\IEEEauthorblockA{Email: liyao@ucla.edu, \{ltan, akhisti\}@comm.utoronto.ca and emina@alcatel-lucent.com}
%%%}

%\IEEEauthorblockA{Dept.\ of Electrical \& Computer Engineering\\ University of Toronto\\
%Toronto, ON M5S 3G4 Canada\\ ltan@comm.utoronto.ca} \and \IEEEauthorblockN{Yao Li} \IEEEauthorblockA{Dept.\ of Electrical Engineering, UCLA\\ Los Angeles CA USA\\
%liyao@ucla.edu} \and \IEEEauthorblockN{Ashish Khisti} \IEEEauthorblockA{Dept.\ of Electrical \& Computer Engineering\\ University of Toronto\\
%Toronto, ON M5S 3G4 Canada\\ akhisti@comm.utoronto.ca} \and \IEEEauthorblockN{Emina Soljanin} \IEEEauthorblockA{Bell Labs, Alcatel-Lucent\\ Murray Hill NJ 07974, USA\\
%emina@alcatel-lucent.com}}

%\author{\IEEEauthorblockN{Louis Tan} \IEEEauthorblockA{Dept.\ of Electrical \& Computer Engineering\\ University of Toronto\\
%Toronto, ON M5S 3G4 Canada\\ ltan@comm.utoronto.ca} \and \IEEEauthorblockN{Yao Li} \IEEEauthorblockA{Dept.\ of Electrical Engineering, UCLA\\ Los Angeles CA USA\\
%liyao@ucla.edu} \and \IEEEauthorblockN{Ashish Khisti} \IEEEauthorblockA{Dept.\ of Electrical \& Computer Engineering\\ University of Toronto\\
%Toronto, ON M5S 3G4 Canada\\ akhisti@comm.utoronto.ca} \and \IEEEauthorblockN{Emina Soljanin} \IEEEauthorblockA{Bell Labs, Alcatel-Lucent\\ Murray Hill NJ 07974, USA\\
%emina@alcatel-lucent.com}}

%\author{\IEEEauthorblockN{Louis Tan and Ashish Khisti} \IEEEauthorblockA{Dept.\ of Electrical \& Computer Engineering\\ University of Toronto\\
%Toronto, ON M5S 3G4 Canada\\ \{ltan, akhisti\}@comm.utoronto.ca} \and \IEEEauthorblockN{Emina Soljanin} \IEEEauthorblockA{Bell Labs, Alcatel-Lucent\\ Murray Hill NJ 07974, USA\\
%emina@alcatel-lucent.com}}

% note the % following the last \IEEEmembership and also \thanks -
% these prevent an unwanted space from occurring between the last author name
% and the end of the author line. i.e., if you had this:
%
% \author{....lastname \thanks{...} \thanks{...} }
%                     ^------------^------------^----Do not want these spaces!
%
% a space would be appended to the last name and could cause every name on that
% line to be shifted left slightly. This is one of those "LaTeX things". For
% instance, "\textbf{A} \textbf{B}" will typeset as "A B" not "AB". To get
% "AB" then you have to do: "\textbf{A}\textbf{B}"
% \thanks is no different in this regard, so shield the last } of each \thanks
% that ends a line with a % and do not let a space in before the next \thanks.
% Spaces after \IEEEmembership other than the last one are OK (and needed) as
% you are supposed to have spaces between the names. For what it is worth,
% this is a minor point as most people would not even notice if the said evil
% space somehow managed to creep in.

% The paper headers
%\markboth{IEEE Journal on Selected Areas in Communications}%
%{Submitted paper}
% The only time the second header will appear is for the odd numbered pages
% after the title page when using the twoside option.
%
% *** Note that you probably will NOT want to include the author's ***
% *** name in the headers of peer review papers.                   ***
% You can use \ifCLASSOPTIONpeerreview for conditional compilation here if
% you desire.

% If you want to put a publisher's ID mark on the page you can do it like
% this:
%\IEEEpubid{0000--0000/00\$00.00~\copyright~2007 IEEE}
% Remember, if you use this you must call \IEEEpubidadjcol in the second
% column for its text to clear the IEEEpubid mark.

% use for special paper notices
%\IEEEspecialpapernotice{(Invited Paper)}

%\vspace{-4em}

% make the title area
\maketitle

%\begin{abstract}
%\boldmath
\iffalse

We study a successive segmentation-based coding scheme for broadcasting a binary source over a multi-receiver erasure broadcast channel. Each receiver has a certain demand on the fraction of source symbols to be reconstructed, and its channel is a memoryless erasure channel.  We study the minimum achievable latency at the source required to simultaneously meet all the receiver constraints. We consider a class of schemes that partition the source sequence into multiple segments and apply a systematic erasure code to each segment. We formulate the optimal choice of segment sizes and code-rates in this class of schemes as a linear programming problem and provide an explicit solution. We further show that the optimal solution can be interpreted as a successive segmentation scheme, which naturally adjusts when users are added or deleted from the system. 

Given the solution for the segment sizes, we then consider possible transmission orderings for \emph{individual} user decoding delay considerations.  We provide closed-form expressions for each individual user's excess latency when parity checks are successively transmitted in both increasing and decreasing order of the segment's coded rate.  Finally, we adapt the segmentation-based coding scheme for transmission across the product of two reversely degraded erasure broadcast channels and numerically show that significant gains are achievable compared to baseline separation-based coding schemes.
\fi
%

%We study the effects of introducing a feedback channel in the erasure source-broadcast problem for the case of three receivers.  In our problem formulation, we wish to transmit a binary source to three users over the erasure broadcast channel when causal feedback is universally available to the transmitter and all three receivers.  Each receiver requires a certain fraction of the source sequence, and we are interested in the minimum latency, or transmission time, required to serve them all.  
%
%%For broadcasting to three users, 
%We propose a queue-based hybrid digital-analog coding scheme that achieves optimal performance for the duration of analog transmissions.  
%We characterize the number of analog transmissions that can be sent as the solution to a linear program, and furthermore give sufficient conditions for which optimal performance can be achieved by all users.  
%%We show that the number of analog transmissions that can be sent is characterized by the solution of a linear program, and furthermore give sufficient conditions for which all users can be optimal.  
%In some cases, we find that users can be point-to-point optimal regardless of their distortion constraints.  
%
%When the analog transmissions are  insufficient in meeting user demands, we propose two subsequent coding schemes.  The first uses a queue preprocessing strategy for channel coding.  The second method is a novel \emph{chaining algorithm}, which involves the transmitter targeting point-to-point optimal performance for two users as if they were the only users in the network. Meanwhile, the third user simultaneously builds ``chains'' of symbols that he may at times be able to decode based on the channel conditions, and the subsequent reception of additional symbols.  %In the analysis of the chaining algorithm, we also solve the more general problem of finding the expected accumulated reward before absorption of a Markov rewards process with impulse rewards and absorbing states.

%Finally, we also provide simulations that highlight the benefits of feedback.

%We consider the problem of deriving an outer bound for the source-broadcast problem involving a common equiprobable source that is to be sent to two receivers subject to an erasure distortion constraint.  The system model we study is the same as in Chapter~\ref{chap:no_feedback}, however our derivation makes two additional assumptions.  The first assumption is that we consider the case when $M \triangleq 1/(1 - \epsilon_2)$ is an integer.  The second assumption is that we consider only the class of non-erasure-randomized codes.  As we explain in Section~\ref{sec:virtual_source}, this is the class of codes for which the positions of erasures in the source reconstruction is determined only by the channel noise realization.  

%The outer bound we present is parameterized by the distortion of user~1, the stronger user.  We assume that $D_1$, the distortion achieved by user~1, is given by $D_1 = \Dstar{1} + \delta$, where $\delta \in [0, \epsilon_1 - \Dstar{1})$, and $\Dstar{i}$ is the point-to-point optimal distortion for user~$i$, given by
%
%\begin{equation}
%	\Dstar{i} = 1 - b(1 - \epsilon_i),
%\end{equation}
%%
%where $b = n/m$ is the number of channel uses per source symbol, i.e., the bandwidth expansion factor.  

%Motivated by error correction coding in multimedia applications, we study the problem of broadcasting a single common source to multiple receivers over heterogenous erasure channels. Each receiver is required to partially reconstruct the source sequence by decoding a certain fraction of the source symbols.  We propose a coding scheme that requires only off-the-shelf erasure codes and can be easily adapted as users join and leave the network. Our scheme involves splitting the source sequence into multiple segments and applying a systematic erasure code to each such segment. We formulate the problem of minimizing the transmission latency at the server as a linear programming problem and explicitly characterize an optimal choice for the code-rates and segment sizes. Through numerical comparisons, we demonstrate that our proposed scheme outperforms both separation-based coding schemes, and degree-optimized rateless codes and performs close to a natural outer {(lower)} bound in certain cases.  %\textcolor{brown}{\sout{We also show how our segmentation-based scheme can be naturally extended if the network users instead listened over \emph{parallel} broadcast channels.}}

%We further study \emph{individual} user decoding delays for various orderings of segments in our scheme.  We  provide closed-form expressions for each individual user's excess latency when parity checks are successively transmitted in both increasing and decreasing order of their segment's coded rate and also qualitatively discuss the merits of each order. %\textcolor{red}{Finally, we also present a natural extension of the segmentation-based scheme to parallel broadcast channels.}

%\end{abstract}
% IEEEtran.cls defaults to using nonbold math in the Abstract.
% This preserves the distinction between vectors and scalars. However,
% if the journal you are submitting to favors bold math in the abstract,
% then you can use LaTeX's standard command \boldmath at the very start
% of the abstract to achieve this. Many IEEE journals frown on math
% in the abstract anyway.

%\vspace{-2em}

% Note that keywords are not normally used for peerreview papers.
%\begin{IEEEkeywords}
%Rateless Codes, Packet Erasure Channels, Linear Programming, Unequal Error Protection, Multiple Description Coding, eMBMS
%\end{IEEEkeywords}
%\begin{IEEEkeywords}
%Application-Layer Error Correction Coding, Broadcast Channels,  Joint Source-Channel Coding, Linear Programming, Multimedia broadcast/multicast services (MBMS), Rateless Codes, Unequal Error Protection.
%\end{IEEEkeywords}

% For peer review papers, you can put extra information on the cover
% page as needed:
% \ifCLASSOPTIONpeerreview
% \begin{center} \bfseries EDICS Category: 3-BBND \end{center}
% \fi
%
% For peerreview papers, this IEEEtran command inserts a page break and
% creates the second title. It will be ignored for other modes.
%\IEEEpeerreviewmaketitle
\maketitle

%\input{intro/intro_three_users}
%\input{SysModel/SysModel2_three_users}
%\input{three_users/three_users}
%\input{analysis/analysis2}
%\input{simulations/simulations}
%\input{3/chaining_algorithm/chaining_algorithm}

%\input{chain/chain_driver}
%\input{conclusion/conclusion}
%\ifarxiv
	%\chapter{Erasure Source-Broadcast with Feedback Appendix}
\ifarxiv
\appendices
\fi
% Analysis for three users
%\appendices
%\section*{Appendix}
\section{Proof of Supporting Lemmas of Theorem~\ref{thm:LP_instantly_decodable}}

%\color{red}

% --------------------------------------------------------
% Lemma: LP solves Min
% --------------------------------------------------------
\begin{myLemma}
\label{lem:LPToMin}
	Let $(\bT{1}, \bT{2}, \bT{3})$ be the optimal solution of the linear program in~\eqref{eq:LP}.  Then $(\bT{1}, \bT{2}, \bT{3})$ satisfies~\eqref{eq:Ti_min}.
\end{myLemma}
\begin{proof}
To show that the optimal solution of~\eqref{eq:LP} must satisfy \eqref{eq:Ti_min}, we note that in order to maximize the objective function in \eqref{eq:LP}, one of the inequality constraints for \bT{i} must be met with equality.  Otherwise, if \bT{i}, \bT{j}, \bT{k} is a purported optimal solution where \bT{i} does not meet an inequality constraint with equality, we can find some $\delta > 0$ such that $\bT{i} + \delta$, \bT{j}, \bT{k} is also feasible and gives a strictly larger objective function value.
\end{proof}

% --------------------------------------------------------
% Lemma: Min Solves LP
% --------------------------------------------------------

\begin{myLemma}
\label{lem:MinToLP}
	Let $(\bT{1}, \bT{2}, \bT{3})$ satisfy~\eqref{eq:Ti_min}, where $\bT{i} > 0$ for all $i \in \mathcal{U}$.  Then $(\bT{1}, \bT{2}, \bT{3})$ is an optimal solution of~\eqref{eq:LP}.
\end{myLemma}
\begin{proof}
	We show that $(\bT{1}, \bT{2}, \bT{3})$ is a local maximum of the linear program in~\eqref{eq:LP}.  Consider a neighbouring  element in the feasible region of~\eqref{eq:LP} of the form $(\bT{1} + \Delta_1, \bT{2} + \Delta_2, \bT{3} + \Delta_3)$.  We show that the objective function evaluated at this new element is not greater than the objective function evaluated at $(\bT{1}, \bT{2}, \bT{3})$.  We do this by showing that the linear program

% --------------------------------------------------------
% Equation: Formulate Optimization for Local Max
% --------------------------------------------------------
	
\begin{equation}
\label{eq:local}
\begin{aligned}
	& \underset{\Delta_{1}, \Delta_{2}, \Delta_{3}}{\text{max}}
	&& \!\! \Delta_{1} + \Delta_{2} + \Delta_{3}\\
	& \text{subject to}
%\end{aligned}
%\end{equation}
%\[
%\begin{aligned}
	&& 	\!\! \bT{i} + \Delta_i \leq \frac{\Qp{i}{\bT{j} + \Delta_j}{\bT{k} + \Delta_k} }{1 - \epsilon_i},\\
	&&& \!\!\bar{T}_i + \Delta_i \leq \frac{\Q{j,k}^{+}}{1 - \epsilon_j\epsilon_k} \quad \forall i \in \mathcal{U}, j,k \in \mathcal{U}\setminus \{i\} , j \neq k.
\end{aligned}
%\]
\end{equation}
%	
has an optimal value no greater than zero.  

By assumption, $(\bT{1}, \bT{2}, \bT{3})$ satisfies~\eqref{eq:Ti_min}, and so for all $i \in \mathcal{U}$, $\bT{i}$ meets one of the inequality constraints in~\eqref{eq:LP} with equality. Let $\mathcal{T}(\bT{1}, \bT{2}, \bT{3})$ be the set of $i$ for which \bT{i} meets the first inequality constraint with equality, i.e., 
%\begin{equation}
%\begin{eqnarray}
% --------------------------------------------------------
% Equation: Define T Set
% --------------------------------------------------------
\begin{flalign}
\label{eq:Tset}
	&\mathcal{T}(\bT{1}, \bT{2}, \bT{3}) = \\ \nonumber
	& \left\{i \in \mathcal{U} \mid \bT{i} = \frac{\Qp{i}{\bT{j}}{\bT{k}} }{1 - \epsilon_i}, \bT{i} > 0, j, k \in \mathcal{U} \setminus \{i\}, j \neq k\right\}.
%\end{equation}
%\end{eqnarray}
\end{flalign}
%
When the context is clear, we will at times use $\mathcal{T}$ to refer to $\mathcal{T}(\bT{1}, \bT{2}, \bT{3})$.  Furthermore, for $i \in \mathcal{T}$, we will at times find it convenient to emphasize the linear dependence of \bT{i} on \bT{j}, and \bT{k} by writing (c.f.~\eqref{eq:Qi_plus})
\begin{equation}
\label{eq:linear_Qi_plus}
	\frac{Q_i^{+}(\bT{j}, \bT{k})}{1 - \epsilon_i} = k_i + a_{ik}\bT{j} + a_{ij} \bT{k},
\end{equation}
where 
\begin{equation}
\label{eq:a_ik}
%	a_{ik} = \epsilon_i(1 - \epsilon_k)/(1 - \epsilon_i) > 0.
	a_{ik} = \frac{\epsilon_i}{(1 - \epsilon_i)}(1 - \epsilon_k) > 0,
\end{equation}
\begin{equation}
\label{eq:k_i}
	k_i = \frac{\bT{0}\epsilon_i(1 - \epsilon_j)(1 - \epsilon_k)}{1 - \epsilon_i} > 0,
\end{equation}
and \bT{0} is given by~\eqref{eq:T0}.

Now, we have that if $i \notin \mathcal{T}$, then $\bT{i} = \Q{j,k}^{+}/(1 - \epsilon_j\epsilon_k)$.  From the second inequality of~\eqref{eq:local}, we have then that for these values of $i$, $\Delta_i \leq 0$.  Notice  however, that an optimal solution of~\eqref{eq:local} must have all of its components non-negative.  Otherwise, a larger-valued objective function could be obtained by setting any negative component to zero.  
We can therefore upper bound the optimal value for the linear program in~\eqref{eq:local} with the negated optimal value for the \emph{relaxed} linear program

%\begin{equation}
%\label{eq:delta_half}
%\begin{aligned}
%	& \underset{\Delta_{1}, \Delta_{2}, \Delta_{3}}{\text{max}}
%	&& \!\! \Delta_{1} + \Delta_{2} + \Delta_{3}\\
%	& \text{subject to}
%%\end{aligned}
%%\end{equation}
%%\[
%%\begin{aligned}
%	&& 	\!\! \bT{i} + \Delta_i \leq \frac{\Qp{i}{\bT{j} + \Delta_j}{\bT{k} + \Delta_k} }{1 - \epsilon_i}, \\
%	&&& \quad \forall i \in \mathcal{T}, j,k \in \mathcal{T} \setminus\{i\}, j\neq k\\
%	&&& \Delta_j = 0, \quad j \notin \mathcal{T}.
%\end{aligned}
%%\]
%\end{equation}
%
%We further simplify the constraints of~\eqref{eq:delta_half} by using~\eqref{eq:Tset} and explicitly evaluating \Qp{i}{\cdot}{\cdot} to arrive at the linear program
% --------------------------------------------------------
% Equation: Optimization for Deltas
% --------------------------------------------------------
\begin{equation}
\label{eq:delta}
\begin{aligned}
	& \underset{\Delta_{1}, \Delta_{2}, \Delta_{3}}{\text{min}}
%	& \underset{\Delta_{1}, \Delta_{2}, \Delta_{3}}{\text{max}}
	&& \!\! -(\Delta_{1} + \Delta_{2} + \Delta_{3})\\
%	&& \!\! \Delta_{1} + \Delta_{2} + \Delta_{3}\\
	& \text{subject to}
%\end{aligned}
%\end{equation}
%\[
%\begin{aligned}
	&& 	\!\! \Delta_i \leq a_{ik}\Delta_{j} + a_{ij}\Delta_{k} \\
	&&& \quad \forall i \in \mathcal{T}, j,k \in \mathcal{U} \setminus\{i\}, j\neq k\\
	&&& \Delta_j = 0, \quad \forall j \notin \mathcal{T}.
\end{aligned}
%\]
\end{equation}
%
where we have used~\eqref{eq:Tset} and \eqref{eq:a_ik} to simplify the constraints of~\eqref{eq:local}.

We now consider several cases.  Notice first that if $0 \leq |\mathcal{T}| \leq 1$, the constraints of~\eqref{eq:delta} clearly show that the optimal value of~\eqref{eq:delta} is zero. Alternatively, if $2 \leq |\mathcal{T}| \leq 3$, we consider the Lagrange dual function, $g(\lambda)$, given by~\cite[Section~5.2.1]{boyd2004convex}
\begin{equation}
g(\lambda) = 
	\begin{cases}
   		0 & \text{if } A^T \lambda = -c, \\
   		-\infty & \text{else,}
	\end{cases}
\end{equation}  
where we have assumed that the objective function and inequality constraints of~\eqref{eq:delta} have been written as $c^T\Delta$ and $A\Delta \leq 0$ respectively for some $\Delta$, $A$ and $c$ that will be subsequently defined depending on $|\mathcal{T}|$.  For any $\lambda \succeq 0$, $g(\lambda)$ gives a lower bound on the optimal value of~\eqref{eq:delta}.  If we can therefore find a $\lambda \succeq 0$ such that 
\begin{equation}
\label{eq:lambda}
A^T \lambda = -c, 
\end{equation}
we will have thus shown that the optimal value of~\eqref{eq:local} is no greater than zero, as was our original goal.

%\begin{enumerate}[\label=\bfseries $|\mathcal{T}| = 2$ \arabic*:]
%\begin{enumerate}[label=\bfseries Exercise \arabic*:]
\begin{enumerate}

	% --------------------------------------------------------
	% Case T = 2
	% --------------------------------------------------------
	\item Consider now, if $|\mathcal{T}| = 2$.  Let $i, j$  and $k$ be distinct elements in $\mathcal{U}$, where we assume without loss of generality that $k$ is the only element not in~$\mathcal{T}$.
	In this case, we have
%	\begin{eqnarray}
	\begin{equation}
		A =  A_2 \triangleq\begin{bmatrix}
			       	1 & -a_{ik}           \\
				-a_{jk} & 1  
			\end{bmatrix},
	\end{equation}
	\begin{equation}
	\label{eq:c2}
		c = c_2 \triangleq \begin{bmatrix}
				-1 \\
				-1 
			\end{bmatrix}, \qquad
		\Delta = \hat{\Delta} \triangleq \begin{bmatrix}
				\Delta_i \\
				\Delta_j 
			\end{bmatrix},
	\end{equation}
%	\end{eqnarray}
	where we remind the reader that we have assumed that the constraints in~\eqref{eq:delta} are written as $A\Delta \leq 0$.  
%	where we have assumed that $\Delta = [\Delta_i, \Delta_j]^T$, and the inequality constraints of~\eqref{eq:delta} have been written as $A\Delta \leq 0$.
	Since the off-diagonal entries of $A_2$ are negative by~\eqref{eq:a_ik}, we have that $A_2$ is a $Z$-matrix~\cite{P77}.  Furthermore, since $|\mathcal{T}| = 2$ by assumption, we further have that there exists an $x \triangleq [\bT{i}, \bT{j}] \succeq 0$ such that $A_2x = k \succ 0$, where $k = [k_i + a_{ij}\bT{k}, k_j + a_{ji}\bT{k}]$ and $\bT{k} = \Q{i, j}^{+}/(1 - \epsilon_i\epsilon_j)$, (see~\eqref{eq:Tset}, \eqref{eq:linear_Qi_plus} and \eqref{eq:k_i}).  Therefore, by Condition $K_{34}$ of~\cite{P77}, $A_2$ is also a non-singular $M$-matrix.  
%	By Condition $A_1$ of~\cite{P77}, we conclude that $A_{2}^{T}$ is also a non-singular $M$-matrix and therefore 
	By Condition~$F_{15}$ of~\cite{P77}, we thus infer that the inverse matrix $A_{2}^{-1}$ has all positive elements.  Finally, we conclude from~\eqref{eq:c2} that indeed $[\lambda_1, \lambda_2]^T = -(A_2^T)^{-1} c_2 \succeq 0$ since all elements involved in the matrix multiplication are positive.

%	We explicitly invert the system of equations for $\lambda$ in~\eqref{eq:lambda} to get that $[\lambda_1, \lambda_2]^T = 		-(A_2^T)^{-1} c_2 $ where
%	\begin{equation}
%%	\begin{eqnarray}
%	\label{eq:T_equals_two}
%		(A_2^{T})^{-1}
%%		\begin{bmatrix}
%%			\lambda_1 \\
%%			\lambda_2 
%%		\end{bmatrix} 
%%			&=& 
%%		-(A_2^T)^{-1} c \\
%%			&=&
%			=
%		\frac{1}{1 - a_{ik}a_{jk}}
%			\begin{bmatrix}
%			       	1 & a_{jk}           \\
%				a_{ik} & 1  
%			\end{bmatrix}.
%%			\begin{bmatrix}
%%				1 \\
%%				1
%%			\end{bmatrix} .
%	\end{equation}
%%	\end{eqnarray}
%	Lemma~\ref{lem:one_minus_a} shows that if $i$ and $j$ are in $\mathcal{T}$, the denominator in~\eqref{eq:T_equals_two} is positive.  We use this along with~\eqref{eq:a_ik} and~\eqref{eq:c2} to conclude that indeed $\lambda \succeq 0$.
	
	% --------------------------------------------------------
	% Case T = 3
	% --------------------------------------------------------
	\item If $|\mathcal{T}| = 3$, we have that
	\begin{equation}
	\label{eq:A3}
		A = A_3 \triangleq \begin{bmatrix}
			       	1 & -a_{ik} & -a_{ij}  \\
				-a_{jk} & 1 & -a_{ji}  \\
				-a_{kj} & -a_{ki} & 1 
			\end{bmatrix},
	\end{equation}
	\begin{equation}
	\label{eq:c3}
		c = c_3 \triangleq \begin{bmatrix}
				-1 \\
				-1 \\
				-1
			\end{bmatrix}, \qquad
		\Delta = \mathring{\Delta} \triangleq \begin{bmatrix}
				\Delta_i \\
				\Delta_j \\
				\Delta_k
			\end{bmatrix}.
	\end{equation}
%	Assuming that $\det(A_3) \neq 0$, we again invert the system of linear equations in~\eqref{eq:lambda} to get that
	
	Similar to the case in which $|\mathcal{T}| = 2$, we can again argue that $A_{3}$ is a non-singular $M$-matrix and so it is inverse-positive~\cite{P77}.  Consequently, we again have that indeed $[\lambda_1, \lambda_2, \lambda_3]^T = -(A_3^{T})^{-1} c_3 \succeq 0$.
	
%	In Lemma~\ref{lem:det}, we establish that if the system of linear equations in the definition of $\mathcal{T}$ are satisfied when $|\mathcal{T}| = 3$,  
%%	and the inequalities defined by $A_3\Delta \leq 0$ are also satisfied, 
%	then $\det(A_3) > 0$.  We can therefore again invert the system of linear equations in~\eqref{eq:lambda} to get that $[\lambda_1, \lambda_2, \lambda_3]^T = -(A_3^{T})^{-1} c_3 $ where
%%	Lemma~\ref{lem:det} establishes that a trivial determinant for $A_3$ contradicts the assumption that $|\mathcal{T}| = 3$, i.e., that we have found a $(\bT{1}, \bT{2}, \bT{3}) \succeq 0$ satisfying~\eqref{eq:linear_Qi_plus} for $i \in \mathcal{U}, j, k \in \mathcal{U} \setminus \{i\}, j\neq k$.  A trivial determinant can result in a system of equations having either no solution, or infinitely many solutions.  However, by showing that a trivial determinant results in an inconsistent system of linear equations, Lemma~\ref{lem:det} shows that only the former is possible.
%
%	\begin{flalign}
%	\label{eq:A3_inverse}
%	& (A_3^{T})^{-1} = %\\ \nonumber
%	 \frac{1}{\det(A_3)}
%		\begin{bmatrix}
%		       	1 - a_{ji}a_{ki} & a_{jk} + a_{ji}a_{kj} & a_{kj} + a_{jk}a_{ki}  \\
%			a_{ik}+ a_{ij}a_{ki} & 1 - a_{ij}a_{kj} & a_{ki} + a_{ik}a_{kj}  \\
%			a_{ij} + a_{ik}a_{ji} & a_{ji}+a_{ij}a_{jk} & 1  - a_{ik}a_{jk}
%		\end{bmatrix}.
%	\end{flalign}
%	By Lemmas~\ref{lem:one_minus_a} and \ref{lem:det} and Equations~\eqref{eq:a_ik} and \eqref{eq:c3}, we again conclude that $\lambda \succeq 0$.
\end{enumerate}

%\begin{equation}
%\label{eq:matrix}
%\begin{aligned}
%	& \underset{\Delta}{\text{max}}
%	&& \!\! c^T \Delta\\
%	& \text{subject to}
%%\end{aligned}
%%\end{equation}
%%\[
%%\begin{aligned}
%	&& 	\!\! A \Delta \leq 0 \\
%\end{aligned}
%%\]
%\end{equation}

\end{proof}

% --------------------------------------------------------
% Lemma: Uniqueness
% --------------------------------------------------------

\begin{myLemma}
\label{lem:unique}
	Let $(\bT{1}, \bT{2}, \bT{3})$ be the optimal solution of the linear program in~\eqref{eq:LP}.  Then $(\bT{1}, \bT{2}, \bT{3})$  is unique.
\end{myLemma}

\begin{proof}
%Therefore, we conclude that the solution to \eqref{eq:LP} is a solution to \eqref{eq:Ti_min}.  
We show that the solution to~\eqref{eq:LP} is unique via Theorem~1 of~\cite{OLM}.  
Intuitively, its reasoning is that to maximize the objective function of a linear program, we follow the objective function's gradient vector until we reach the boundary of the feasible region.  If this stopping occurs at a face of the region rather than a single point, we have a non-unique solution.  In this case, following the gradient vector after it has undergone a small perturbation in its direction will lead us to a different solution on the face of the boundary region.  On the other hand, if any small perturbation in the gradient vector's direction leads us to the same boundary point as the unperturbed gradient vector, we know we have a unique solution.  This is stated in the following fact.

\begin{fact}[{\hspace{1sp}\cite[Theorem 1]{OLM}}]
%\begin{fact}\hspace{1sp}\cite[Theorem~1]{OLM}
\label{fact:unique}
	A solution $\bar{x}$ to the linear programming problem
	\begin{equation}
	\label{eq:LP_fact}
\begin{aligned}
	& \underset{x}{\text{max}}
	&& c^{T}x\\
	& \text{subject to}
	&& 	Ax \leq b\\
\end{aligned}
\end{equation}
is unique iff for all $q$, there exists a $\delta >0$ s.t.\ $\bar{x}$ is still a solution when the objective function is replaced by $(c + \delta q)^{T}x$.
\end{fact}

For our purposes, we have that $c = [1, 1, 1]^{T}$ in Fact~\ref{fact:unique}. Furthermore, for any given $q$, we choose $\delta$ large enough so that all coefficients of the newly-perturbed objective function remain positive.  

Let $\bar{T}^{*} = (\bT{1}^{*}, \bT{2}^{*}, \bT{3}^{*})$ be the optimal solution of~\eqref{eq:LP}, and let $\bar{T} = (\bT{1}, \bT{2}, \bT{3})$ be another element in the feasible region of~\eqref{eq:LP}.  We will show that $\bar{T}^{*}$ remains the optimal solution of~\eqref{eq:LP} when the objective function is perturbed in a way that maintains the positivity of its coefficients by showing that  
\begin{equation}
\label{eq:unique_positive}
	\bT{i}^{*} - \bT{i} \geq 0, \qquad \textrm{for all } i \in \mathcal{U}.
\end{equation}
%
Assume, by way of contradiction, that~\eqref{eq:unique_positive} does not hold, i.e., there exists a non-empty set $\mathcal{V} \subseteq \mathcal{U}$ such that for all $j \in \mathcal{V}$, $\bT{j}^{*} - \bT{j} < 0$.  We will show that the existence of such a set violates the optimality assumption of $\bar{T}^{*}$.  In particular, we construct a feasible solution of~\eqref{eq:LP} with a strictly larger value for the objective function than when it is evaluated at $\bar{T}^{*}$.  This new solution retains all $\bT{i}^{*}$ for $i \in \mathcal{U} \setminus \mathcal{V}$ and replaces all $\bT{j}^{*}$ with $\bT{j}$ for all $j \in \mathcal{V}$.  

For all $i \in \mathcal{U} \setminus \mathcal{V}$, we now show the feasibility of $\bT{i}^{*}$ within the newly-constructed solution by writing
\begin{equation}
\label{eq:still_feasible1}
	\bT{i}^{*} \leq \frac{Q_{\{j, k\}}^{+}}{1 - \epsilon_j\epsilon_k},
\end{equation}
and 
%\newcounter{cnt}
\setcounter{cnt}{1}
\begin{eqnarray}
	\bT{i}^{*} 
		&\stackrel{(\alph{cnt})}{\leq}& k_i + a_{ik} \bT{j}^{*} + a_{ij}\bT{k}^{*} \\
		\addtocounter{cnt}{1}
		&=& k_i + \sum_{\substack{u \in \mathcal{U} \setminus \mathcal{V}, \\ v \in \mathcal{U} \setminus \{i, u\}}} a_{iv} \bT{u}^{*} + \sum_{\substack{v \in \mathcal{V} \\ u \in \mathcal{U} \setminus \{i, v\}}} a_{iu} \bT{v}^{*} \\
		\label{eq:still_feasible}
		&\stackrel{(\alph{cnt})}{<}& k_i + \sum_{\substack{u \in \mathcal{U} \setminus \mathcal{V}, \\ v \in \mathcal{U} \setminus \{i, u\}}}a_{iv} \bT{u}^{*} + \sum_{\substack{v \in \mathcal{V} \\ u \in \mathcal{U} \setminus \{i, v\}}} a_{iu} \bT{v},
\end{eqnarray}
where 
\begin{enumerate}[(a)]
	\item and \eqref{eq:still_feasible1} follow from~\eqref{eq:linear_Qi_plus} and the feasibility of $\bar{T}^{*}$
	\item follows from~\eqref{eq:a_ik} and the definition of $\mathcal{V}$.
\end{enumerate}

Hence, \eqref{eq:still_feasible} shows that $\bT{i}^{*}$ remains feasible within the newly-constructed solution.  Similarly, we can show that $\bT{j}$ remains feasible within the newly-constructed solution for all $j \in \mathcal{V}$.
%Fact~\ref{fact:unique} is true for our situation since for any $q$ that perturbs our objective function, we can find a sufficiently small $\delta > 0$ that keeps all coefficients of the objective function in \eqref{eq:LP} positive.  With positive coefficients for each $\bar{T}_i$ term, we can similarly argue that optimality is achieved when \eqref{eq:Ti_min} holds.

%\textcolor{red}{
%In summary, we have shown that the unique solution of~\eqref{eq:LP} is also a solution to~\eqref{eq:Ti_min}.  If we could conversely show that a solution to~\eqref{eq:Ti_min} is also a solution to~\eqref{eq:LP}, then we would have that there is a unique number of instantly decodable, distortion-innovative symbols that can be sent.  As of yet, however, we have not done so, although it is an ongoing effort and simulation results in Section~\ref{sec:simulations} strongly suggest this.  Consequently, \bT{i} in~\eqref{eq:Ti_min} may actually be a function of some scheduler that determines which linear combination to send at any given time based on the state of the queues.  Furthermore, each scheduler may give a different solution to~\eqref{eq:Ti_min}.
%}
%
%The proof is in preparation and will be available in the extended version of this paper.  Section~\ref{sec:simulations} however, will describe simulations that support the plausibility of this result.  

\end{proof}

% --------------------------------------------------------
% Lemma: 1 - a_{ik}a_{jk}
% --------------------------------------------------------

%%\begin{myLemma}
%%\label{lem:one_minus_a}
%%	Let $i, j$ and $k$ be distinct elements in $\mathcal{U}$, and let $(\bT{1}, \bT{2}, \bT{3})$ satisfy~\eqref{eq:Ti_min}.  If $i, j \in \mathcal{T}(\bT{1}, \bT{2}, \bT{3})$, then $1 - a_{ik}a_{jk} > 0$, where $\mathcal{T}$ is given by~\eqref{eq:Tset} and $a_{ik}$ is given by~\eqref{eq:a_ik}.
%%\end{myLemma}
%%
%%\begin{proof}
%%%	We use~\eqref{eq:Qi_plus} to evaluate $Q_{i}^{+}(\cdot, \cdot)$ and write that 
%%%	\begin{equation}
%%%		a_{jk} \bT{i} + a_{ji} \bT{k} + k_j = \frac{1}{a_{ik}} (\bT{i} - a_{ij} \bT{k} - k_i)
%%%	\end{equation}
%%%	and so
%%	We eliminate $T_j$ from the two linear equations $\bT{i} = Q_i^{+}(\bT{j}, \bT{k})/(1 - \epsilon_i)$ and $\bT{j} = Q_j^{+}(\bT{i}, \bT{k})/(1 - \epsilon_j)$ to get that
%%	\begin{equation}
%%	\label{eq:positive}
%%		(1 - a_{ik}a_{jk})\bT{i} - (a_{ij} + a_{ji}a_{ik})\bT{k} - k_{i} - k_{j}a_{ik} = 0
%%%		\left( a_{jk} - \frac{1}{a_{ik}} \right) \bT{i} + a_{ji} \bT{k} + k_j + \frac{1}{a_{ik}}(\bT{k} + k_i) = 0.
%%	\end{equation}
%%	We observe that the coefficient of \bT{i} in~\eqref{eq:positive} must be positive, as all the other terms in~\eqref{eq:positive} are strictly negative.
%%%	We observe that the coefficient of \bT{i} in~\eqref{eq:positive} must be negative, as all the other terms in~\eqref{eq:positive} are strictly positive.
%%%	We observe first that $\bT{i} > 0$.  Furthermore, since every term in~\eqref{eq:positive} is positive, we conclude that the coefficient of \bT{i} in~\eqref{eq:positive} must be negative.
%%
%%\end{proof}

% --------------------------------------------------------
% Lemma: det(A3) > 0
% --------------------------------------------------------

%%\begin{myLemma}
%%\label{lem:det}
%%	Let $i, j$ and $k$ be distinct elements in $\mathcal{U}$.  If $\bar{T} = [\bT{i}, \bT{j}, \bT{k}]^T \succ 0$ satisfies 
%%	$A_3 \bar{T} = k$,  where $A_3$ is given by~\eqref{eq:A3}, and $k = [k_i, k_j, k_k]$ is given by~\eqref{eq:k_i}, then $\det(A_3) > 0$.
%%\end{myLemma}
%%\begin{proof}
%%	We first show that $\det(A_3) \neq 0$ by assuming, by way of contradiction, that $\det(A_3) = 0$.  A trivial determinant can result in a system of equations having either no solution, or infinitely many solutions.  For our specific situation however, we will show that the system of linear equations $A_3 \bar{T} = k$ is inconsistent if $\det(A_3) = 0$.
%%%	a trivial determinant results in an inconsistent system of linear equations, Lemma~\ref{lem:det} shows that only the former is possible.	
%%%	We will show that the system of equations $A_3 \bar{T} = k$ is inconsistent.
%%
%%	For $i \in \mathcal{U}$, let $\alpha_i$ be the $i^{\textrm{th}}$ row of $A_3$.  Now, a trivial determinant implies that for some $\gamma_1$ and $\gamma_2$,
%%	\begin{equation}
%%	\label{eq:alpha}
%%%		\alpha_1 = \gamma_2 \alpha_2 + \gamma_3 \alpha_3.
%%		\alpha_3 = \gamma_1 \alpha_1 + \gamma_2 \alpha_2.
%%	\end{equation}
%%	Equation~\eqref{eq:alpha} is an overspecified system of linear equations, since we have three equations in the two unknown variables $\gamma_1$ and $\gamma_2$.  This in turn imposes an additional constraint for the elements of $A_3$ if~\eqref{eq:alpha} is to be true.  We assume that these additional constraints hold, and notwithstanding this issue, we solve for $\gamma_2$ and $\gamma_3$ to get that 
%%%	$[\gamma_2, \gamma_3]^T = (A_2^{T})^{-1} [-a_{kj}, -a_{ki}]^{T}$
%%	\begin{equation}
%%	\label{eq:gamma}
%%		\begin{bmatrix}
%%			\gamma_1 \\
%%			\gamma_2 
%%		\end{bmatrix} 
%%			= 
%%%		\frac{1}{1 - a_{ik}a_{jk}}
%%%			\begin{bmatrix}
%%%			       	1 & a_{ik}           \\
%%%				a_{jk} & 1  
%%%			\end{bmatrix}
%%%			\begin{bmatrix}
%%%				1 \\
%%%				1
%%%			\end{bmatrix} .
%%		(A_2^{T})^{-1} 
%%		\begin{bmatrix}
%%			-a_{kj} \\
%%			-a_{ki}
%%		\end{bmatrix},
%%	\end{equation}
%%	where $(A_2^{T})^{-1}$ is given by~\eqref{eq:T_equals_two}.  Since each element of $(A_2^{T})^{-1}$ is positive, we conclude from \eqref{eq:a_ik} and \eqref{eq:gamma} that $\gamma_1$ and $\gamma_2$ are both negative (c.f.\ Lemma~\ref{lem:MinToLP} for the case when $|\mathcal{T}| = 2$).
%%%	With similar arguments to the ones found for the case when $|\mathcal{T}| = 2$, we can similarly conclude that $\gamma_1$ and $\gamma_2$ are negative.  Now, let $\bar{T} = [\bT{i}, \bT{j}, \bT{k}]^T$.  
%%
%%	We now proceed towards a contradiction by writing
%%	\begin{eqnarray}
%%	\label{eq:}
%%		k_k &=& \alpha_3 \bar{T} \\
%%			&=& (\gamma_1 \alpha_1 + \gamma_2 \alpha_2) \bar{T} \\
%%			\label{eq:contradiction}
%%			&=& \gamma_1 k_i + \gamma_2 k_j.
%%	\end{eqnarray}
%%	The contradiction results from the fact that by~\eqref{eq:k_i} and~\eqref{eq:gamma}, the left-hand-side of~\eqref{eq:contradiction} is positive, while the right-hand-side of~\eqref{eq:contradiction} is negative.
%%	
%%	Having shown that $\det(A_3)$ is non-zero, we now show that it is, furthermore, positive.  We do this by first inverting the system of linear equations $A_3 \bar{T} = k$ to get that 
%%	\begin{equation}
%%	\label{eq:barT}
%%		\bar{T} = A_3^{-1}k,
%%	\end{equation}
%%	where the transpose of $A_3^{-1}$ is given by \eqref{eq:A3_inverse}.  From~\eqref{eq:k_i}, \eqref{eq:A3_inverse}, \eqref{eq:barT}, and Lemma~\ref{lem:one_minus_a} we see that each element of $\bar{T}$ has the same sign, and so if $\bar{T} \succ 0$, we must have that $\det(A_3) > 0$.
%%	
%%\end{proof}

%\input{chain/my_app}
% pseudocode
%\input{3/app/app}
%\fi

%%\input{introduction}
%%\input{prior_work}
%%\input{problem_formulation}
%%\input{intuition}
%%\input{main_result}
%%\input{construction_y}
%%\input{properties_y}
%%\input{reconstructions}
%%\input{main_derivation}
%%\input{conclusion}
%\input{numerical_plot}

%\input{intro/intro}
%\input{SysModel}

%\input{SysModel/SysModel2}
%\input{background}
%\input{intuition/intuition-L}

%\input{sbc/sbcE-y-L}

%\input{implication/implication}
%\input{numerical_comparisons/numerical_comparisons}

%\vspace{-1.5em}

%\input{rateless_code_comparison/rateless_code_comparison}
%\input{optimality}
%\input{individual_latencies/individual_latencies}
%\color{red}
%\input{parallel/parallel}
%\color{black}
%\input{parallel/parallel}
%\input{conclusion/conclusion}

%\vspace{-2em}

%\input{app/app}

% An example of a floating figure using the graphicx package.
% Note that \label must occur AFTER (or within) \caption.
% For figures, \caption should occur after the \includegraphics.
% Note that IEEEtran v1.7 and later has special internal code that
% is designed to preserve the operation of \label within \caption
% even when the captionsoff option is in effect. However, because
% of issues like this, it may be the safest practice to put all your
% \label just after \caption rather than within \caption{}.
%
% Reminder: the "draftcls" or "draftclsnofoot", not "draft", class
% option should be used if it is desired that the figures are to be
% displayed while in draft mode.
%
%\begin{figure}[!t]
%\centering
%\includegraphics[width=2.5in]{myfigure}
% where an .eps filename suffix will be assumed under latex,
% and a .pdf suffix will be assumed for pdflatex; or what has been declared
% via \DeclareGraphicsExtensions.
%\caption{Simulation Results}
%\label{fig_sim}
%\end{figure}

% Note that IEEE typically puts floats only at the top, even when this
% results in a large percentage of a column being occupied by floats.

% An example of a double column floating figure using two subfigures.
% (The subfig.sty package must be loaded for this to work.)
% The subfigure \label commands are set within each subfloat command, the
% \label for the overall figure must come after \caption.
% \hfil must be used as a separator to get equal spacing.
% The subfigure.sty package works much the same way, except \subfigure is
% used instead of \subfloat.
%
%\begin{figure*}[!t]
%\centerline{\subfloat[Case I]\includegraphics[width=2.5in]{subfigcase1}%
%\label{fig_first_case}}
%\hfil
%\subfloat[Case II]{\includegraphics[width=2.5in]{subfigcase2}%
%\label{fig_second_case}}}
%\caption{Simulation results}
%\label{fig_sim}
%\end{figure*}
%
% Note that often IEEE papers with subfigures do not employ subfigure
% captions (using the optional argument to \subfloat), but instead will
% reference/describe all of them (a), (b), etc., within the main caption.

% An example of a floating table. Note that, for IEEE style tables, the
% \caption command should come BEFORE the table. Table text will default to
% \footnotesize as IEEE normally uses this smaller font for tables.
% The \label must come after \caption as always.
%
%\begin{table}[!t]
%% increase table row spacing, adjust to taste
%\renewcommand{\arraystretch}{1.3}
% if using array.sty, it might be a good idea to tweak the value of
% \extrarowheight as needed to properly center the text within the cells
%\caption{An Example of a Table}
%\label{table_example}
%\centering
%% Some packages, such as MDW tools, offer better commands for making tables
%% than the plain LaTeX2e tabular whiIEEEtranch is used here.
%\begin{tabular}{|c||c|}
%\hline
%One & Two\\
%\hline
%Three & Four\\
%\hline
%\end{tabular}
%\end{table}

% Note that IEEE does not put floats in the very first column - or typically
% anywhere on the first page for that matter. Also, in-text middle ("here")
% positioning is not used. Most IEEE journals use top floats exclusively.
% Note that, LaTeX2e, unlike IEEE journals, places footnotes above bottom
% floats. This can be corrected via the \fnbelowfloat command of the
% stfloats package.

%\section{Conclusion and Future Work}
%One interesting direction is to identify analytical solutions or approximate solutions to the degree distribution optimization problem, and compare with the theoretical results.

%The authors would like to thank...

% Can use something like this to put references on a page
% by themselves when using endfloat and the captionsoff option.
%\ifCLASSOPTIONcaptionsoff
%  \newpage
%\fi

% trigger a \newpage just before the given reference
% number - used to balance the columns on the last page
% adjust value as needed - may need to be readjusted if
% the document is modified later
%\IEEEtriggeratref{8}
% The "triggered" command can be changed if desired:
%\IEEEtriggercmd{\enlargethispage{-5in}}

% references section

% can use a bibliography generated by BibTeX as a .bbl file
% BibTeX documentation can be easily obtained at:
% http://www.ctan.org/tex-archive/biblio/bibtex/contrib/doc/
% The IEEEtran BibTeX style support page is at:
% http://www.michaelshell.org/tex/ieeetran/bibtex/
%\bibliographystyle{IEEEtranTCOM}
% argument is your BibTeX string definitions and bibliography database(s)
%\bibliography{IEEEabrv,../bib/paper}
%
% <OR> manually copy in the resultant .bbl file
% set second argument of \begin to the number of references
% (used to reserve space for the reference number labels box)
%
%\begin{thebibliography}{1}
%
%\bibitem{IEEEhowto:kopka}
%H.~Kopka and P.~W. Daly, \emph{A Guide to \LaTeX}, 3rd~ed.\hskip 1em plus
%  0.5em minus 0.4em\relax Harlow, England: Addison-Wesley, 1999.
%
%\end{thebibliography}

%\vspace{-2em}
\vspace{-1em}

\bibliographystyle{IEEEtran}
%\bibliographystyle{IEEEtranTCOM}
%\bibliographystyle{hieeetr}
%\bibliography{IEEEabrv,itw2013emina}
\bibliography{IEEEabrv,itw2013emina2}

% biography section
%
% If you have an EPS/PDF photo (graphicx package needed) extra braces are
% needed around the contents of the optional argument to biography to prevent
% the LaTeX parser from getting confused when it sees the complicated
% \includegraphics command within an optional argument. (You could create
% your own custom macro containing the \includegraphics command to make things
% simpler here.)
%\begin{biography}[{\includegraphics[width=1in,height=1.25in,clip,keepaspectratio]{mshell}}]{Michael Shell}
% or if you just want to reserve a space for a photo:

%\begin{IEEEbiographynophoto} {Louis Tan}
%	received his BASc degree in Engineering Science with a major in Computer Engineering from the University of Toronto, Toronto, ON, Canada in 2010, and his MASc degree in Electrical Engineering, also from the University of Toronto, in 2012.  He is currently working towards his PhD degree in the Electrical and Computer Engineering Department at the University of Toronto.  His research interests include joint source-channel coding, multimedia systems, machine learning, and network information theory.
%\end{IEEEbiographynophoto}

%\begin{IEEEbiographynophoto} {Yao Li}
%received her B.Eng.\ degree in Information Engineering and Electronics from Tsinghua University, Beijing, China, and her Ph.D degree in Electrical and Computer Engineering from Rutgers, the State University of New Jersey, in 2012. She was a postdoctoral scholar at the Laboratory for Robust Information Systems (LORIS) at University of California, Los Angeles, between 2012 and 2014. She is now with Akamai Technologies, Inc. Her research interests include coding solutions for content distribution, robust inference in sensor networks, and computing on noisy hardware.
%\end{IEEEbiographynophoto}

%\begin{IEEEbiographynophoto} {Ashish Khisti}
%received his BASc Degree (2002) in Engineering Sciences (Electrical Option) from University of Toronto, and his S.M and Ph.D. Degrees in Electrical Engineering from the Massachusetts Institute of Technology. Between 2009-2015, he was an assistant professor in the Electrical and Computer Engineering department at the University of Toronto. He is presently an associate professor, and holds a Canada Research Chair in the same department. He is a recipient of an Ontario Early Researcher Award, the Hewlett-Packard Innovation Research Award and the Harold H. Hazen teaching assistant award from MIT. He presently serves as an associate editor for IEEE Transactions on Information Theory and is also a guest editor for the Proceedings of the IEEE (Special Issue on Secure Communications via Physical-Layer and Information-Theoretic Techniques). 
%\end{IEEEbiographynophoto}
%
%\begin{IEEEbiographynophoto} {Emina Soljanin}
%is a Professor at Rutgers University. Before joining Rutgers in January 2016, she worked as the Distinguished Member of Technical Staff at Bell Labs, doing research in information, coding, and, more recently, queueing theory. Her interests and expertise are wide. Over the past quarter of the century, she has participated in numerous research and business projects, as diverse as power system optimization, magnetic recording, color space quantization, hybrid ARQ, network coding, data and network security, and quantum information theory and networking. Prof. Soljanin served as the Associate Editor for Coding Techniques, for the IEEE Transactions on Information Theory, on the Information Theory Society Board of Governors, and in various roles on other journal editorial boards and conference program committees. She is a co-organizer of the DIMACS 2001-2005 Special Focus on Computational Information Theory and Coding and 2011-2015 Special Focus on Cybersecurity. Prof. Soljanin is an IEEE Fellow, and serves a distinguished lecturer for the IEEE Information Theory Society in 2015 and 2016. 
%\end{IEEEbiographynophoto}

% You can push biographies down or up by placing
% a \vfill before or after them. The appropriate
% use of \vfill depends on what kind of text is
% on the last page and whether or not the columns
% are being equalized.

\vfill

% Can be used to pull up biographies so that the bottom of the last one
% is flush with the other column.
%\enlargethispage{-5in}

% that's all folks